\newtheorem{theorem}{Theorem} [section]
\newtheorem{proposition}[theorem]{Proposition}	
\newtheorem{corollary}[theorem]{Corollary}	
\newtheorem{lemma}[theorem]{Lemma}
\newtheorem{remark}[theorem]{Remark}
\theoremstyle{definition}
\renewcommand{\ln}{\log} 
\def\XXint#1#2#3{{\setbox0=\hbox{$#1{#2#3}{\int}$}
		\vcenter{\hbox{$#2#3$}}\kern-.5\wd0}}
\tikzset{->-/.style={decoration={
			markings,
			mark=at position #1 with {\arrow{latex}}},postaction={decorate}}}
\tikzset{-<-/.style={decoration={
			markings,
			mark=at position #1 with {\arrowreversed{latex}}},postaction={decorate}}}
\tikzset{cross/.style={cross out, draw, 
		minimum size=2*(#1-\pgflinewidth), 
		inner sep=0pt, outer sep=0pt}}
\numberwithin{equation}{section}
\newcommand{\e}{\mathrm{e}}
\renewcommand{\i}{i}
\renewcommand{\d}{\mathrm{d}}
\newcommand{\ol}{\overline}
\renewcommand{\(}{\left(}
\renewcommand{\)}{\right)}
\begin{document}
	\title{Asymptotics for averages over classical orthogonal ensembles}
	\author{Tom Claeys, Gabriel Glesner, Alexander Minakov, and Meng Yang \footnote{Institut de Recherche en Math\'ematique et Physique,  UCLouvain, Chemin du Cyclotron 2, B-1348
Louvain-La-Neuve, Belgium. e-mail: tom.claeys@uclouvain.be, gabriel.glesner@uclouvain.be, oleksandr.minakov@uclouvain.be, meng.yang@uclouvain.be}}
	\maketitle
	
	\begin{abstract}
We study averages of multiplicative eigenvalue statistics in ensembles of orthogonal Haar distributed matrices, which can alternatively be written as Toeplitz+Hankel determinants. We obtain new asymptotics for symbols with Fisher-Hartwig singularities in cases where some of the singularities merge together, and for symbols with a gap or an emerging gap. We obtain these asymptotics by relying on known analogous results in the unitary group and on asymptotics for associated orthogonal polynomials on the unit circle. As consequences of our results, we derive asymptotics for gap probabilities in the Circular Orthogonal and Symplectic Ensembles, and an upper bound for the global eigenvalue rigidity in the orthogonal ensembles.
	\end{abstract}


	\section{Introduction}
	
	Consider the classical orthogonal group $\mathbb O_N$ of $N\times N$ orthogonal matrices equipped with the Haar measure, and its components $\mathbb O_N^\pm$ of $N\times N$ orthogonal matrices with determinant equal to $\pm 1$.
	If $N$ is even, the eigenvalues of a matrix $M\in \mathbb O_N^+=\mathbb O_{2n}^+$ come in complex conjugate pairs $e^{\pm i\theta_1},\ldots, e^{\pm i\theta_n}$ with $\theta_1,\ldots, \theta_n\in[0,\pi]$, while a matrix $M\in \mathbb O_N^-=\mathbb O_{2n+2}^-$ has complex conjugate pairs of eigenvalues $e^{\pm i\theta_1},\ldots, e^{\pm i\theta_{n}}$ with $\theta_1,\ldots, \theta_{n}\in[0,\pi]$, and fixed eigenvalues $-1$ and $+1$.
	If $N=2n+1$ is odd, a matrix $M\in \mathbb O_N^\pm$ has complex conjugate pairs of eigenvalues $e^{\pm i\theta_1},\ldots, e^{\pm i\theta_n}$ with $\theta_1,\ldots, \theta_n\in[0,\pi]$ complemented by the fixed eigenvalue $\pm 1$.
	Due to Weyl's integration formula, the joint probability distributions of the free eigenangles $\theta_1,\ldots, \theta_n\in[0,\pi]$ are given by
	(see e.g.\ \cite[p71--72]{Forrester}, \cite[p76]{Meckes} and \cite{Johansson97})
	\begin{equation}\label{eq:jpdforthogonal}
	\begin{aligned}
	\mathbb O^+_{2n}:\quad &
	\frac{2}{n!(2\pi)^n}\prod_{1\leq j<k\leq n}(2\cos \theta_k-2\cos \theta_j)^2 \, \prod_{j=1}^nd\theta_j,\\
	\mathbb O^-_{2n+2}:\quad &
	\frac{1}{n! (2\pi)^n}\prod_{j=1}^n(2\sin \theta_j)^2\prod_{1\leq j<k\leq n}(2\cos \theta_k-2\cos \theta_j)^2\, \prod_{j=1}^nd\theta_j,\\
	\mathbb O^\pm_{2n+1}:\quad &
	\frac 1{n! (2\pi)^n}\prod_{j=1}^n2(1\mp\cos \theta_j)\prod_{1\leq j<k\leq n}(2\cos \theta_k-2\cos \theta_j)^2\, \prod_{j=1}^nd\theta_j.
	\end{aligned}
	\end{equation}
	We also mention that the joint probability distribution of the free eigenangles of a symplectic matrix $U\in\mathbb{S}{\rm p}_{2n}$ distributed with respect to Haar measure is the same as $\mathbb O^-_{2n+2}$. Our results thus cover all the cases of the classical groups $\mathbb{SO}_n$ and $\mathbb S{\rm p}_{2n}$ equipped with Haar measure. In all three above cases, there are $n$ {\em free variables} $\theta_1,\ldots, \theta_n$.
	{We are interested in large $n$ asymptotics for multiplicative averages of the form 
		\begin{equation}\label{def:average}
		\mathbb E_n^{(j, \pm)}[f]:=\mathbb E_{\mathbb O_{2n+j}^{\pm}}\prod_{k=1}^n f(e^{i\theta_k})f(e^{-i\theta_k}),
		\end{equation}
		where $f$ is an integrable function on the unit circle which we will call the \emph{symbol}, and {$\mathbb E_{\mathbb O_{N}^{\pm}}$ denotes the average with respect to \eqref{eq:jpdforthogonal}}.
		In the notation at the left hand side, $j$ is the number of fixed eigenvalues, $n$ the number of free eigenangles, and $\pm 1$ the determinant of the random matrix $M$.
		The $4$ admissible values for the pair $(j,\pm)$ 
		are $(0,+), (2,-), (1,+)$, and $(1,-)$.
	} 
	
	It is well understood that such averages can be written as determinants of matrices of Toeplitz+Han\-kel type \cite{Baik-Rains, Forrester}. These determinants can in turn be expressed either in terms of Hankel determinants with Jacobi-type weights depending on $f$, or in terms
	of Toeplitz determinants and orthogonal polynomials on the unit circle with symbols depending on $f$,~see~\cite{Deift-Its-Krasovsky}.
	
	\paragraph{Identities relating orthogonal and unitary ensembles.}
	Our approach will rely on a variant of such existing identities, which is particularly convenient for asymptotic analysis, and which allows us to write averages over orthogonal ensembles of a symbol $f$ in terms of averages over the unitary group $\mathbb U_N$ of Haar distributed $N\times N$ unitary matrices for 
the symbol
\begin{equation}\label{def:g}g(e^{it}):=f(e^{it})f(e^{-it})
		\end{equation}
and related orthogonal polynomials on the unit circle evaluated at $\pm 1$.
	Before stating these identities, let us recall that the eigenvalues $e^{i\varphi_1},\ldots, e^{i\varphi_N}$, with $\varphi_1,\ldots, \varphi_N\in [0,2\pi)$, of a Haar-distributed matrix $U$ from the unitary group $\mathbb U_N$ of $N\times N$ unitary matrices, often referred to as the Circular Unitary Ensemble (CUE), have the joint probability distribution 
	\begin{equation}\label{eq:jpdfCUE}
	\mathbb U_N:\quad\frac{1}{(2\pi)^N N!}\prod_{1\leq k<j\leq N}|e^{i\varphi_j}-e^{i\varphi_k}|^2 \prod_{j=1}^N d\varphi_j.
	\end{equation}
	Moreover, averages
	\begin{equation}\label{def:averageU}
	\mathbb E^{\mathbb U}_N[g]:=\mathbb E_{\mathbb U_N} \det g(U)=\mathbb E_{\mathbb U_N}\prod_{j=1}^N g(e^{i\varphi_j}),
	\end{equation}
where $g$ is a non-negative integrable function on the unit circle and where $\mathbb E_{\mathbb U_N}$ denotes the average over the unitary group $\mathbb U_N$, can be written via Heine's identity as Toeplitz determinants: we have
	\begin{equation}\label{eq:unitaryToeplitz}
	\mathbb E_N^{\mathbb U}[g]=\det\left(g_{j-k}\right)_{j,k=0}^{N-1},
	\end{equation}
	where $g_m$ is the $m$-th Fourier coefficient of $g$,
	\begin{equation}\label{def:Fourier}
	g_{m}=\frac{1}{2\pi}\int_{0}^{2\pi}g(e^{it})e^{-imt}dt.
	\end{equation}
We also need the monic orthogonal polynomials $\Phi_N$ of degree $N$ on the unit circle with respect to an integrable weight function $g(e^{it})\geq 0$, characterized by the conditions
	\begin{equation}\label{def:OP}\int_{0}^{2\pi}\Phi_N(e^{it})e^{-ikt}g(e^{it})dt=0\quad \mbox{for any integer $0\leq k < N$.}\end{equation}
{These polynomials can, by their determinantal representations and Heine's identity, also be written as the averages
	\begin{equation}\label{eq:OPUCaverage}\Phi_N(z)=\frac{\det\begin{pmatrix}g_{j-k}\\ z^k
		\end{pmatrix}_{j,k=0}^{N-1,N}}{\det(g_{j-k})_{j,k=0}^{N-1}}=\frac{\mathbb E^\mathbb U_N[(z-.)g(.)]}{\mathbb E^\mathbb U_N[g]}.
	\end{equation}
	}
	In the next result, we express averages over the orthogonal ensembles in terms of averages over the unitary group and orthogonal polynomials, and this will be the starting point of our asymptotic analysis later on.
	
	\begin{proposition}\label{theorem:identityaverages}
		Let $f$ be a function on the unit circle which is such that $g$ defined by \eqref{def:g} is non-negative and integrable on $[0,2\pi]$. Let $\Phi_k$ be the degree $k$ monic orthogonal polynomial on the unit circle with respect to the weight $g(e^{it })$.
		Then for all positive integers $n$,
		\begin{equation}\label{eq:identityUO}
		\begin{aligned}
		\mathbb E^{(0,+)}_{n}[f]&=\left[\frac{\mathbb E^{\mathbb U}_{2n}[g]}{-\Phi_{2n-1}(1)\Phi_{2n-1}(-1)}\right]^{1/2},
		\\
		\mathbb E^{(2,-)}_{n}[f]&=\left[\Phi_{2n}(1)\Phi_{2n}(-1)\mathbb E^{\mathbb U}_{2n}[g]\right]^{1/2},
		\\
		\mathbb E^{(1,\pm)}_{n}[f]&=\left[\frac{\Phi_{2n}(\pm 1)}{\Phi_{2n}(\mp 1)}\mathbb E^{\mathbb U}_{2n}[g]\right]^{1/2}.
		\end{aligned}
		\end{equation}
	\end{proposition}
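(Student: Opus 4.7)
The plan is to reduce each of the four identities to an explicit determinantal identity, exploiting Andreief's formula together with the persymmetric structure that the symmetry of $g$ imposes on the Toeplitz matrix.

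\textbf{Step 1 (Toeplitz+Hankel representations).} For each of the four ensembles, I would use the Weyl denominator identity for the corresponding root system ($D_n$, $C_n$, $B_n$-sin or $B_n$-cos) to rewrite $\prod_{j<k}(2\cos\theta_k-2\cos\theta_j)$ times the appropriate boundary factor ($1$, $\prod 2\sin\theta_j$, $\prod 2\sin(\theta_j/2)$, $\prod 2\cos(\theta_j/2)$) as a single determinant of trigonometric functions. Squaring, applying Andreief's identity, and using the symmetric-symbol identity $\int_0^\pi \cos(m\theta)g(e^{i\theta})\,d\theta = \pi g_m$ expresses each average as a size-$n$ Toeplitz+Hankel determinant:
\begin{align*}
\mathbb E^{(0,+)}_n[f] &= \tfrac{1}{2}\det(g_{j-k}+g_{j+k})_{j,k=0}^{n-1}, & \mathbb E^{(2,-)}_n[f] &= \det(g_{j-k}-g_{j+k+2})_{j,k=0}^{n-1},\\
\mathbb E^{(1,+)}_n[f] &= \det(g_{j-k}-g_{j+k+1})_{j,k=0}^{n-1}, & \mathbb E^{(1,-)}_n[f] &= \det(g_{j-k}+g_{j+k+1})_{j,k=0}^{n-1}.
\end{align*}

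\textbf{Step 2 (Persymmetric factorization).} Because $g_m = g_{-m}$, the matrix $T = (g_{j-k})_{j,k=0}^{2n-1}$ is both symmetric and persymmetric. The orthogonal change of basis to the $\pm 1$-eigenspaces of the coordinate-reversal matrix block-diagonalises $T$ into two $n\times n$ blocks, which after an overall row/column reversal coincide exactly with the matrices in the $(1,-)$ and $(1,+)$ lines of Step 1. This yields
\begin{equation*}
\mathbb E^{\mathbb U}_{2n}[g] = \mathbb E^{(1,+)}_n[f]\cdot \mathbb E^{(1,-)}_n[f].
\end{equation*}
The same folding applied to the odd-size matrix $(g_{j-k})_{j,k=0}^{2n-2}$ (positive block size $n$ including the middle vector $e_{n-1}$, negative block size $n-1$) gives the companion factorization $D_{2n-1}(g) = \mathbb E^{(0,+)}_n[f]\cdot \mathbb E^{(2,-)}_{n-1}[f]$, because a short calculation identifies the positive block with the matrix $(g_{j-k}+g_{j+k})_{j,k=0}^{n-1}$ (up to the overall factor of $2$ absorbed by the $\tfrac12$ above) and the negative block with the matrix for $\mathbb E^{(2,-)}_{n-1}[f]$.

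\textbf{Step 3 (OPUC values at $\pm 1$).} I now apply the same folding, extended trivially on the extra coordinate, to the bordered matrices from \eqref{eq:OPUCaverage}. For the $(2n+1)\times(2n+1)$ bordered matrix whose determinant is $\mathbb E^{\mathbb U}_{2n}[g]\,\Phi_{2n}(z)$, the extra row $(1,z,z^2,\ldots,z^{2n-1})$ is purely symmetric at $z = +1$ and purely antisymmetric at $z=-1$, so in the folded basis it couples only to one of the two diagonal blocks. A Schur-complement calculation then produces two $(n+1)\times(n+1)$ determinants which I claim both equal $\mathbb E^{(2,-)}_n[f]$; this gives
\begin{equation*}
\Phi_{2n}(+1)\,\mathbb E^{(1,-)}_n[f] = \Phi_{2n}(-1)\,\mathbb E^{(1,+)}_n[f] = \mathbb E^{(2,-)}_n[f],
\end{equation*}
which combined with Step 2 yields the $(1,\pm)$ and $(2,-)$ identities upon taking positive square roots. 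For the $(0,+)$ case, I apply the analogous construction to the $(2n)\times(2n)$ matrix $M'(z)$ with determinant $D_{2n-1}(g)\,\Phi_{2n-1}(z)$, which reduces the target identity to the polynomial claim
\begin{equation*}
\det M'(+1)\cdot\det M'(-1) = -\mathbb E^{\mathbb U}_{2n}[g].
\end{equation*}
This last identity is verified by evaluating the degree-$(2n-1)$ polynomial $\det M'(z)$ at $z=\pm 1$ and matching coefficients with the Step 2 factorization of $D_{2n}(g)$. The main technical obstacle is to show cleanly that the two $(n+1)\times(n+1)$ Schur complements in Step 3 coincide as polynomials in the $g_m$'s; I would approach this either by direct row/column operations or by expanding both sides via the Szegő recurrence at $z=\pm 1$. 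Positivity of the square roots follows from $\Phi_{2n}(\pm 1) > 0$ and $\Phi_{2n-1}(+1) > 0 > \Phi_{2n-1}(-1)$ for admissible $g$, which is an immediate consequence of the Szegő recurrence since the Verblunsky coefficients lie in $(-1,1)$.
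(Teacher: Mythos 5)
Your Steps 1 and 2 match the paper exactly (the Toeplitz+Hankel representations \eqref{eq:toeplitzplushankel} and the factorizations \eqref{eq:TtoTplusH}), and your Step 3 is conceptually a repackaging of what the paper proves in its Proposition giving \eqref{identitiesprop} via explicit row and column operations. However, Step 3 as written contains two concrete errors.

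First, the parity bookkeeping is off. The bordered matrix for $\Phi_{2n}$ has size $(2n+1)\times(2n+1)$, so its extra row is $(1,z,\ldots,z^{2n})$, of \emph{odd} length $2n+1$. Under the column reversal $k\mapsto 2n-k$ the $z=-1$ row satisfies $(-1)^{2n-k}=(-1)^k$, so it is \emph{symmetric}, just like the $z=+1$ row; it is not antisymmetric. The symmetric/antisymmetric dichotomy you describe only occurs for an even-length border row, i.e.\ for the $\Phi_{2n-1}$ bordered matrix. Because of this, the Schur complement argument you sketch, in which the border couples to one block at $z=+1$ and to the other at $z=-1$, does not set up the way you say. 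The eventual conclusion $\Phi_{2n}(+1)\,\mathbb E^{(1,-)}_n[f]=\Phi_{2n}(-1)\,\mathbb E^{(1,+)}_n[f]=\mathbb E^{(2,-)}_n[f]$ is correct and is what the paper proves, but the mechanism you propose for extracting it needs to be reworked.

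Second, and more seriously, the reduction in the $(0,+)$ case is wrong. You define $M'(z)$ so that $\det M'(z)=D_{2n-1}(g)\,\Phi_{2n-1}(z)$, and claim the target is equivalent to $\det M'(+1)\cdot\det M'(-1)=-\mathbb E^{\mathbb U}_{2n}[g]$. But
\[
\det M'(+1)\det M'(-1)=D_{2n-1}(g)^2\,\Phi_{2n-1}(1)\Phi_{2n-1}(-1),
\]
and from your own Step 2 factorization $D_{2n-1}(g)=\mathbb E^{(0,+)}_n[f]\,\mathbb E^{(2,-)}_{n-1}[f]$ together with the companion identity $\Phi_{2n-1}(\pm1)D_{2n-1}(g)=\pm\mathbb E^{(1,\pm)}_n[f]\,\mathbb E^{(2,-)}_{n-1}[f]$ (the shifted second line of \eqref{identitiesprop}) one finds
\[
\det M'(+1)\det M'(-1)=-\mathbb E^{\mathbb U}_{2n}[g]\cdot\bigl(\mathbb E^{(2,-)}_{n-1}[f]\bigr)^2,
\]
not $-\mathbb E^{\mathbb U}_{2n}[g]$. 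The correct identity $(\mathbb E^{(0,+)}_n[f])^2\,\Phi_{2n-1}(1)\Phi_{2n-1}(-1)=-\mathbb E^{\mathbb U}_{2n}[g]$ only emerges after the extra factor $(\mathbb E^{(2,-)}_{n-1}[f])^2$ is cancelled against $D_{2n-1}(g)^2$, and your sketch omits this cancellation. This is precisely where the paper's route, which works directly with the Toeplitz+Hankel products in \eqref{identitiesprop} and then divides out the common factor in \eqref{eq:prop}, is cleaner than the folding-plus-Schur-complement heuristic. Your closing positivity claims $\Phi_{2n}(\pm1)>0$ and $\Phi_{2n-1}(1)>0>\Phi_{2n-1}(-1)$ are correct; the paper justifies them via the location of the OPUC zeros inside the disk and symmetric under conjugation, which is equivalent to the Verblunsky-coefficient argument you cite.
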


	\paragraph{Asymptotics for averages in orthogonal ensembles.}
	There is a vast literature on asymptotics for Toeplitz determinants, and large $N$ asymptotics for \eqref{def:averageU}--\eqref{eq:unitaryToeplitz} are well understood for large classes of complex-valued symbols $g$.
	The most classical result in this context is Szeg\H{o}'s strong limit theorem, which states that \cite{Szego, Ibragimov, Johansson}
	with $g(e^{it })=e^{V(e^{it })}$ and $V$ sufficiently smooth on the unit circle, as $N\to\infty$,
	\begin{equation}
	\det\left(g_{j-k}\right)_{j,k=0}^{N-1}=e^{NV_0}e^{\sum_{k=1}^\infty kV_k V_{-k}}(1+o(1))\quad \mbox{with } V_k=\frac{1}{2\pi}\int_{0}^{2\pi}V(e^{it })e^{-ikt }dt .
	\end{equation}
	More precisely, this holds for any $V$ such that $\sum_{k=1}^\infty k|V_k|^2<\infty$.
	More general results allow for symbols which vanish on an arc of the unit circle \cite{Widom} or for the presence of Fisher-Hartwig singularities, which are combinations of root-type singularities with jump discontinuities. Such symbols have a long history \cite{Basor1, Basor2, BasorWidom, BottcherSilbermann, Ehrhardt, Fisher-Hartwig, Lenard, Widom}, and asymptotics for the associated Toeplitz determinants are now completely understood in the large $N$ limit, as long as the symbol does not depend on $N$ \cite{Deift-Its-Krasovsky}. In cases where the symbol depends on $N$, various interesting transitions in the large $N$ asymptotics can take place, such as the emergence of a Fisher-Hartwig singularity \cite{Barouch-Tracy-McCoy-Wu, Claeys-Its-Krasovsky}, the emergence of an arc of vanishing \cite{Charlier-Claeys1, Charlier-Claeys2}, or the merging of Fisher-Hartwig singularities \cite{Claeys-Krasovsky, Fahs}. 
	
	Large $N$ asymptotics for the analogues in the orthogonal ensembles $\mathbb O^\pm_N$, namely \eqref{def:average}, are also known for fixed symbols (i.e.\ independent of $N$) with Fisher-Hartwig singularities, see \cite[Theorem 1.25]{Deift-Its-Krasovsky} for the most complete result in this respect and \cite{Baik-Rains, Basor-Ehrhardt,Basor-Ehrhardt2,Basor-Tracy} for earlier developments. However, the picture for averages in $\mathbb O^\pm_N$ is incomplete because, as far as we know, asymptotics are not known for symbols vanishing on an arc, and no results are available about transition asymptotics in situations where either several singularities approach each other in the large $N$ limit (except for the results from \cite{ForkelKeating} obtained simultaneously with ours, see Remark \ref{remark:FK} below), or parameters tune in such a way that a gap in the support emerges as $N\to\infty$. The objective in this paper is to complete this task.
In order to avoid technical and notational complications, we restrict ourselves to non-negative real-valued symbols $g$, although some of the results could be generalized to complex-valued symbols.

\paragraph{Outline for the rest of the paper.}
After stating our main results in Section \ref{section:results}, we will prove Proposition \ref{theorem:identityaverages} in Section \ref{section:prop}. 
In Section \ref{section:FH}, we will analyze orthogonal polynomials on the unit circle for symbols with Fisher-Hartwig singularities, which possibly merge in the large degree limit, and this will allow us to prove
Theorem \ref{theorem:asymptoticsFH} and Theorem \ref{theorem:asymptoticsFH-merging} below.
In Section \ref{section:gap}, we will analyze the case of symbols with a gap or an emerging gap, and this will lead us to the proof of Theorem \ref{theorem:asymptoticsgap}. 
In Section \ref{section:rigidity}, we will study gap probabilities and global rigidity of eigenvalues in $\mathbb O_{n}^{(j,\pm)}$ and prove Theorem \ref{theorem:globalrigidity}.

\section{Statement of results}	\label{section:results}
	\subsection{Symbols with Fisher-Hartwig singularities}
	Let $V$ be an analytic function in a neighborhood of the unit circle which is real-valued on the unit circle and such that $V(e^{it })=V(e^{-it })$, and let $0< t_1< \ldots < t_m<\pi$, with $m\in\mathbb N$. {For any $j=0,1,\ldots, m, m+1$, we have parameters
	$\alpha_j\geq 0$ and for any $j=1,\ldots, m$ we have parameters $\beta_j\in i\mathbb R$. We will consider symbols $f$ such that $g$ given by \eqref{def:g} is of the form
	\begin{multline}\label{def:FHU}
	g(e^{it })=e^{V(e^{it })}|e^{it}-1|^{2\alpha_0}|e^{it}+1|^{2\alpha_{m+1}}\\
	\times \ \prod_{j=1}^m \left(\frac{e^{it }}{e^{i(\pi+t_j)}}\right)^{\beta_j}\left(\frac{e^{-it }}{e^{i(\pi+t_j)}}\right)^{\beta_j}\left|e^{it }-e^{it_j}\right|^{2\alpha_j}\left|e^{it }-e^{-it_j}\right|^{2\alpha_j},
	\end{multline}
	where $z^{\beta}=|z|^{\beta}e^{i\beta\arg z}$ with $-\pi<\arg z\leq \pi$. This is one of the standard forms of a positive symbol with Fisher-Hartwig singularities, symmetric with respect to the real line and having singularities at the points $e^{\pm it_j}$, $j=1,\ldots, m$, and at the points $\pm 1$. These singularities are combinations of jump and root singularities whose nature depends on the parameters $\alpha_j, \beta_j$. For instance, if we set $m=1$, $\alpha_0=\alpha_1=\alpha_2=0$, $V\equiv 0$, then $g$ is piece-wise constant: $g(e^{it })=e^{-2it_1\beta_1}$ for $|t |>t_1$ and $g(e^{it })=e^{-2it_1\beta_1}e^{2i\pi\beta_1}$ for $|t |<t_1$. Note that the symmetry with respect to the real line excludes the possibility of having jump singularities (with non-zero parameters $\beta_0, \beta_{m+1}$) at $\pm 1$.
}	
	
	If $V, m, t_j, \alpha_j, \beta_j$ are independent of $N$, large $N$ asymptotics for $\mathbb E_N^{\mathbb U}[g]=\det\left(g_{j-k}\right)_{j,k=0}^{n-1}$ were obtained in \cite{Deift-Its-Krasovsky} (in more general situations where the symbol is complex and not necessarily symmetric with respect to the real line, where $V$ is not necessarily analytic, and where $\alpha_j>-1/2$ is allowed to be negative). Translating the results from \cite{Ehrhardt} (see also \cite{Deift-Its-Krasovsky} for a more general result) to our setting, we have
	\begin{equation}\label{eq:asymptoticsFHunitary}
	\mathbb E_{2n}^{\mathbb U}[g]=E^2e^{2nV_0}(2n)^{\alpha_0^2+\alpha_{m+1}^2+2\sum_{j=1}^{m}(\alpha_j^2-\beta_j^2)}(1+o(1)),
	\end{equation}
	as $n\to\infty$, with $E$ given by
	{\begin{equation}\begin{aligned}\label{def:E}
				E&=e^{\frac 12\sum_{k=1}^{+\infty}kV_k^2}    e^{2i\sum_{j=1}^{m}\alpha_j\sum_{k=1}^{m}t_k\beta_k}    e^{-2\pi i\sum_{1\leq j<k\leq m}\alpha_j\beta_k}	e^{-\pi i\sum_{j=1}^m\alpha_j\beta_j}\\
				&\quad\times	\prod_{j=1}^m	\frac{\left|G(1+\alpha_j+\beta_j)\right|^2}{G(1+2\alpha_j)}	\frac{e^{2i\beta_j\sum_{k=1}^{+\infty}V_k\sin kt_j}}{e^{\alpha_j(V(z_j)-V_0)}}	\left|2\sin t_j\right|^{-(\alpha_j^2+\beta_j^2)}\\
				&\quad\times	\prod_{1\leq j<k\leq m}\left|2\sin\frac{t_j-t_k}{2}\right|^{2(\beta_j\beta_k-\alpha_j\alpha_k)}	\left|2\sin\frac{t_j+t_k}{2}\right|^{-2(\beta_j\beta_k+\alpha_j\alpha_k)}\\
				&\quad\times 2^{-\alpha_0\alpha_m}	e^{i(\alpha_0+\alpha_{m+1})\sum_{j=1}^mt_j\beta_j}	e^{-\pi i\alpha_0\sum_{j=1}^m\beta_j}\\
				&\quad\times 	\frac{G(1+\alpha_0)}{G(1+2\alpha_0)^\frac 12}	e^{-\frac 12\alpha_0(V(1)-V_0)}	
				   \frac{G(1+\alpha_{m+1})}{G(1+2\alpha_{m+1})^\frac 12}	e^{-\frac 12\alpha_{m+1}(V(-1)-V_0)}\\
				&\quad\times \prod_{j=1}^m\left|2\sin\frac{t_j}2\right|^{-2\alpha_0\alpha_j}	\left|2\cos\frac{t_j}2\right|^{-2\alpha_{m+1}\alpha_j},
\end{aligned}\end{equation}				
}{where $G$ is Barnes's $G$ function.}	
It follows from the techniques used in \cite{Deift-Its-Krasovsky} that these asymptotics are valid uniformly for $\alpha$ in compact subsets of $(-1/2,\infty)$, $\beta$ in compact subsets of $i\mathbb R$, and as long as the distance between the singularities $e^{\pm it_j}$ remains bounded from below.

One possible choice of $f$ leading through \eqref{def:g} to {\eqref{def:FHU}} is the positive square root of $g$, namely
	{\begin{multline}\label{def:FHO}
	f(e^{it })=e^{\frac{1}{2}V(e^{it })}|e^{it}-1|^{\alpha_0}|e^{it}+1|^{\alpha_{m+1}}\\
	\times \ \prod_{j=1}^m \left(\frac{e^{it }}{e^{i(\pi+t_j)}}\right)^{\beta_j/2}\left(\frac{e^{-it }}{e^{i(\pi+t_j)}}\right)^{\beta_j/2}\left|e^{it }-e^{it_j}\right|^{\alpha_j}\left|e^{it }-e^{-it_j}\right|^{\alpha_j}.
	\end{multline}}
	The following result, {which we will prove in Section \ref{section:FH}}, describes the large $n$ asymptotics of \eqref{def:average}, in terms of \eqref{def:averageU}, in the case of a symbol with Fisher-Hartwig singularities, and holds uniformly in the position of the singularities, as long as they do not approach $\pm 1$ too fast as $n\to\infty$.
	\begin{theorem}\label{theorem:asymptoticsFH}Let $m\in\mathbb N$, $0<t_1<\ldots<t_m<\pi$, $\alpha_j\geq 0$ {for $j=0,\ldots, m+1$,} $\beta_j\in i\mathbb R$ for $j=1,\ldots, m$, and let $V$ be analytic in a neighborhood of the unit circle, real-valued on the unit circle and such that $V(e^{it })=V(e^{-it })$, with Laurent series $V(z)=\sum_{k=-\infty}^{\infty}V_k z^k$ and $V_k=V_{-k}\in\mathbb R$.
		Let $f$ be such that $g$ is of the form \eqref{def:FHU}.		{There exists $M>0$ such that as $n\to\infty$, uniformly in the region $\frac{M}{n}<t_1<\ldots<t_m<\pi-\frac{M}{n}$, we have}
	
		\begin{equation}\label{eq:asymptoticsFHorthogonal}
		\begin{aligned}
		\mathbb E^{(0,+)}_{n}[f]&=C_n \left(\mathbb E^{\mathbb U}_{2n}[g]\right)^{1/2}\left(1+\mathcal O\left(\frac{1}{n\min\{t_1, \pi-t_m\}}\right)\right),
		\\
		\mathbb E^{(2,-)}_{n}[f]&=C_n^{-1}\left(\mathbb E^{\mathbb U}_{2n}[g]\right)^{1/2}\left(1+\mathcal O\left(\frac{1}{n\min\{t_1, \pi-t_m\}}\right)\right),
		\\
		\mathbb E^{(1,\pm)}_{n}[f]&=\widetilde C_n^{\pm 1} \left(\mathbb E^{\mathbb U}_{2n}[g]\right)^{1/2}\left(1+\mathcal O\left(\frac{1}{n\min\{t_1, \pi-t_m\}}\right)\right),
		\end{aligned}
		\end{equation}
		where 
		{\begin{equation}\label{def:C}
		\begin{aligned}C_n&=\frac{2^{\alpha_0+\alpha_{m+1}}}{n^{\frac{\alpha_0+\alpha_{m+1}}{2}}\sqrt\pi }\Gamma\left(\frac 12+\alpha_0\right)^\frac 12\Gamma\left(\frac 12+\alpha_{m+1}\right)^\frac 12e^{\frac 14(V(1)+V(-1)-2V_0)}\ \prod_{j=1}^m\left[\left(2\sin t_j\right)^{\alpha_j}e^{-i\beta_jt_j}e^{\frac{i\pi}{2}\beta_j}\right],\\
		\widetilde C_n&=n^{\frac{\alpha_0-\alpha_{m+1}}{2}}\frac{\Gamma\left(\frac 12+\alpha_{m+1}\right)^\frac 12}{\Gamma\left(\frac 12+\alpha_0\right)^\frac 12}e^{\frac 14(V(-1)-V(1))}\prod_{j=1}^m\left[\left(\tan\frac{t_j}{2}\right)^{-\alpha_j}e^{-\frac{i\pi}{2}\beta_j}\right].\end{aligned}
		\end{equation}}
			\end{theorem}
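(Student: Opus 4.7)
The identities in Proposition \ref{theorem:identityaverages} reduce the proof to uniform asymptotics for $\Phi_{2n}(\pm 1)$ and $\Phi_{2n-1}(\pm 1)$, the values at $\pm 1$ of the monic orthogonal polynomials on the unit circle for the weight $g$. Since the large-$n$ asymptotics \eqref{eq:asymptoticsFHunitary} of $\mathbb{E}^{\mathbb U}_{2n}[g]$ are available, the whole statement follows once we establish asymptotic formulas of the form $\Phi_N(1)=a_N^{+}\,N^{-\alpha_0}(1+o(1))$ and $\Phi_N(-1)=a_N^{-}\,N^{-\alpha_{m+1}}(1+o(1))$, whose explicit leading constants, combined with the square roots of \eqref{eq:asymptoticsFHunitary}, reproduce $C_n$ and $\widetilde C_n$.

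To obtain such expansions, I would carry out a Riemann-Hilbert steepest-descent analysis of the Fokas-Its-Kitaev problem for $\Phi_N$, following the scheme of Deift-Its-Krasovsky. The standard chain of transformations (normalisation at infinity, opening of lenses along the unit circle, construction of the global parametrix via the Szeg\H{o} function of $g$) reduces matters to building local parametrices in small disks around each singularity. Around each interior singularity $e^{\pm it_j}$ the parametrix is built from confluent hypergeometric functions with parameters $(\alpha_j,\beta_j)$; around $\pm 1$, where $g$ has pure root singularities since $\beta_0=\beta_{m+1}=0$ in \eqref{def:FHU}, the parametrix reduces to the Bessel parametrix with parameter $2\alpha_0$, resp.\ $2\alpha_{m+1}$. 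Reading off the value of the full approximation at $z=\pm 1$ then yields the desired asymptotics with explicit leading constants built from the Szeg\H{o} function of $g$, Barnes $G$-values, and trigonometric factors in $t_1,\ldots,t_m$; the same analysis, carried out for degree $2n-1$, gives the analogous expansions for $\Phi_{2n-1}(\pm 1)$.

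The principal obstacle is uniformity. The hypothesis $M/n<t_1<\ldots<t_m<\pi-M/n$ allows $t_1$ (or $\pi-t_m$) to be of order $1/n$, so the disk supporting the Bessel parametrix around $+1$ (resp.\ $-1$) must be allowed to shrink at the same rate and nearly overlaps with the confluent-hypergeometric parametrices at $e^{\pm it_1}$ (resp.\ $e^{\pm it_m}$), so that the jumps on the common boundary are not uniformly close to the identity. To handle this I would adapt the merging-singularity construction of \cite{Claeys-Krasovsky, Fahs}: near $\pm 1$ I would replace the separate parametrices by a single merged parametrix on a disk of fixed size, built from a confluent-hypergeometric-based (or Painlev\'e-type) model problem that simultaneously accommodates the root singularity at $\pm 1$ and the nearby FH pair. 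The resulting error, governed by the jump on the boundary of the merged disk, is naturally of order $(n\min\{t_1,\pi-t_m\})^{-1}$.

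Once uniform asymptotics for $\Phi_{2n}(\pm 1)$ and $\Phi_{2n-1}(\pm 1)$ are in hand, substituting them into \eqref{eq:identityUO} together with \eqref{eq:asymptoticsFHunitary} gives \eqref{eq:asymptoticsFHorthogonal}. The explicit form of $C_n,\widetilde C_n$ is then a bookkeeping exercise involving Legendre's duplication identity for Barnes's $G$-function, converting the factors $G(1+2\alpha_0),G(1+2\alpha_{m+1})$ appearing in $E$ into $\Gamma(\tfrac12+\alpha_0),\Gamma(\tfrac12+\alpha_{m+1})$, and a rearrangement of the Szeg\H{o} function values at $\pm 1$ into the factors $e^{\frac14(V(\pm 1)-V_0)}$ together with the trigonometric products $\prod_j(2\sin t_j)^{\alpha_j}$ and $\prod_j(\tan(t_j/2))^{-\alpha_j}$ appearing in \eqref{def:C}.
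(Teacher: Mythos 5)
Your overall plan — reduce via Proposition \ref{theorem:identityaverages} to asymptotics of $\Phi_N(\pm 1)$, obtain these from a Riemann–Hilbert analysis, and combine with \eqref{eq:asymptoticsFHunitary} — matches the structure of the paper's proof. The divergence (and the problem) is in how you propose to handle the values at $\pm 1$ when $t_1$ or $\pi-t_m$ is of order $1/n$.

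You diagnose the obstacle as ``the jumps on the common boundary are not uniformly close to the identity'' and therefore propose building a merged, fixed-size parametrix near $\pm 1$ that simultaneously accommodates the root singularity there and the nearby Fisher--Hartwig pair, in the spirit of the Painlev\'e-type model problems of \cite{Claeys-Krasovsky, Fahs}. This is not what the paper does, and it is not needed under the hypothesis $\frac{M}{n}<t_1<\ldots<t_m<\pi-\frac{M}{n}$ with $M$ large. The paper (Propositions \ref{prop:asPhiFH} and \ref{prop:asPhiFHb}) keeps the parametrices separated: the FH disks have radius $\mu(M_1,M_2,N)/3$ around cluster centers and do not contain $\pm 1$; when $\alpha_0=\alpha_{m+1}=0$ there is no parametrix at $\pm 1$ at all and only the global parametrix is used there; when $\alpha_0,\alpha_{m+1}>0$ a Bessel/confluent hypergeometric parametrix is built on a \emph{shrinking} disk of radius $u_\pm/3$ with $u_+=t_1$, $u_-=\pi-t_m$. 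On these shrinking boundaries the jumps \emph{are} uniformly close to the identity, with error of order $1/(Nu_\pm)=O(1/M)$, which is exactly what the hypothesis $M$ large makes small, and the refined estimate $R(\pm 1)-I=\mathcal O(1/(Nu_\pm))$ is extracted from the integral equation \eqref{eq:intequation}. That is where the error term $\mathcal O\bigl((n\min\{t_1,\pi-t_m\})^{-1}\bigr)$ in \eqref{eq:asymptoticsFHorthogonal} comes from.

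Your proposal also contains an internal inconsistency: a merged parametrix on a disk of \emph{fixed} size would typically give a small-norm error of order $1/N$, not $1/(n\min\{t_1,\pi-t_m\})$. The error rate you quote is the one produced naturally by shrinking disks, not by your construction. Moreover, the merged model Riemann--Hilbert problem you invoke — a root singularity at $\pm1$ together with a conjugate pair of $(\alpha_j,\beta_j)$ singularities merging onto it — is not available off the shelf; constructing and controlling it would be substantial new work (and is the sort of thing the paper explicitly avoids in this theorem by requiring $M$ large, and addresses only partially in Theorem \ref{theorem:asymptoticsFH-merging} under $\alpha_0=\alpha_{m+1}=0$). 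Finally, a minor point: the explicit constants for $\Phi_N(\pm1)$ involve $\Gamma$-factors from the Bessel parametrix, not Barnes $G$-values; the Barnes $G$-values enter only through $\mathbb E^{\mathbb U}_{2n}[g]$ and are converted into the $\Gamma(\tfrac12+\alpha_0),\Gamma(\tfrac12+\alpha_{m+1})$ in $C_n,\widetilde C_n$ by the duplication formula, as you say.
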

	
In the case where $m$, the positions of the singularities $t_j$, and the values of the parameters $\alpha_j,\beta_j$ are independent of $n$, we can write the above results in a more explicit form by substituting \eqref{eq:asymptoticsFHunitary}--\eqref{def:E}.  This yields
		\begin{equation}\label{eq:asymptoticsFHorthogonalexplicit1}
		\begin{aligned}
		\mathbb E^{(0,+)}_{n}[f]&=C_n  E e^{nV_0}(2n)^{(\alpha_0^2+\alpha_{m+1}^2) / 2+\sum_{j=1}^m(\alpha_j^2-\beta_j^2)}(1+o(1)),
		\\
		\mathbb E^{(2,-)}_{n}[f]&= C^{-1}_n  E e^{nV_0}(2n)^{(\alpha_0^2+\alpha_{m+1}^2) / 2+\sum_{j=1}^m(\alpha_j^2-\beta_j^2)}(1+o(1)),
		\\
		\mathbb E^{(1,\pm)}_{n}[f]&=\widetilde C^{\pm 1}_n E e^{nV_0}(2n)^{(\alpha_0^2+\alpha_{m+1}^2) / 2+\sum_{j=1}^m(\alpha_j^2-\beta_j^2)}(1+o(1)).
		\end{aligned}
		\end{equation}
Here, we recover \cite[Theorem 1.25]{Deift-Its-Krasovsky} in the case of a positive symbol $f$ (to see this, one needs to use the doubling formula for Barnes' $G$-function, see \cite[formula (2.39)]{Deift-Its-Krasovsky}).

Let us now consider {in more detail} the situation where the positions of the Fisher-Hartwig singularities are allowed to vary  with $n$. This includes in particular situations where
singularities merge in the large $n$ limit {or converge to $\pm 1$}.
{For notational convenience, we now set $\alpha_0=\alpha_{m+1}=0$ in \eqref{def:g}, but one should note that we can do this without loss of generality because we will now allow $t_1=0$ and $t_m=\pi$.}
Although we expect \eqref{eq:asymptoticsFHorthogonalexplicit1} to hold whenever the distance between singularities decays slower than $1/n$, the main obstacle to prove this, is that strong asymptotics (including the value of the multiplicative constant) for $\mathbb E_{2n}^{\mathbb U}[g]$ have not been established, except for $m=1$ \cite{Claeys-Krasovsky}, when they are related to the Painlev\'e V equation. Weak asymptotics, without explicit value for the multiplicative constant, have been obtained in general \cite{Fahs}. The result of \cite{Fahs} translated to our setting is
	\begin{equation}\label{eq:asymptoticsFHmergingunitary}
	\mathbb E_{2n}^{\mathbb U}[g]=F^2e^{2nV_0}(2n)^{\sum_{j=1}^m(2\alpha_j^2-2\beta_j^2)}\ \prod_{j=1}^m\left(\sin t_j +\frac{1}{n}\right)^{-2\alpha_j^2-2\beta_j^2} e^{\mathcal O(1)}
	\end{equation}
	as $n\to\infty$, uniformly for $0< t_1< \ldots< t_m<\pi$, with	
	\begin{equation}\label{def:F}
		F=	
		\prod_{1\leq j<k\leq m}\left(\frac{1}{\sin\frac{t_k-t_j}{2}+\frac{1}{n}}\right)^{2(\alpha_j\alpha_k-\beta_j\beta_k)}
		\left(\frac{1}{\sin \frac{t_j+t_k}{2}+\frac{1}{n}}\right)^{2(\alpha_j\alpha_k+\beta_j\beta_k)}.
		\end{equation}		
We can substitute this in  \eqref{eq:asymptoticsFHorthogonal} to obtain weak large $n$ asymptotics for $\mathbb E_{n}^{(j,\pm)}[f]$, uniformly for $\frac{M}{n}<t_1<\ldots<t_m<\pi-\frac{M}{n}$, but we can moreover extend this to cases where $t_1\leq \frac{M}{n}$ or $t_m\geq \pi-\frac{M}{n}$. This is the content of our next result, {which we will also prove in Section \ref{section:FH}}.

	\begin{theorem}\label{theorem:asymptoticsFH-merging}
		Let $m\in\mathbb N$, $0\leq  t_1< \ldots< t_m\leq  \pi$, $\alpha_0=\alpha_{m+1}=0$, $\alpha_j\geq 0$, $\beta_j\in i\mathbb R$ for $j=1,\ldots, m$, and let $V$ be analytic in a neighborhood of the unit circle, real-valued on the unit circle and such that $V(e^{it })=V(e^{-it })$, with Laurent series $V(z)=\sum_{k=-\infty}^{\infty}V_k z^k$ and $V_k=V_{-k}\in\mathbb R$.
		Let $f$ be such that $g$ is of the form \eqref{def:FHU}.
Then we have uniformly over the entire region $0< t_1< \ldots< t_m< \pi$, as $n\to\infty$,
		\begin{equation}\label{eq:asymptoticsFHorthogonalexplicit}
		\begin{aligned}
		\mathbb E^{(0,+)}_{n}[f]&=F e^{nV_0}\prod_{j=1}^mn^{\alpha_j^2-\beta_j^2}\left(\sin t_j+\frac{1}{n}\right)^{\alpha_j-\alpha_j^2-\beta_j^2} \ \times e^{\mathcal O(1)},
		\\
		\mathbb E^{(2,-)}_{n}[f]&=F e^{nV_0}\prod_{j=1}^mn^{\alpha_j^2-\beta_j^2} \left(\sin t_j+\frac{1}{n}\right)^{-\alpha_j-\alpha_j^2-\beta_j^2}  \ \times e^{\mathcal O(1)},
		\\
		\mathbb E^{(1,\pm)}_{n}[f]&=F e^{nV_0}\prod_{j=1}^m	
		n^{\alpha_j^2-\beta_j^2}\left({\sin \frac{t_j}{2}+\frac{1}{n}}\right)^{\mp\alpha_j-\alpha_j^2-\beta_j^2} \left(\cos\frac{t_j}{2}+\frac{1}{n}\right)^{\pm \alpha_j-	\alpha_j^2-\beta_j^2} \ \times e^{\mathcal O(1)},
		\end{aligned}
		\end{equation}
		with
		$F$ given by \eqref{def:F}.
		Here $e^{\mathcal O(1)}$ denotes a function which is uniformly bounded and bounded away from $0$ as $n\to\infty$.
		These results are also uniform for $\alpha_j$ and $\beta_j$ in compact subsets of $[0,+\infty)$ and $i\mathbb R$ respectively.
	\end{theorem}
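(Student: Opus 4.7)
My plan is to reduce everything to Proposition \ref{theorem:identityaverages} combined with the uniform Toeplitz asymptotics \eqref{eq:asymptoticsFHmergingunitary}--\eqref{def:F} of Fahs, so that the only non-trivial remaining task is to obtain uniform weak asymptotics for the values $\Phi_{2n-1}(\pm 1)$ and $\Phi_{2n}(\pm 1)$ of the monic orthogonal polynomials on the unit circle. Once these are known with an error of multiplicative type $e^{\mathcal O(1)}$, the three formulas in \eqref{eq:asymptoticsFHorthogonalexplicit} follow by substitution into \eqref{eq:identityUO} and taking square roots: the factor $F e^{nV_0}\prod_j n^{\alpha_j^2-\beta_j^2}\,(\sin t_j+\tfrac{1}{n})^{-\alpha_j^2-\beta_j^2}$ is contributed by $(\mathbb E^{\mathbb U}_{2n}[g])^{1/2}$, and the remaining power $(\sin t_j+\tfrac{1}{n})^{\pm\alpha_j}$ (respectively $(\sin\tfrac{t_j}{2}+\tfrac{1}{n})^{\mp\alpha_j}(\cos\tfrac{t_j}{2}+\tfrac{1}{n})^{\pm\alpha_j}$ in the $(1,\pm)$ case) has to come out of the orthogonal polynomial factor.

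For the bulk regime $M/n < t_1 < \ldots < t_m < \pi-M/n$, Theorem \ref{theorem:asymptoticsFH} together with \eqref{eq:asymptoticsFHmergingunitary} already yields the claimed bounds: one simply observes that the strong asymptotic constants $C_n$ and $\widetilde C_n$ in \eqref{def:C} remain of size $\prod_j(\sin t_j)^{\pm\alpha_j}$, uniformly in this region and in $\alpha_j,\beta_j$ in compacta, so their contribution is absorbed into $e^{\mathcal O(1)}\prod_j(\sin t_j+\tfrac{1}{n})^{\pm\alpha_j}$, which matches \eqref{eq:asymptoticsFHorthogonalexplicit}. Thus the content of Theorem \ref{theorem:asymptoticsFH-merging} in this bulk regime is essentially a reformulation of Theorem \ref{theorem:asymptoticsFH} in terms of weak asymptotics, and no new analysis is needed.

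The substantial part of the proof is the remaining boundary regime $t_1 \le M/n$ or $t_m \ge \pi - M/n$, where one (or two) Fisher-Hartwig singularity pairs $e^{\pm i t_1}$ or $e^{\pm i t_m}$ collide with $\pm 1$. Here I would carry out the Deift--Zhou steepest descent analysis of the Riemann--Hilbert problem for $\Phi_N$ with weight $g$ from Section \ref{section:FH}, but replacing the usual local Fisher-Hartwig parametrices at $\pm 1$ and at $e^{\pm it_j}$ (built from confluent hypergeometric functions) by a merged parametrix that simultaneously accommodates the three colliding singularities $e^{\pm i t_1}$ and $+1$ (or $e^{\pm i t_m}$ and $-1$), uniformly in the small distance $t_1$ (resp.\ $\pi-t_m$). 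The symmetry $g(e^{it})=g(e^{-it})$ is crucial: it reduces the construction to a model RH problem on the real line with three coalescing singularities, which in the limit $nt_1\to 0$ degenerates to a single singularity of enhanced strength $2\alpha_1$ at $+1$ (analogously for $-1$). Evaluating the resulting parametrix at $z=\pm 1$ and tracking only the modulus yields $\Phi_N(\pm 1)$ up to an $e^{\mathcal O(1)}$ factor, with the explicit powers of $\sin(t_j/2)+1/n$ and $\cos(t_j/2)+1/n$ predicted by \eqref{eq:asymptoticsFHorthogonalexplicit}.

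The main obstacle is precisely this uniform construction of the merged local parametrix at $\pm 1$, together with the matching to the global parametrix on a shrinking disc; the algebra at the level of Fourier coefficients and Barnes' $G$-function identities is standard once the parametrix is in place. The uniform $e^{\mathcal O(1)}$ bound for the jump on the boundary of the local disc, in particular when $nt_1\to 0$ or $n(\pi-t_m)\to 0$, is what must be verified most carefully, because the size of the matching error depends on the relative distance between the three coalescing singularities, and this is the step where the restriction $\alpha_0=\alpha_{m+1}=0$ (assumed in the theorem) is used to avoid an additional Fisher-Hartwig singularity sitting at the collision point.
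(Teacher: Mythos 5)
Your plan coincides with the paper's proof in all essential respects: Theorem~\ref{theorem:asymptoticsFH-merging} is indeed deduced by inserting Fahs's uniform Toeplitz estimate \eqref{eq:asymptoticsFHmergingunitary} and weak asymptotics for $\Phi_{N}(\pm 1)$ into the factorization \eqref{eq:identityUO}, the bulk regime $M/n<t_1<\ldots<t_m<\pi-M/n$ is disposed of by the strong asymptotics of Proposition~\ref{prop:asPhiFH}, and the genuine work lies in a uniform local parametrix analysis in a disc around $\pm 1$ that absorbs the colliding pair(s) $e^{\pm it_1}$ (resp.\ $e^{\pm it_m}$). Your degeneration picture --- in the limit $nt_1\to 0$ the colliding pair produces an effective Fisher--Hartwig exponent $2\alpha_1$ at $+1$ --- is exactly what the paper's Proposition~\ref{prop:asPhiFH2} establishes through the estimate $\log\Phi_{22}(0;w_1,\ldots,w_{\mu_1})=\sum_\nu 2\alpha_\nu\log|w_\nu|+\mathcal O(1)$, and you correctly identify that $\alpha_0=\alpha_{m+1}=0$ is what keeps the collision point itself singularity-free.

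One point worth flagging: you propose to build the merged parametrix from scratch, whereas the paper short-circuits this by reusing Fahs's cluster-based local parametrix $\Phi(\cdot;w_1,\ldots,w_{\mu_1})$ (and its degenerate companion $Q$) verbatim, simply choosing the cluster scale $M_1$ so that $\pm t_1$ fall into the cluster whose centre is $\hat t_1=0$, evaluating the parametrix at $z=1$ via $\zeta(z)=N\log z$, and reading off the modulus. This is much less work than a fresh parametrix construction, and it gives the uniform $e^{\mathcal O(1)}$ bound (including in the most singular limit $nt_1\to 0$) without any new matching estimates. Also, a small inaccuracy: under $\zeta=N\log z$ the colliding singularities land on the \emph{imaginary} axis at $\pm iNt_1$, not on the real line; and the points entering the merged disc are the two singularities $e^{\pm it_1}$ together with the \emph{regular} evaluation point $+1$, so there are two colliding singularities, not three. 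Neither of these imprecisions affects the validity of your plan, but they would need to be corrected in a full write-up.
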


\begin{remark}
The factors ${\sin t_j+\frac{1}{n}}$  have to be interpreted as follows: whenever $t_j$ does not converge too rapidly to $0$ or $\pi$ as $n\to\infty$, the sine is the dominant term; if $t_j\to 0$ or $t_j\to \pi$ as $n\to\infty$ with speed of convergence faster than $\frac{1}{n}$, the term $\frac{1}{n}$ will be dominant. Similarly for the factors ${\sin \frac{t_j}{2}+\frac{1}{n}}$ as $t_j\to 0$
and ${\cos \frac{t_j}{2}+\frac{1}{n}}$ as $t_j\to \pi$.
\end{remark}

	\begin{remark}\label{remark:FK}
		As mentioned before, one of the problems in determining the explicit value of the  $e^{\mathcal O(1)}$ factor lies in the asymptotics for $\mathbb E^{\mathbb U}_{n}[g]$, which are known only up to a multiplicative constant as $n\to\infty$. In the case $m=1$ where we have only two singularities, this multiplicative constant can be evaluated explicitly in terms of quantities related to a solution of the fifth Painlev\'e equation \cite{Claeys-Krasovsky}. 
Simultaneously with this work, Forkel and Keating \cite{ForkelKeating} evaluated the $e^{\mathcal O(1)}$ factor in \eqref{eq:asymptoticsFHorthogonalexplicit} explicitly in terms of the same Painlev\'e V solution when $m=2$, as long as the singularities $e^{\pm it_1}, e^{\pm it_2}$ do not approach $\pm 1$.
When there are more than two singularities approaching each other, one might expect a multiplicative constant connected to a generalization of the fifth Painlev\'e equation, but the problem of evaluating the constant remains open.
	\end{remark}
	
	\subsection{Symbols with a gap or an emerging gap}
	Next, we take $s\geq 0$ and $t_0\in(0,\pi)$. We consider symbols $f$ such that $g$, defined by \eqref{def:g}, is of the form
	\begin{equation}\label{def:EG}
	g(e^{it})=e^{V(e^{it})}\ \times \ \begin{cases}1&\mbox{for $0\leq |t|\leq t_0$},\\
	s&\mbox{for $t_0<|t|\leq\pi$},\end{cases}
	\end{equation}
	and suppose that $V$ is, as before, real on the unit circle, and analytic in a neighborhood of the unit circle. Note that in view of \eqref{def:g}, we have $V(e^{it})=V(e^{-it})$.
	For $s>0$ fixed, this is (up to a multiplicative constant) a special case of a symbol with two Fisher-Hartwig singularities ($m=1$, $\alpha_1=0$). However, the limit $s\to 0$ corresponds to $\beta_1\to -i\infty$, and the results stated before do not remain valid in this limit.
	To state our results, we need the Fourier coefficients 
	$\widetilde V_k$ of the function 
	\begin{equation}\label{def:W}
	\widetilde V(e^{it}):=V(e^{2i\arcsin(\sin\frac{t_0}{2}\sin\frac{t}{2})}).
	\end{equation}
	In the cases where either $s=0$, or $s$ depends on $N$ and $s\to 0$ sufficiently fast as $N\to\infty$, such that
	$s\leq \left(\tan\frac{t_0}{4}\right)^{2N}$, asymptotics for $\mathbb E_{N}^{\mathbb U}[g]$ were obtained in \cite[Theorem 1.1]{Charlier-Claeys2}: 
	\begin{equation}\nonumber
	\mathbb{E}_{N}^{\mathbb{U}}[g]=
	N^{-1/4}\(\sin\frac{t_0}{2}\)^{N^2}
	e^{N\widetilde V_0+
		\sum\limits_{k=1}^{\infty}k\widetilde V_k\widetilde V_{-k}}
	\(\cos\frac{t_0}{2}\)^{-1/4}
	e^{\frac{1}{12}\ln2+3\zeta'(-1)}(1+o(1)),
	\end{equation}
	as $N\to\infty$.
	Setting $N=2n$, we have
	\begin{equation}\label{eq:asympToeplitzgap}
	\mathbb{E}_{2n}^{\mathbb{U}}[g]=
	(2n)^{-1/4}\(\sin\frac{t_0}{2}\)^{4n^2}
	e^{2n\widetilde V_0+
		\sum\limits_{k=1}^{\infty}k\widetilde V_k\widetilde V_{-k}}
	\(\cos\frac{t_0}{2}\)^{-1/4}
	e^{\frac{1}{12}\ln2+3\zeta'(-1)}(1+o(1)),
	\end{equation}
	as $n\to\infty$ with either $s=0$ or $s\to 0$ sufficiently fast such that $s\leq \left(\tan\frac{t_0}{4}\right)^{4n}$. This result is moreover valid uniformly as $t_0\to \pi$, as long as $n (\pi-t_0)\to\infty$.
	
We also need the function 
	\begin{equation}\label{def:delta}
	\delta(z)
	=\exp\left\{
	\frac{h(z)}{2\pi\i}
	\int\limits_{\gamma}
	\frac{V(\zeta)\d\zeta}{(\zeta-z)h(\zeta)}
	\right\},
	\qquad
	\mbox{ where } h(\zeta)=\((\zeta-e^{it_0})(\zeta-e^{-it_0})\)^{1/2},
	\end{equation}
	where $\gamma$ denotes the counterclockwise oriented circular arc going from $e^{-it_0}$ to $e^{it_0}$ and passing through $1$, and where
	$h$		is determined by the conditions that it has a branch cut along the complementary circular arc going from $e^{it_0}$ to $e^{-it_0}$ and passing through $-1$, and that it is asymptotic to $\zeta$ for large $\zeta.$
	We will need in particular the values
	\[\delta_-(-1):=\lim_{z\to (-1)_-}\delta(z)
	=
	\exp
	\(\frac{-\cos\frac{t_0}{2}}{\pi\i}
	\int\limits_{\gamma}
	\frac{V(\zeta)\mathrm{d}\zeta}
	{(\zeta+1)h(\zeta)}
	\),
	\qquad
	\delta(\infty)
	=\exp\left\{
	\frac{-1}{2\pi\i}
	\int\limits_{\gamma}
	\frac{V(\zeta)\d\zeta}{h(\zeta)}
	\right\},
	\]	
	which are both positive.	
{We will prove the following in Section \ref{section:gap}.}

	\begin{theorem}\label{theorem:asymptoticsgap}
		Let $t_0\in(0,\pi)$, let $V$ be real-valued on the unit circle, analytic in a neighborhood of the unit circle and such that $V(e^{it})=V(e^{-it})$. Let $\widetilde V_k$ be the Fourier coefficients of $\widetilde V$ defined in \eqref{def:W}.
		Let $f$ be such that $g$ is of the form \eqref{def:EG}.
		Then, as $n\to\infty,$ uniformly with respect to $0\leq s\leq \left(\tan\frac{t_0}{4}\right)^{4n}$, we have
		\begin{equation}\label{eq:asymptoticsgaporthogonal}
		\begin{aligned}
		\mathbb E^{(0,+)}_{n}[f]&=C_{2n-1}^{-1} \left(\mathbb E^{\mathbb U}_{2n}[g]\right)^{1/2}(1+o(1)),
		\\
		\mathbb E^{(2,-)}_{n}[f]&=C_{2n} 
		\ \left(\mathbb E^{\mathbb U}_{2n}[g]\right)^{1/2}(1+o(1)),
		\\
		\mathbb E^{(1,\pm)}_{n}[f]&=\widetilde C_{2n}^{\pm 1}   \left(\mathbb E^{\mathbb U}_{2n}[g]\right)^{1/2}(1+o(1)),
		\end{aligned}
		\end{equation}
		where 
		\begin{equation}\label{def:Cgap}
		\begin{split}
		&C_{2n-1}=
		2^{2n-\frac{3}{4}}
		\(\sin\frac{t_0}{4}\)^{n}\(\cos\frac{t_0}{4}\)^{3n-1}
		e^{-\frac{1}{4}V(1)}
		\frac{\delta_-(-1)^{1/2}}{\delta(\infty)},
		\\
		&C_{2n}=
		2^{2n+\frac{1}{4}}
		\(\sin\frac{t_0}{4}\)^{n}\(\cos\frac{t_0}{4}\)^{3n+1}
		e^{-\frac{1}{4}V(1)}
		\frac{\delta_-(-1)^{1/2}}{\delta(\infty)},
		\\
		&\widetilde C_{2n}=		2^{\frac{1}{4}}
		\(\frac{\sin\frac{t_0}{4}}{\cos\frac{t_0}{4}}\)^{n+\frac{1}{2}}
		\frac{e^{-\frac{1}{4}V(1)}}{\delta_-(-1)^{1/2}}.
		\end{split}
		\end{equation}
These asymptotics are also valid as $t_0\to \pi$ in such a way that $n (\pi-t_0)\to\infty$. The $o(1)$ terms can be written as $\mathcal{O}((n(\pi-t_0))^{-1} + (n(\pi-t_0))^{-1/2}\, s (\tan\frac{t_0}{4})^{-4n}).$

\end{theorem}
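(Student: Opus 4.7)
The starting point is Proposition \ref{theorem:identityaverages}, which reduces each of the four averages $\mathbb E^{(j,\pm)}_n[f]$ to the known factor $\bigl(\mathbb E^{\mathbb U}_{2n}[g]\bigr)^{1/2}$ times a simple combination of the four numbers $\Phi_{2n-1}(\pm 1)$ and $\Phi_{2n}(\pm 1)$. Since large-$n$ asymptotics for $\mathbb E^{\mathbb U}_{2n}[g]$ are supplied by \eqref{eq:asympToeplitzgap}, the theorem reduces to finding large-degree asymptotics of the monic OPUC $\Phi_N$ of degrees $N=2n-1,\,2n$ with respect to the weight $g$ from \eqref{def:EG}, evaluated at $z=\pm 1$, uniformly in the range $0\leq s\leq (\tan\tfrac{t_0}{4})^{4n}$ and $n(\pi-t_0)\to\infty$.

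To produce these, the plan is to carry out the Deift--Zhou steepest descent analysis of the Fokas--Its--Kitaev Riemann--Hilbert problem $Y$ whose $(1,1)$ entry equals $\Phi_N$; this is essentially the analysis already used in \cite{Charlier-Claeys2} to prove \eqref{eq:asympToeplitzgap}, but here one must extract pointwise information about $Y_{11}$ at $\pm 1$. The support of the equilibrium measure is the arc $\gamma$ from $e^{-it_0}$ to $e^{it_0}$ through $1$, and the transformations are standard: a $g$-function normalization, a lens opening around $\gamma$, a global parametrix $P^{(\infty)}$ on $\mathbb C\setminus\gamma$ built from the square root $h(z)$ in \eqref{def:delta} and the Szeg\H o-type function $\delta(z)$, and local Bessel parametrices near the hard edges $e^{\pm it_0}$. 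The contribution of the part $|t|>t_0$ of the weight is exponentially small thanks to $s\leq(\tan\tfrac{t_0}{4})^{4n}$, which produces the $s$-dependent term in the stated error.

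Unravelling the transformations and evaluating $Y_{11}$ at $\pm 1$ then yields $\Phi_N(\pm 1)$ in closed asymptotic form. At $z=-1$, which lies in the gap, only the global parametrix contributes to leading order, giving an exponentially growing factor proportional to $(\cos\tfrac{t_0}{4})^{N}$ together with the Szeg\H o constants $\delta_-(-1)$ and $\delta(\infty)$. At $z=1$, which lies on $\gamma$ away from the hard edges, the global parametrix combines with the lens-jump factor to produce an oscillatory leading term of order $(\sin\tfrac{t_0}{2})^{N/2}$; the parity of $N$ shifts this by a half-power of $\sin(t_0/2)$, which accounts for the difference between the formulas for $C_{2n-1}$ and $C_{2n}$. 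Forming the products $\Phi_{2n-1}(1)\Phi_{2n-1}(-1)$, $\Phi_{2n}(1)\Phi_{2n}(-1)$, and the ratio $\Phi_{2n}(\pm 1)/\Phi_{2n}(\mp 1)$, substituting into Proposition \ref{theorem:identityaverages}, and using the identity $\sin\tfrac{t_0}{4}\cos\tfrac{t_0}{4}=\tfrac{1}{2}\sin\tfrac{t_0}{2}$ to redistribute the powers then reproduces the explicit constants \eqref{def:Cgap}.

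The main obstacle is to make the analysis uniform in both $s$ (allowing $s=0$) and in $t_0$ as $t_0\to\pi$: in the latter regime the hard edges $e^{\pm it_0}$ coalesce with $-1$, so the Bessel disk near $e^{-it_0}$ eventually meets the point at which we want to evaluate $\Phi_N$; controlling this degeneration is what forces the error $\mathcal O((n(\pi-t_0))^{-1})$ and determines the precise scaling. A second delicate point is bookkeeping of all multiplicative constants, coming from lens orientations, branch choices for $h$, and the boundary value of $\delta$ at $-1$ taken from the interior of the unit disk, which is what is needed to obtain the specific prefactors $\delta_-(-1)^{1/2}$ and $\delta(\infty)$ in \eqref{def:Cgap}.
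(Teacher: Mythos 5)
Your proposal matches the paper's strategy: you correctly identify that Proposition \ref{theorem:identityaverages} reduces the theorem to obtaining asymptotics for $\Phi_{2n-1}(\pm 1)$ and $\Phi_{2n}(\pm 1)$, which the paper does in Proposition \ref{thrm_1b} via a Deift--Zhou steepest descent analysis with a Szeg\H{o} function $\delta$, a $\phi$-function, lens opening around $\gamma$, a global parametrix, and Bessel parametrices near $e^{\pm it_0}$. You also correctly identify the two sources of error ($t_0\to\pi$ degeneration and the $s$-contribution from $\gamma^c$) and the Szeg\H{o}-type constants $\delta_-(-1)^{1/2}$ and $\delta(\infty)$.

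Two inaccuracies worth flagging, though they do not change the overall viability of the plan. First, $\Phi_N(1)$ has leading order $(\sin\tfrac{t_0}{2})^{N}$, not $(\sin\tfrac{t_0}{2})^{N/2}$; the parity of $N$ does not shift the exponent but rather switches the bounded prefactor between $\cos\tfrac{t_0}{4}$ and $\sin\tfrac{t_0}{4}$, and it is after combining the two products with $\sin\tfrac{t_0}{4}\cos\tfrac{t_0}{4}=\tfrac12\sin\tfrac{t_0}{2}$ that the shift between $C_{2n-1}$ and $C_{2n}$ appears. Second, to achieve uniformity in $s\in[0,(\tan\tfrac{t_0}{4})^{4n}]$ the paper does \emph{not} simply ignore $-1$: it builds a local parametrix there, $P_l=P^\infty G_l$, with $G_l$ a small triangular correction whose off-diagonal entry is a Cauchy integral over $\gamma^c$, and the size of this integral is what produces the second error term $\mathcal O\bigl((n(\pi-t_0))^{-1/2}s(\tan\tfrac{t_0}{4})^{-4n}\bigr)$. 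Stating that ``only the global parametrix contributes at $-1$'' would miss this and fail to give a sharp, uniform-in-$s$ error bound.
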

Using the known asymptotics for $\mathbb E_{2n}^{\rm U}[g]$ given by \eqref{eq:asympToeplitzgap}, we can write the above results in a more explicit form:
\begin{align*}
		\mathbb{E}_{n}^{(0,+)}[f]
		&=
		\frac{2^{\frac16}e^{\frac14V(1)}\ \delta(\infty)\ e^{\frac32\zeta'(-1)}\ 
			e^{\frac12\sum_{k=1}^{\infty}k\widetilde V_k\widetilde V_{-k}}}
		{ \sqrt{\delta_-(-1)}\ (\sin\frac{t_0}{2})^{\frac12}\ \left(\cos\frac{t_0}{2}\right)^{1/8}}
		\cdot
		\frac{n^{-\frac{1}{8}}\(\sin\frac{t_0}{2}\)^{2n^2-n+\frac{1}{2}}\ e^{n\widetilde V_0}}
		{\(1+\cos\frac{t_0}{2}\)^{n-\frac12}}(1+o(1)),
\end{align*}
\begin{multline*}
		\mathbb{E}_{n}^{(2,-)}[f]
		=
		\frac{2^{\frac16}e^{\frac32\zeta'(-1)}\ \cos\frac{t_0}{4}\ 
			\sqrt{\delta_-(-1)}\ e^{\frac12\sum_{k=1}^{\infty}k\widetilde V_k\widetilde V_{-k}}}
		{e^{\frac14V(1)}\ \delta(\infty)
			\ \left(\cos\frac{t_0}{2}\right)^{1/8}}
		\\
		\cdot n^{-\frac{1}{8}}
		\(\sin\frac{t_0}{2}\)^{2n^2+n}\ 
		\(1+\cos\frac{t_0}{2}\)^n
		e^{n\widetilde V_0}(1+o(1)),
\end{multline*}
\begin{multline*}
		\mathbb{E}_{n}^{(1,\pm)}[f]
		=
		\frac{2^{\pm\frac{1}{4}}\ e^{\frac32\zeta'(-1)}
			e^{\mp\frac14V(1)}e^{\frac12\sum_{k=1}^{\infty}k\widetilde V_k\widetilde V_{-k}}\delta_-(-1)^{\mp 1/2}}
		{2^{\frac{1}{12}}\left(\cos\frac{t_0}{2}\right)^{1/8}}
		\\\cdot
		n^{-\frac{1}{8}}\(\sin\frac{t_0}{2}\)^{2n^2\pm n}\(1+\cos\frac{t_0}{2}\)^{\mp n}e^{n\widetilde V_0}(1+o(1)),
\end{multline*}
		where 
		$\zeta$ is Riemann's zeta function.

	\subsection{Gap probabilities and global rigidity}
	The above results can be used to compute asymptotics for gap probabilities and generating functions in $\mathbb O_n^{(j,\pm)}$ and also in the Circular Orthogonal Ensemble (COE) and in the Circular Symplectic Ensemble (CSE).
	These have the joint probability distributions
	\begin{equation}\label{CBE}
	{\rm C}\beta{\rm E}_N:\quad \frac 1{Z_N^{[\beta]}}\prod_{1\leq k<j\leq N}|e^{i\phi_j}-e^{i\phi_k}|^\beta \prod_{j=1}^N d\phi_j,
	\end{equation}
	where $\beta=1$ for the COE, and $\beta=4$ for the CSE. Recall from \eqref{eq:jpdfCUE} that $\beta=2$ corresponds to the CUE.
	Define the piecewise constant symbol
	\begin{equation}\label{defgts}
	g_{t_0,s}(e^{it})
	=\begin{cases}
	s, & 0\leq |t| \leq t_0,\\
	1, & t_0<|t|\leq \pi.
	\end{cases}
	\end{equation}
	The average 
	\begin{equation*}
	E^{[\beta]}_N(t_0;s):=\mathbb E_{{\rm C}\beta{\rm E}_N}\prod_{j=1}^N g_{t_0,s}(e^{i\phi_j})
	\end{equation*}
	over \eqref{CBE} is the generating function for {\em occupancy numbers} of the arc between $e^{it_0}$ and $e^{-it_0}$ passing through $1$, in the sense that
	\begin{equation*}
	E^{[\beta]}_N(t_0;s)=
	\sum_{m=0}^{N}s^m \mathbb P_{{\rm C}\beta{\rm E}_N}\left(\mbox{there are exactly }m \text{ eigenangles in }[-t_0,t_0]\right).
	\end{equation*}
Equivalently, for $s\in (0,1)$, $E^{[\beta]}_N(t_0;s)$ is the probability that the {thinned} C$\beta$E$_N$, obtained by removing each eigenvalue independently with probability $s$, has no eigenangles in $[-t_0,t_0]$.
	Similarly, if we define
	\begin{equation}\label{deffts}
	f_{t_0,s}(e^{it})
	=\begin{cases}
	1, & t_0< t <2\pi,\\
	s, & 0\leq t\leq t_0,
	\end{cases}
	\end{equation}
we have the following identity in the orthogonal ensembles,	\begin{equation}\label{generatingorth}
	E^{(j,\pm)}_{n}(t_0;s):=\mathbb E^{(j,\pm)}_n[f_{t_0,s}]=
	\sum_{m=0}^{n}s^m \mathbb P_{\mathbb O_{2n+j}^{\pm}}\left(\mbox{there are exactly }m \text{ eigenangles in }(0,t_0)\right).
	\end{equation}
Equivalently, for $s\in (0,1)$, $E_n^{(j,\pm)}(t_0,s)$	 is the probability that the thinned orthogonal ensemble $\mathbb O_n^{(j,\pm)}$, obtained by removing each free eigenangle $\theta_k$, $k=1,\ldots, n$ independently with probability $s$, has no eigenvalues in $(0,t_0)$.
	The following identities relate the COE and CSE generating functions to those of the orthogonal ensembles $\mathbb O_N^\pm$, see \cite{Bornemann-Forrester,Bornemann-Forrester-Mays}: 	\begin{equation}\label{interrelations}
	\begin{aligned}
	E^{[1]}_{2n+1}(t_0;s)&=\frac{sE^{(2,-)}_{n}(t_0;s^2)+E^{(0,+)}_{n+1}(t_0;s^2)}{1+s},\\
	E^{[1]}_{2n}(t_0;s)&=\frac{sE^{(1,+)}_{n}(t_0;s^2)+E^{(1,-)}_{n}(t_0;s^2)}{1+s},\\
	E^{[4]}_n(t_0;s)&=\frac 12\left(E^{(1,+)}_{n}(t_0;s)+E^{(1,-)}_{n}(t_0;s)\right).
	\end{aligned}
	\end{equation}
	To compute asymptotics for the right hand sides of the expressions in the case where $s=0$, we can apply Theorem \ref{theorem:asymptoticsgap} in the case $V=0$ and $s=0$.
	
{In Section \ref{section:proofcor}, we will show that this} yields asymptotics for the $\mathbb O^\pm_N$, COE and CSE gap probabilities which correspond to $s=0$.
	Asymptotics for similar averages were established in \cite{Bornemann-Forrester-Mays} in the microscopic regime where $t_0$ is of the order of $1/n$. For $t_0\to 0$ at a slower rate, these asymptotics are new to the best of our knowledge.

	\begin{corollary}\label{cor:gap}Let $t_0\in(0,\pi)$. As $n\to\infty$, with fixed $t_0$ or with $t_0\to 0$ in such a way that $nt_0\to+\infty$,
		\begin{equation}
		E^{(j,\pm)}_n(t_0;0)=2^{-\frac 1{12}}e^{\frac 32\zeta'(-1)} \left(\frac{\left(1+\sin\frac{t_0}2\right)^{\tilde n}}{2^\frac 14\left(\cos\frac{t_0}2\right)^{\tilde n}}\right)^{\epsilon^\pm_j}\frac{\left(\cos\frac{t_0}2\right)^{2\tilde n^2}}{\left(\tilde n\sin\frac{t_0}2\right)^\frac 18}(1+o(1)),
		\end{equation}
		where $\tilde n=n+\frac{j-1}{2}$ and $\epsilon^\pm_j=1$ if $+1$ is a fixed eigenvalue of $\mathbb O_{2n+j}^{\pm}$ and $\epsilon_{j}^\pm=-1$ otherwise. In other words, $\epsilon_{0}^+=-1,$ $\epsilon_{1}^+=1,$ $\epsilon_{1}^-=-1,$ $\epsilon_{2}^-=1.$\\	
		Moreover, as $N\to\infty$ and $t_0\in(0,\pi)$  is either fixed or tends to $0$ in such a way that $Nt_0\to\infty$, we have
		\begin{equation*}
		\begin{aligned}
		E^{[1]}_N(t_0;0)&=2^\frac 7{24}e^{\frac 32\zeta'(-1)}\left(\frac{\cos\frac{t_0}2}{1+\sin\frac{t_0}2}\right)^{\frac{N}2}\frac{\left(\cos\frac{t_0}2\right)^\frac{N^2}{2}}{(N\sin\frac{t_0}2)^\frac 18}(1+o(1)),\\
		E^{[4]}_{N}(t_0;0)&=2^{-\frac 43}e^{\frac 32\zeta'(-1)}\left(\frac{1+\sin\frac{t_0}2}{\cos\frac{t_0}2}\right)^N\frac{\left(\cos\frac{t_0}2\right)^{2N^2}}{\left(N\sin\frac{t_0}2\right)^\frac 18}(1+o(1)).
		\end{aligned}
		\end{equation*}
\end{corollary}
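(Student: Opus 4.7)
My plan is to apply Theorem \ref{theorem:asymptoticsgap} (in the explicit form displayed right after its statement) with $V\equiv 0$ and $s=0$, after first reducing to a geometry where the theorem applies. The symbol $f_{t_0,0}$ of \eqref{deffts} creates a gap on the arc containing $+1$, whereas \eqref{def:EG} places the gap (when $s=0$) on the arc containing $-1$. To bridge this, I will use the involution $M\mapsto -M$, which sends each free eigenangle $\theta_k$ to $\pi-\theta_k$. Inspecting the densities in \eqref{eq:jpdforthogonal} one checks that this map preserves Haar measure on $\mathbb O^+_{2n}$ and on $\mathbb O^-_{2n+2}$, while it interchanges Haar measure on $\mathbb O^+_{2n+1}$ and $\mathbb O^-_{2n+1}$ (the factors $2(1\mp\cos\theta_j)$ get swapped). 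Hence, setting $\widetilde f(e^{i\theta}):=f_{t_0,0}(e^{i(\pi-\theta)})$, one has
\begin{equation*}
E^{(0,+)}_n(t_0;0)=\mathbb E^{(0,+)}_n[\widetilde f],\quad E^{(2,-)}_n(t_0;0)=\mathbb E^{(2,-)}_n[\widetilde f],\quad E^{(1,\pm)}_n(t_0;0)=\mathbb E^{(1,\mp)}_n[\widetilde f].
\end{equation*}
The associated $\widetilde g(e^{it})=\widetilde f(e^{it})\widetilde f(e^{-it})$ is exactly the symbol \eqref{def:EG} with $V\equiv 0$, $s=0$, and $t_0$ replaced by $\pi-t_0$, and the uniformity claim of Theorem \ref{theorem:asymptoticsgap} translates into uniformity as $t_0\to 0$ with $nt_0\to\infty$.

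Substituting into the explicit asymptotics below Theorem \ref{theorem:asymptoticsgap}, I observe that $V\equiv 0$ yields $\widetilde V\equiv 0$ (so all $\widetilde V_k$ vanish) and $\delta(z)\equiv 1$ (so $\delta_-(-1)=\delta(\infty)=1$). The substitution $t_0\mapsto \pi-t_0$ is handled via $\sin\tfrac{\pi-t_0}{2}=\cos\tfrac{t_0}{2}$, $\cos\tfrac{\pi-t_0}{2}=\sin\tfrac{t_0}{2}$, $1+\cos\tfrac{\pi-t_0}{2}=1+\sin\tfrac{t_0}{2}$, together with the identity $1+\sin\tfrac{t_0}{2}=2\cos^2\!\bigl(\tfrac{\pi-t_0}{4}\bigr)$ which rewrites the factor $\cos\tfrac{t_0}{4}$ (appearing in the $(2,-)$ and $(1,\pm)$ explicit formulas of Theorem \ref{theorem:asymptoticsgap}) as $\tfrac{1}{\sqrt 2}(1+\sin\tfrac{t_0}{2})^{1/2}$. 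Introducing the shifted index $\tilde n=n+(j-1)/2$ to absorb the half-integer corrections (for example, in the $(0,+)$ case, $2n^2-n=2\tilde n^2+\tilde n$ and $n-\tfrac 12=\tilde n$ for $\tilde n=n-\tfrac 12$, and similarly in the other three cases), a direct bookkeeping of the powers of $2$, of $n$, of $\cos\tfrac{t_0}{2}$ and of $1+\sin\tfrac{t_0}{2}$ produces the uniform expression asserted in the corollary, with the sign $\epsilon_j^{\pm}$ positive precisely when $+1$ is a fixed eigenvalue (the cases $(1,+)$ and $(2,-)$), corresponding to the dominant factor $(1+\sin\tfrac{t_0}{2})^{\tilde n}$ sitting in the numerator.

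For the COE and CSE results I plug into the interrelations \eqref{interrelations} at $s=0$. The first two collapse to $E^{[1]}_{2n+1}(t_0;0)=E^{(0,+)}_{n+1}(t_0;0)$ and $E^{[1]}_{2n}(t_0;0)=E^{(1,-)}_n(t_0;0)$; combining these with the asymptotics just derived and using $\tilde n^{-1/8}=2^{1/8}N^{-1/8}(1+o(1))$ to re-express everything in terms of $N$, I get the same expression for $E^{[1]}_N(t_0;0)$ in both parities of $N$. For the CSE, $E^{[4]}_n(t_0;0)=\tfrac12(E^{(1,+)}_n(t_0;0)+E^{(1,-)}_n(t_0;0))$; the ratio $E^{(1,-)}_n/E^{(1,+)}_n$ is of order $\bigl(\cos\tfrac{t_0}{2}/(1+\sin\tfrac{t_0}{2})\bigr)^{2n}$, which tends to $0$ under $nt_0\to\infty$ since $1+\sin\tfrac{t_0}{2}\geq (1+ct_0)\cos\tfrac{t_0}{2}$ as $t_0\to 0$. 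Hence $E^{[4]}_n(t_0;0)\sim\tfrac12 E^{(1,+)}_n(t_0;0)$, and the additional factor $1/2$ combines with the $2^{-1/3}$ appearing in the $(1,+)$ formula to yield the claimed constant $2^{-4/3}$. The hard part is not the analytic content—which is entirely supplied by Theorem \ref{theorem:asymptoticsgap}—but the careful trigonometric and constant bookkeeping needed to land on the particular symmetric form stated in the corollary.
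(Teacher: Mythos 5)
Your proposal is correct and follows essentially the same route as the paper: both perform the change of variables $\theta_k\mapsto\pi-\theta_k$ (which the paper applies directly to \eqref{generatingorth} using \eqref{eq:jpdforthogonal}, while you phrase it conceptually as the measure-preserving involution $M\mapsto -M$ that fixes $\mathbb O_{2n}^+,\mathbb O_{2n+2}^-$ and swaps $\mathbb O_{2n+1}^\pm$), then substitute $V\equiv 0$, $s=0$, $t_0\mapsto\pi-t_0$ into the explicit form of Theorem \ref{theorem:asymptoticsgap}, and finally invoke the interrelations \eqref{interrelations} at $s=0$, keeping only the dominant $(1,+)$ term in the CSE case. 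The trigonometric and $2$-power bookkeeping you describe (in particular the identity $1+\sin\tfrac{t_0}{2}=2\cos^2\tfrac{\pi-t_0}{4}$ and the reindexing $\tilde n=n+\tfrac{j-1}{2}$) is exactly what is needed and checks out.
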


\begin{remark}
We can compare these results with the corresponding result in the CUE, which reads
\[E^{[2]}_N(t_0;0)=2^\frac 1{12}e^{3\zeta'(-1)}\frac{\left(\cos\frac{t_0}2\right)^{N^2}}{(N\sin\frac{t_0}2)^\frac 14}(1+o(1)).\]
\end{remark}

To compute asymptotics for the right hand sides of \eqref{interrelations}
	in the case where $s>0$ is fixed, we can apply Theorem \ref{theorem:asymptoticsFH} and \cite{Claeys-Krasovsky}.
Through \eqref{interrelations}, this yields asymptotics for the generating functions/gap probabilities in the thinned COE and thinned CSE, {which we prove in Section \ref{section:proofcor2}.}
	
	\begin{corollary}\label{gapuni}
			As $\tilde n=n+\frac{j-1}{2}\to\infty$, with $t_0\in (0,\pi)$ fixed or such that $nt_0\to\infty$, and with $\epsilon_j^\pm$ as above,
			\begin{equation*}
				E^{(j,\pm)}_n(t_0;s)=s^{-\frac 14\epsilon_j^{\pm}}\left|G\left(1+\frac{\log s}{2\pi i}\right)\right|^2\left(4\tilde n\sin t_0\right)^{\frac{\log^2s}{4\pi ^2}}s^\frac{\tilde n t_0}{\pi}(1+o(1)).
			\end{equation*}
			Moreover, as $N\to+\infty$ with $t_0\in (0,\pi)$ fixed or such that $Nt_0\to \infty$, we have
		\begin{equation*}
		\begin{aligned}
			E^{[1]}_N(t_0;s)&=\frac{2s^\frac 12}{1+s}\left|G\left(1+\frac{\log s}{\pi i}\right)\right|^2\left(2N\sin t_0\right)^\frac{\log^2s}{\pi^2}s^\frac{Nt_0}{\pi}(1+o(1)),\\
			\\
			E^{[4]}_N(t_0;s)&=\frac{1+s^\frac 12}{2s^\frac 14}\left|G\left(1+\frac{\log s}{2\pi i}\right)\right|^2\left(4N\sin t_0\right)^\frac{\log^2s}{4\pi^2}s^\frac{Nt_0}{\pi}(1+o(1)).
			\end{aligned}
		\end{equation*}
		\end{corollary}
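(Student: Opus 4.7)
The plan is to realize the symbol $g(e^{it})=f_{t_0,s}(e^{it})f_{t_0,s}(e^{-it})$ as a positive Fisher--Hartwig symbol of the form \eqref{def:FHU} and then combine Theorem~\ref{theorem:asymptoticsFH} with the unitary asymptotics of \cite{Claeys-Krasovsky}. A direct computation from \eqref{deffts} shows that $g=s$ on the arc $|t|<t_0$ and $g=1$ on $t_0<|t|\le\pi$, which matches the piecewise constant example quoted just after \eqref{def:FHU} with $m=1$, $t_1=t_0$, $\alpha_0=\alpha_1=\alpha_2=0$, constant potential $V\equiv V_0=\tfrac{t_0\log s}{\pi}$, and $\beta_1=-\tfrac{i\log s}{2\pi}\in i\mathbb R$. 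In particular $-\beta_1^2=\tfrac{\log^2 s}{4\pi^2}$ and $|G(1+\beta_1)|^2=|G(1+\tfrac{\log s}{2\pi i})|^2$.

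For the unitary Toeplitz determinant I would use \eqref{eq:asymptoticsFHunitary} when $t_0$ is fixed, and the Claeys--Krasovsky uniform Painlev\'e~V asymptotics for two merging Fisher--Hartwig singularities when $t_0\to 0$ with $nt_0\to\infty$. In the latter regime the Painlev\'e~V transcendent sits in its large-argument asymptotic expansion and its contribution collapses uniformly to the Ehrhardt constant; taking square roots yields
\[
(\mathbb E^{\mathbb U}_{2n}[g])^{1/2}=\Big|G\Big(1+\tfrac{\log s}{2\pi i}\Big)\Big|^2\,(4n\sin t_0)^{\log^2 s/(4\pi^2)}\,s^{nt_0/\pi}\,(1+o(1)).
\]
Substituting $\alpha_0=\alpha_{m+1}=0$ and $V(1)=V(-1)=V_0$ into \eqref{def:C}, the prefactors of Theorem~\ref{theorem:asymptoticsFH} collapse to $C_n=e^{i\beta_1(\pi/2-t_0)}=s^{1/4}s^{-t_0/(2\pi)}$ and $\widetilde C_n=e^{-i\pi\beta_1/2}=s^{-1/4}$. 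The factor $s^{\mp t_0/(2\pi)}$ embedded in $C_n^{\pm 1}$ combines with $s^{nt_0/\pi}$ to produce exactly $s^{\tilde n t_0/\pi}$ with $\tilde n=n+\tfrac{j-1}{2}$, while the surviving $s^{\pm 1/4}$ from $C_n^{\pm 1}$ and $s^{\mp 1/4}$ from $\widetilde C_n^{\pm 1}$ assemble into $s^{-\epsilon_j^\pm/4}$ in all four cases; replacing $4n$ by $4\tilde n$ inside the logarithmic factor costs only a $1+O(1/n)$ multiplicative error absorbed into $(1+o(1))$. This proves the first displayed formula of the corollary.

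For the COE and CSE statements I would substitute the first formula into the Bornemann--Forrester interrelations \eqref{interrelations}, with $s$ replaced by $s^2$ in the COE case. Using $\log s^2=2\log s$, the Barnes factor becomes $|G(1+\tfrac{\log s}{\pi i})|^2$ and the exponent on $(4\tilde n\sin t_0)$ becomes $\tfrac{\log^2 s}{\pi^2}$. For both COE parities the identity $s\cdot s^{-1/2}+s^{1/2}=2s^{1/2}$ produces the prefactor $\tfrac{2s^{1/2}}{1+s}$, and the shifts between $\tilde n$ and the relevant dimension integer absorb neatly into $N\in\{2n,2n+1\}$ so that $4\tilde n\sin t_0\mapsto 2N\sin t_0$ and $2\tilde n t_0/\pi\mapsto Nt_0/\pi$. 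In the CSE case, the identity $\tfrac12(s^{-1/4}+s^{1/4})=\tfrac{1+s^{1/2}}{2s^{1/4}}$ gives the announced prefactor.

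The main obstacle lies in the uniformity as $t_0\to 0$ with $nt_0\to\infty$: one must verify that the Painlev\'e~V transcendent governing the Claeys--Krasovsky expansion of $\mathbb E^{\mathbb U}_{2n}[g]$ is effectively in its large-variable asymptotic regime when $n\sin t_0\to\infty$, and that its residual contribution matches the Ehrhardt constant uniformly, so that combined with the $\mathcal O(1/(nt_0))$ error of Theorem~\ref{theorem:asymptoticsFH} one obtains a single $(1+o(1))$ remainder. Once this uniformity is granted, every remaining step is purely algebraic bookkeeping.
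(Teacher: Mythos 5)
Your proposal is correct and follows essentially the same route as the paper's proof: identify $g_{t_0,s}$ as a Fisher--Hartwig symbol with $m=1$, $\alpha_0=\alpha_1=\alpha_2=0$, $t_1=t_0$, $\beta_1=\tfrac{\log s}{2\pi i}$, apply Theorem~\ref{theorem:asymptoticsFH} to produce the prefactors $C_n^{\pm1}, \widetilde C_n^{\pm1}$, combine with the Claeys--Krasovsky asymptotics for the Toeplitz determinant (valid uniformly for $t_0$ fixed or $t_0\to 0$ with $nt_0\to\infty$), and then feed the result through the Bornemann--Forrester interrelations for the COE and CSE. The only cosmetic difference is that you absorb the constant $s^{t_0/\pi}$ into a constant potential $V\equiv V_0$, whereas the paper keeps $V=0$ and factors $g_{t_0,s}=s^{t_0/\pi}g$ out front; the algebraic bookkeeping you perform ($C_n=s^{1/4}s^{-t_0/(2\pi)}$, $\widetilde C_n=s^{-1/4}$, reassembly into $s^{-\epsilon_j^\pm/4}s^{\tilde n t_0/\pi}$) agrees entirely with the paper. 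The uniformity concern you raise about the Painlev\'e~V regime is precisely what \cite[Theorem~1.11]{Claeys-Krasovsky} settles, which is the exact reference the paper invokes; treating it as an established black box, as the paper does, is legitimate here.
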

		\begin{remark}
The above results should be compared to the CUE analogue
\[			E^{[2]}_N(t_0;s)=\left|G\left(1+\frac{\log s}{2\pi i}\right)\right|^4\left(2N\sin t_0\right)^\frac{\log^2s}{2\pi^2}s^\frac{Nt_0}{\pi}(1+o(1)).
\]

\end{remark}
Finally we can use Theorem \ref{theorem:asymptoticsFH-merging} to obtain weak uniform asymptotics for the generating functions when $s>0$, {see Section \ref{section:proofcor2} for the proof of this result.}
\begin{corollary}\label{gapuni2}
	Uniformly for $t_0\in[0,\pi]$, $s$ in compact sets of $(0,+\infty)$, as $n\to\infty$:
	\begin{equation*}
	E^{(j,\pm)}_n(t_0;s)=\left(n\sin t_0+1\right)^\frac{\log^2s}{4\pi^2}s^\frac{nt_0}{\pi}e^{\mathcal O(1)},
	\end{equation*}
	hence for $\beta=1,4$, as $N\to\infty$,
	\begin{equation*}
	E^{[\beta]}_N(t_0;s)=\left(N\sin t_0+1\right)^\frac{\log^2s}{\beta\pi ^2}s^\frac{Nt_0}{\pi}e^{\mathcal O(1)}.
	\end{equation*}
\end{corollary}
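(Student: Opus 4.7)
The plan is to apply Theorem \ref{theorem:asymptoticsFH-merging} to the symbol $f_{t_0,s}$ and then to deduce the $\beta=1,4$ statements via the identities \eqref{interrelations}. First, for $s>0$, the symbol $g_{t_0,s}(e^{it})=f_{t_0,s}(e^{it})f_{t_0,s}(e^{-it})$ from \eqref{defgts} is piecewise constant and fits the Fisher--Hartwig form \eqref{def:FHU} with $m=1$, $\alpha_0=\alpha_1=\alpha_2=0$, $t_1=t_0$, $\beta_1=\frac{\log s}{2\pi \i}\in i\mathbb R$, and a constant $V\equiv V_0=\frac{t_0\log s}{\pi}$ absorbing the multiplicative constant coming from the convention $z^\beta=|z|^\beta e^{i\beta\arg z}$, $-\pi<\arg z\leq\pi$. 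A short direct computation of the branch values of $(e^{it}/e^{i(\pi+t_0)})^{\beta_1}(e^{-it}/e^{i(\pi+t_0)})^{\beta_1}$ on the two arcs $|t|<t_0$ and $t_0<|t|<\pi$ confirms this identification, and since $s$ lies in a compact subset of $(0,\infty)$, the parameter $\beta_1$ stays in a compact subset of $i\mathbb R$, as required for the uniformity in Theorem \ref{theorem:asymptoticsFH-merging}.

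Second, I would substitute this data into Theorem \ref{theorem:asymptoticsFH-merging}. Since $m=1$ and every $\alpha_j=0$, the prefactor $F$ in \eqref{def:F} is trivially $1$, one has $-\beta_1^2=\frac{\log^2 s}{4\pi^2}$, and $e^{nV_0}=s^{nt_0/\pi}$. For the $(0,+)$ and $(2,-)$ cases the formula reduces (using $n^{-\beta_1^2}(\sin t_0+\tfrac 1n)^{-\beta_1^2}=(n\sin t_0+1)^{-\beta_1^2}$) to
\[
\mathbb E^{(j,\pm)}_n[f_{t_0,s}]=s^{nt_0/\pi}\,n^{-\beta_1^2}\bigl(\sin t_0+\tfrac{1}{n}\bigr)^{-\beta_1^2}e^{\mathcal O(1)}=s^{nt_0/\pi}(n\sin t_0+1)^{\log^2 s/(4\pi^2)}e^{\mathcal O(1)}.
\]
For the $(1,\pm)$ cases the theorem produces instead the combination $n^{-\beta_1^2}(\sin\tfrac{t_0}{2}+\tfrac{1}{n})^{-\beta_1^2}(\cos\tfrac{t_0}{2}+\tfrac{1}{n})^{-\beta_1^2}$. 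Using $\sin t_0=2\sin\tfrac{t_0}{2}\cos\tfrac{t_0}{2}$ and splitting cases according to whether $n\sin t_0$ is large or bounded (the latter subdividing according to whether $t_0\to 0$ or $t_0\to\pi$), one checks that this quantity agrees with $(n\sin t_0+1)^{-\beta_1^2}$ up to a factor that is bounded and bounded away from zero uniformly in $t_0\in[0,\pi]$; this is the only mildly delicate step. The endpoint cases $t_0=0$ (where $E^{(j,\pm)}_n(0;s)=1$) and $t_0=\pi$ (where $E^{(j,\pm)}_n(\pi;s)=s^n$) are checked directly and match the claimed form.

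Finally, to obtain the $\beta\in\{1,4\}$ statements, I would substitute the uniform asymptotics just obtained into the identities \eqref{interrelations}. For $\beta=1$, each term on the right-hand side involves $E^{(j,\pm)}(t_0;s^2)$ evaluated at either $n$ or $n+1$; the resulting exponent becomes $\frac{\log^2 s^2}{4\pi^2}=\frac{\log^2 s}{\pi^2}=\frac{\log^2 s}{\beta\pi^2}$, and both the factor $s^{2nt_0/\pi}$ and the base $n\sin t_0+1$ differ from $s^{Nt_0/\pi}$ and $N\sin t_0+1$ only by bounded, bounded-away-from-zero factors that can be absorbed into $e^{\mathcal O(1)}$. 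The sum of two positive terms with the same leading asymptotic, divided by $1+s$, still has the advertised form $(N\sin t_0+1)^{\log^2 s/\pi^2}s^{Nt_0/\pi}e^{\mathcal O(1)}$. For $\beta=4$ no rescaling of $s$ occurs and the exponent $\frac{\log^2 s}{4\pi^2}=\frac{\log^2 s}{\beta\pi^2}$ is already correct, so the same argument applies. The only real work is the $(1,\pm)$ check described above; once that is in place, the rest is routine bookkeeping.
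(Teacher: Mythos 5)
Your proposal is correct and follows essentially the same route as the paper, which simply refers the reader to Theorem \ref{theorem:asymptoticsFH-merging} for this corollary after having set up the FH identification $g_{t_0,s}=s^{t_0/\pi}g$ (with $V=0$, $m=1$, $\alpha_0=\alpha_1=\alpha_{m+1}=0$, $\beta_1=\frac{\log s}{2\pi i}$) in the proof of Corollary \ref{gapuni}. The only cosmetic difference is that the paper pulls out the scalar $s^{nt_0/\pi}$ rather than encoding it as a constant $V\equiv V_0=\frac{t_0\log s}{\pi}$; both are equivalent, though factoring it out avoids any concern about the theorem's uniformity in $V$. Your verification for the $(1,\pm)$ cases is the one nontrivial point the paper leaves implicit, and the cleanest way to see it is to note that $n(\sin\frac{t_0}{2}+\frac1n)(\cos\frac{t_0}{2}+\frac1n)=\frac{n}{2}\sin t_0+\sin\frac{t_0}{2}+\cos\frac{t_0}{2}+\frac1n$, which is bounded above and below by constant multiples of $n\sin t_0+1$ uniformly in $t_0\in[0,\pi]$ and $n\ge1$, so no case-splitting is actually needed.
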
 
\begin{remark}
	The above result also holds for $\beta=2$, see \cite{Claeys-Krasovsky} for an expression of the multiplicative constant.
\end{remark}

The previous corollary allows us to derive global rigidity estimates for the ordered eigenangles $0< \theta_1\leq\ldots\leq\theta_n<\pi$ in the orthogonal ensembles $\mathbb O^+_{2n}, \mathbb O^-_{2n+2}, \mathbb O^\pm_{2n+1}$.
	Given the joint probability distribution of the eigenvalues \eqref{eq:jpdforthogonal} which implies that the eigenvalues repell each other, we can expect that in a typical situation, the eigenangles are distributed in a rather regular way, in other words we can expect that
	$\theta_j$ will typically lie not too far from the deterministic value $\frac{j\pi}{n}$. We can also expect that the counting function $N_{(0,t)}$, counting the number of eigenangles in $(0,t)$ for $t\leq \pi$, will behave to leading order typically like $\frac{nt}{\pi}$.
	We prove the following {in Section \ref{section:proofrigidity}}.
	\begin{theorem}\label{theorem:globalrigidity}In the ensembles $\mathbb O^+_{2n}, \mathbb O^-_{2n+2}, \mathbb O^\pm_{2n+1}$, we have for any $\epsilon>0$
		\begin{equation*}
		\lim_{n\rightarrow +\infty}\mathbb P\left(\max_{k=1,\ldots,n}\left|\theta_k-\frac{\pi k}n\right|<(1+\epsilon)\frac{\log n}n\right)=1,\\
		\end{equation*}
		and
		\begin{equation*}
		\lim_{n\rightarrow +\infty}\mathbb P\left(\sup_{t\in(0,\pi)}\left|N_{(0,t)}-\frac{nt}\pi\right|<\left(\frac 1\pi+\epsilon\right)\log n\right)=1.
		\end{equation*}
	\end{theorem}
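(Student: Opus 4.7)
The plan is to derive both claims from a Chernoff-type tail estimate on the counting function $N_{(0,t)}$, and then deduce the eigenvalue rigidity from the counting-function rigidity by the standard inversion: if $\sup_{t\in(0,\pi)}|N_{(0,t)}-nt/\pi|<(1/\pi+\epsilon)\log n$ holds, then evaluating just to the left and right of each $t=\theta_k$ gives $|n\theta_k/\pi-k|\leq (1/\pi+\epsilon)\log n+1$, which translates to $|\theta_k-k\pi/n|\leq (1+\pi\epsilon)\log n/n+\pi/n$; the extra $\pi/n$ is absorbed and, after relabeling $\epsilon$, the first statement of the theorem follows from the second. So it suffices to prove the uniform bound on the counting function.

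For the tail of the counting function at a fixed $t_0\in(0,\pi)$, I would use the generating function identity $E^{(j,\pm)}_n(t_0;s)=\mathbb E\bigl[s^{N_{(0,t_0)}}\bigr]$ from \eqref{generatingorth} together with Markov's inequality, which yields
\[
\mathbb P(N_{(0,t_0)}\geq m)\leq s^{-m}E_n^{(j,\pm)}(t_0;s)\;\text{ for }s\geq 1,\qquad \mathbb P(N_{(0,t_0)}\leq m)\leq s^{-m}E_n^{(j,\pm)}(t_0;s)\;\text{ for }0<s\leq 1.
\]
Substituting the uniform estimate of Corollary~\ref{gapuni2} and writing $s=e^{\pm\tau}$ with $\tau>0$, the tail bounds for a deviation of size $\delta:=|m-nt_0/\pi|$ become
\[
\mathbb P\bigl(\pm(N_{(0,t_0)}-nt_0/\pi)\geq\delta\bigr)\leq \exp\!\left(-\tau\delta+\frac{\tau^2}{4\pi^2}\log(n\sin t_0+1)+\mathcal O(1)\right).
\]
Optimizing in $\tau>0$ --- taking $\tau=2\pi^2\delta/L_0$ with $L_0:=\log(n\sin t_0+1)$ when $L_0$ is of order $\log n$, and $\tau$ of order $\log n/\delta$ when $L_0$ is bounded --- yields the Chernoff-type bound $\exp(-\pi^2\delta^2/L_0+\mathcal O(1))$. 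In both regimes, the optimal $\tau$ and the resulting $s=e^{\pm\tau}$ remain in a fixed compact subset of $(0,\infty)$, which is precisely the uniformity region of Corollary~\ref{gapuni2}. Inserting $\delta=(1/\pi+\epsilon/2)\log n$ gives a bound of order $n^{-(1+\pi\epsilon/2)^2+o(1)}=o(n^{-1})$ for any fixed $\epsilon>0$.

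To extend this pointwise bound to a uniform estimate in $t_0$, one discretizes $[0,\pi]$ using the grid $t_j=j\pi/n$, $j=0,1,\ldots,n$, applies the Chernoff bound at every grid point, and takes the union bound; since there are $O(n)$ grid points each contributing $o(n^{-1})$, the total probability of a grid-point violation tends to $0$. Between consecutive grid points $N_{(0,t)}$ is nondecreasing and $nt/\pi$ changes by exactly $1$, so the supremum of $|N_{(0,t)}-nt/\pi|$ exceeds its value on the grid by at most $1$; this discretization error is absorbed by the $(\epsilon/2)\log n$ slack and produces the uniform bound $|N_{(0,t)}-nt/\pi|<(1/\pi+\epsilon)\log n$ with probability $1-o(1)$. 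The main technical point is verifying that the optimal $\tau$ in the Chernoff optimization stays bounded away from $0$ and $\infty$ uniformly in $t_0\in[0,\pi]$, so that the $e^{\mathcal O(1)}$ factor of Corollary~\ref{gapuni2} is a genuine uniform constant; this is handled by the separate treatment of the two regimes above and is the only place where uniformity needs to be checked carefully.
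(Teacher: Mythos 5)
Your approach is essentially the same as the paper's: reduce the supremum/maximum to the counting function on the grid $\{\pi k/n\}$, control the tail of $N_{(0,t)}$ by a Chernoff bound using Corollary~\ref{gapuni2} as the bound on the moment generating function, and take a union bound over the $O(n)$ grid points; the conversion from the counting-function bound to the eigenvalue-position bound is the content of the paper's Lemma~\ref{lemma:1}, in only slightly different bookkeeping.

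One soft spot in the write-up is the per-$t_0$ optimization of $\tau$. You assert that the optimized Chernoff bound $\exp(-\pi^2\delta^2/L_0+\mathcal O(1))$ is attained in both your regimes with $\tau$ staying in a compact set, but this is inconsistent: when $L_0=o(\log n)$ the genuine minimizer $\tau=2\pi^2\delta/L_0$ blows up (leaving the uniformity range of Corollary~\ref{gapuni2}), and with a bounded $\tau$ the bound is only $\exp(-\tau\delta+\mathcal O(1))$, not $\exp(-\pi^2\delta^2/L_0+\mathcal O(1))$. Moreover, the specific choice ``$\tau$ of order $\log n/\delta$'' gives $\tau\delta\sim\log n$, i.e.\ a pointwise bound of order $n^{-1}$, which is not enough to survive the union bound over $n$ grid points; you need $\tau$ strictly bigger (say $\tau=\pi$) there. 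The intermediate regime $1\ll L_0\ll\log n$ is not addressed at all. All of this is sidestepped by the paper's cleaner bookkeeping (its Lemma~\ref{lemma:2}): apply the Chernoff bound with a single \emph{fixed} $\gamma$ independent of $n$ and $t_0$, use $L_0\le\log(n+1)$ to get the uniform bound $\exp(-\gamma\delta+\gamma^2\log(n+1)/(4\pi^2)+\mathcal O(1))$, sum over the grid, and only at the very end optimize over the constant $\gamma$ (the minimum $\gamma=2\pi(1+\epsilon)$ gives the exponent $1-(1+\epsilon)^2<0$). This avoids any case analysis on $L_0$ and makes the appeal to Corollary~\ref{gapuni2} with $s$ in a fixed compact set completely transparent. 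I would recommend replacing your per-$t_0$ optimization with this fixed-$\gamma$ version.
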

	\begin{remark}
		{These results should be compared to concentration inequalities in \cite[Section 5.4]{Meckes}, which yield probabilistic bounds for $\left|N_{(0,t)}-\frac{nt}\pi\right|$ rather than for its supremum, and to global rigidity results in the C$\beta$E \cite{ArguinBeliusBourgade, CMN, Lambert} (see in particular Corollary 1.3 of \cite{Lambert}) and the sine $\beta$ process \cite{HolcombPaquette}.		
		The method that we use to prove this result is based on a bound for the first exponential moment of the eigenvalue counting function, and this does not allow to get a complementary lower bound for the maximum and supremum. The question of sharpness of the upper bound is closely related to the theory of Gaussian multiplicative chaos, see e.g.\ \cite{ArguinBeliusBourgade, Berest, Webb} in general and \cite{ForkelKeating} in this specific situation.}
	\end{remark}
	\subsection{Possible generalizations}
	Apart from the positive symbols with Fisher-Hartwig singularities and the symbols with a gap or emerging gap, there are other types of symbols for which Toeplitz determinant asymptotics are known, and for which one could use Proposition \ref{theorem:identityaverages} in order to generalize them to the orthogonal ensembles.
	One could for instance consider complex-valued symbols or non-analytic symbols with Fisher-Hartwig singularities and apply the results from \cite{Deift-Its-Krasovsky}. Another example consists of a situation where a symbol is smooth but depends on $n$ and develops a Fisher-Hartwig singularity in the limit $n\to\infty$, as considered in \cite{Claeys-Its-Krasovsky}. In this case, like in the case $m=1$ of Theorem \ref{theorem:asymptoticsFH-merging}, it is also possible to evaluate the multiplicative constant in the asymptotic expansion in terms of solutions to the Painlev\'e V equation.
{
Yet another example consists of symbols with a gap, but with an additional Fisher-Hartwig singularity inside the gap, as considered in \cite{XuZhao}. This situation is related to a system of coupled Painlev\'e V equations.
}
	
	In principle, the results from Theorem \ref{theorem:asymptoticsFH-merging}
	can also be applied to derive asymptotics for moments of moments of characteristic polynomials in the orthogonal ensembles, which can be written as multiple integrals of the multiplicative averages we are considering in Theorem \ref{theorem:asymptoticsFH-merging}, in the special case where all $\beta_j$'s vanish and where all $\alpha_j$'s are equal. The moments of moments are of interest because they reveal some of the statistics of the extrema of characteristic polynomials. In the case of the unitary group, their asymptotics were conjectured in \cite{Fyodorov-Keating} and later proved in \cite{Claeys-Krasovsky} in the case of two singularities, and in \cite{Fahs} in general. Both for unitary and orthogonal ensembles, these moments of moments have been evaluated exactly in terms of symmetric functions in \cite{Bailey-Keating, Assiotis-Bailey-Keating} for $\alpha_j$ integer. It would be interesting to see if Theorem \ref{theorem:asymptoticsFH-merging} can be used to generalize the asymptotics to any $\alpha_j\geq 0$.

	\section{Proof of Proposition \ref{theorem:identityaverages}}\label{section:prop}
	Given a real-valued integrable function $f$ on the unit circle, define the symbol $g(e^{it})=f(e^{it})f(e^{-it})$, symmetric with respect to complex conjugation of the variable, and define its Fourier coefficients as in \eqref{def:Fourier}.	
	It is known that the averages \eqref{def:average} can be written as determinants of Toeplitz+Hankel matrices. More precisely, we have (see e.g.\ \cite[theorem 2.2]{Baik-Rains}, \cite[p212]{Forrester}, or \cite{Johansson})
	for all $n\in\mathbb N$,
	\begin{equation}\label{eq:toeplitzplushankel}
	\begin{aligned}
	\mathbb E^{(0,+)}_{n}[f]&=\frac 12\det\left(g_{j-k}+g_{j+k}\right)_{j,k=0}^{n-1},\\
	\mathbb E^{(2,-)}_{n}[f]&=\det\left(g_{j-k}-g_{j+k+2}\right)_{j,k=0}^{n-1},\\
	\mathbb E^{(1,\pm)}_{n}[f]&=\det\left(g_{j-k}\mp g_{j+k+1}\right)_{j,k=0}^{n-1}.
	\end{aligned}
	\end{equation}
	Moreover, there exist identities expressing products of two such Toeplitz+Hankel determinants as a Toeplitz determinant (see e.g.\ \cite{Wilf}, \cite[Corollary 2.4]{Baik-Rains}, or \cite[p211]{Forrester})
	\begin{equation}\label{eq:TtoTplusH}
	\begin{aligned}
	&\mathbb E^{\mathbb U}_{2n}[g]=\det\left(g_{j-k}\right)_{j,k=0}^{2n-1}= \det\left(g_{j-k}-g_{j+k+1}\right)_{j,k=0}^{n-1}\det\left(g_{j-k}+g_{j+k+1}\right)_{j,k=0}^{n-1},\\
	&\mathbb E^{\mathbb U}_{2n+1}[g]=\det\left(g_{j-k}\right)_{j,k=0}^{2n}= \frac{1}{2}\det\left(g_{j-k}+g_{j+k}\right)_{j,k=0}^{n}\det\left(g_{j-k}-g_{j+k+2}\right)_{j,k=0}^{n-1}.
	\end{aligned}
	\end{equation}	
	We would like to invert such factorizations, and write a single Toeplitz+Hankel determinant in terms of a Toeplitz determinant. To that end, we need in addition analogues of the above identities, but with slightly different products of Toeplitz+Hankel determinants at the right.
	As above, let $\Phi_N$ be the degree $N$ monic orthogonal polynomial associated with the symbol $g$.

	\begin{proposition}
		\begin{equation}\label{identitiesprop}
		\begin{aligned}
		\Phi_{2n}(\pm 1)\det\left(g_{j-k}\right)_{j,k=0}^{2n-1}&=\det\left(g_{j-k}-g_{j+k+2}\right)_{j,k=0}^{n-1}\det\left(g_{j-k}\mp g_{j+k+1}\right)_{j,k=0}^{n-1},\\
		\Phi_{2n+1}(\pm 1)\det\left(g_{j-k}\right)_{j,k=0}^{2n}&=\pm \det\left(g_{j-k}\mp g_{j+k+1}\right)_{j,k=0}^{n}\det\left(g_{j-k}-g_{j+k+2}\right)_{j,k=0}^{n-1}.
		\end{aligned}
		\end{equation}
	\end{proposition}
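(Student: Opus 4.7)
My approach is to prove \eqref{identitiesprop} by a block matrix decomposition, paralleling the derivation of the Wilf--Baik--Rains identities \eqref{eq:TtoTplusH} but applied to a suitably bordered matrix.

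The starting point is the classical determinantal representation of the monic orthogonal polynomial on the unit circle (the same representation implicit in \eqref{eq:OPUCaverage}), which gives
\[
\Phi_N(\pm 1)\det(g_{j-k})_{j,k=0}^{N-1}=\det\begin{pmatrix}(g_{j-k})_{j=0,\ldots,N-1;\,k=0,\ldots,N}\\((\pm 1)^k)_{k=0,\ldots,N}\end{pmatrix}.
\]
Since $(\pm 1)^{N-k}=(\pm 1)^{N}(\pm 1)^k$, the last row of this $(N+1)\times(N+1)$ matrix is either invariant or anti-invariant under the column reversal $k\mapsto N-k$, a symmetry that combines well with the Toeplitz persymmetry induced by $g_m=g_{-m}$. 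The central step is then to perform a column change of basis replacing each pair $(e_k,e_{N-k})$ by its sum and difference, together with a companion row change of basis on the first $N$ rows combining row $j$ with row $N-1-j$. This is precisely the block-diagonalization underlying the factorizations \eqref{eq:TtoTplusH}. Thanks to the (anti)symmetry of the last row under column reversal, it gets absorbed into only one of the two column subspaces, leaving a zero in the other. After the relabeling $j\leftrightarrow n-1-j$, $k\leftrightarrow n-1-k$ and the use of $g_m=g_{-m}$ (the same relabeling that yields the Hankel entries $g_{j+k+1}$ in \eqref{eq:TtoTplusH}), the bordered determinant factorizes into a product of two Toeplitz+Hankel determinants, which I expect to identify with $\det(g_{j-k}-g_{j+k+2})$ and $\det(g_{j-k}\mp g_{j+k+1})$. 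The leading sign $\pm 1$ on the right hand side in the odd $N$ case emerges from the column-reversal parity of the OP row when $N$ is odd.

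The main technical obstacle is the half-integer mismatch between the column reversal (centered at $k=N/2$) and the Toeplitz row reflection (centered at $j=(N-1)/2$). This mismatch is precisely what forces one of the two Hankel shifts in the factorization to be $j+k+2$ rather than $j+k+1$, and careful bookkeeping is needed to verify that the off-diagonal coupling terms indeed cancel, that the scalar factors from the (non-orthonormal) changes of basis cancel between the two diagonal blocks, and that the resulting Hankel indices match \eqref{identitiesprop} precisely.
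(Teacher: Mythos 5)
Your plan correctly identifies the starting determinantal representation and the key mechanism driving the factorization: the (anti)symmetry $(\pm1)^{N-k}=(\pm1)^{N}(\pm1)^{k}$ of the orthogonal-polynomial border under column reversal. This is indeed how the paper proceeds (there, in transposed form, with the OP border as an extra column $(\pm 1)^j$). The gap is in the pairing you propose for the Toeplitz rows. Combined with the column reversal $k\leftrightarrow N-k$, the involution $j\leftrightarrow N-1-j$ sends $g_{j-k}\mapsto g_{(N-1-j)-(N-k)}=g_{k-j-1}=g_{j+1-k}$, a unit shift rather than an invariance, so the symmetric and antisymmetric subspaces are \emph{not} decoupled by the Toeplitz part and the bordered matrix does not block-diagonalize under this change of basis; this is a structural obstruction, not something curable by bookkeeping. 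The involution that does respect the persymmetry is $j\leftrightarrow N-j$ (same centre as the column reversal), but that one pairs the Toeplitz row $j=0$ with the OP row $j=N$, a case your sum/difference scheme does not address.

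The paper sidesteps both problems by using one-sided \emph{unipotent} operations rather than a full symmetric/antisymmetric change of basis: subtract row $2n-j$ from row $j$ only for $j<n$, then add column $2n-k$ to column $k$ only for $k=n,\dots,2n-1$ (dividing by two at the self-paired index $k=n$). This produces a block-\emph{triangular}, not block-diagonal, matrix, never touches row $0$ or the OP border, and leaves only a single factor of $\tfrac12$ to track. Also note that after the block-triangularization the second factor is still an $(n+1)\times(n+1)$ bordered determinant containing the column $(\pm1)^{n+j}$; reducing it to the $n\times n$ form $\det(g_{j-k}\mp g_{j+k+1})$ requires subtracting consecutive rows, expanding along that column, and one more column operation on the result, a substantive further step that your plan treats as automatic.
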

	
	\begin{proof}
		The representation of the monic orthogonal polynomials in terms of the determinant \eqref{eq:OPUCaverage} yields
		\begin{equation*}
		\Phi_{N}(\pm 1)\det(g_{j-k})_{j,k=0}^{N-1}=\det\left(g_{j-k}\ \mid\ (\pm 1)^j\right)_{j,k=0}^{N,N-1}.
		\end{equation*}
		Setting $N=2n$ and subtracting the $(2n-j)$-th row from the $j$-th row of the matrix at the right hand side for $j=0,...,n-1$, we obtain
		\begin{equation*}
		\Phi_{2n}(\pm 1)\det(g_{j-k})_{j,k=0}^{2n-1}=\det\begin{pmatrix}\left(g_{j-k}-g_{2n-j-k}\right)_{j,k=0}^{n-1,2n-1} & (0)_{j,k=0}^{n-1,0}\\ \left(g_{n+j-k}\right)_{j,k=0}^{n,2n-1} & \left((\pm 1)^{n+j}\right)_{j,k=0}^{n,0}\end{pmatrix}.
		\end{equation*}
		Then, adding the $(2n-k)$-th column to the $k$-th column for $k=n,...,2n-1$ and dividing by two to take into account the case $k=n$, we get ($g$ being symmetric, we have $g_m=g_{-m}$)
		\begin{equation*}
		\Phi_{2n}(\pm 1)\det(g_{j-k})_{j,k=0}^{2n-1}=\frac 12
		\det\begin{pmatrix}\left(g_{j-k}-g_{2n-j-k}\right)_{j,k=0}^{n-1,n-1} & (0)_{j,k=0}^{n-1,n-1} &(0)_{j,k=0}^{n-1,0}\\ \left(g_{n+j-k}\right)_{j,k=0}^{n,n-1} &\left(g_{j-k}+g_{j+k}\right)_{j,k=0}^{n,n-1}& \left((\pm 1)^{n+j}\right)_{j,k=0}^{n,0}\end{pmatrix}.
		\end{equation*}
		This yields
		\begin{equation*}
		\Phi_{2n}(\pm 1)\det(g_{j-k})_{j,k=0}^{2n-1}=\frac{1}2\det(g_{j-k}-g_{j+k+2})_{j,k=0}^{n-1}\det\begin{pmatrix}\left(g_{j-k}+g_{j+k}\right)_{j,k=0}^{n,n-1}& \left((\pm 1)^{n+j}\right)_{j,k=0}^{n,0}\end{pmatrix}.
		\end{equation*}
		Adding or subtracting the $(j+1)$-th row from the $j$-th for $j=0,\ldots, n-1$, and then expanding with respect to the last column, we end up with
		\begin{equation*}
		\det\begin{pmatrix}\left(g_{j-k}+g_{j+k}\right)_{j,k=0}^{n,n-1}& \left((\pm 1)^{n+j}\right)_{j,k=0}^{n,0}\end{pmatrix}=\det\left(g_{j-k}+g_{j+k}\mp (g_{j-k+1}+g_{j+k+1})\right)_{j,k=0}^{n-1}.
		\end{equation*}
		For the first identity in \eqref{identitiesprop}, it remains to prove that the second determinant at the right hand side in the above formula is equal to $\det(g_{j-k}\mp g_{j+k+1})_{j,k=0}^{n-1}$.
		To see this, it suffices in the latter matrix to subtract or add the $(k-1)$-th column from the $k$-th for $k=1,\ldots, n-1$, and to multiply the first column by $2$. This indeed gives
		\begin{equation*}
		\det(g_{j-k}\mp g_{j+k+1})_{j,k=0}^{n-1}=\frac 12\det(g_{j-k}\mp g_{j+k+1}\mp (g_{j-k+1}\mp g_{j+k})_{j,k=0}^{n-1},
		\end{equation*}
		thus proving the first identity. For the second, one proceeds similarly by subtracting or adding the $(2n+1-j)$-th row from the $j$-th for $j=0,...,n$, and then adding or subtracting the $(2n+1-k)$-th column from the $k$-th for $k=n+1,\ldots,2n$, leading to
		\begin{equation*}
		\Phi_{2n+1}(\pm 1)\det(g_{j-k})_{j,k=0}^{2n}=\det(g_{j-k}\mp g_{j+k+1})_{j,k=0}^{n}\det\begin{pmatrix}\left(g_{j-k}\pm g_{j+k+1}\right)_{j,k=0}^{n,n-1}  \left((\pm 1)^{n+1+j}\right)_{j=0}^n\end{pmatrix}.
		\end{equation*}
		As before, subtracting or adding the next row to each row except the last one, and then expanding with respect to the last column, we get
		\begin{equation*}
		\det\begin{pmatrix}\left(g_{j-k}\pm g_{j+k+1}\right)_{j,k=0}^{n,n-1}  \left((\pm 1)^{n+1+j}\right)_{j=0}^n\end{pmatrix}=\pm\det(g_{j-k}\pm g_{j+k+1}\mp(g_{j-k+1}\pm g_{j+k+2}))_{j,k=0}^{n-1}.
		\end{equation*}
		Also similarly as before, we have
		\begin{equation*}
		\det(g_{j-k}-g_{j+k+2})_{j,k=0}^{n-1}=\det(g_{j-k}-g_{j+k+2}\mp (g_{j-k+1}-g_{j+k+1}))_{j,k=0}^{n-1},
		\end{equation*}
		and the two above equations allow us to conclude the proof.
	\end{proof}
	We can now combine the factorizations  \eqref{eq:TtoTplusH} and \eqref{identitiesprop} to obtain	\begin{equation}\label{eq:prop}
	\begin{aligned}
	\left(\frac 12\det\left(g_{j-k}+g_{j+k}\right)_{j,k=0}^{n-1}\right)^2&=\frac 1{-\Phi_{2n-1}(-1)\Phi_{2n-1}(1)}\det(g_{j-k})_{j,k=0}^{2n-1},\\
	\left(\det\left(g_{j-k}-g_{j+k+2}\right)_{j,k=0}^{n-1}\right)^2&=\Phi_{2n}(1)\Phi_{2n}(-1)\det(g_{j-k})_{j,k=0}^{2n-1},\\
	\left(\det\left(g_{j-k}\pm g_{j+k+1}\right)_{j,k=0}^{n-1}\right)^2&=\left(\frac{\Phi_{2n}(-1)}{\Phi_{2n}(1)}\right)^{\pm 1}\det(g_{j-k})_{j,k=0}^{2n-1}.
	\end{aligned}
	\end{equation}

	To prove Theorem \ref{theorem:identityaverages}, it then suffices to use \eqref{eq:toeplitzplushankel} and to note that since $g$ is non-negative, the zeros of the orthogonal polynomials are symmetric with respect to the real line and lie inside the unit disk, hence the right hand sides of \eqref{eq:prop} are positive.

	\section{Symbols with Fisher-Hartwig singularities}\label{section:FH}
	In this section, we let, as in Theorem \ref{theorem:asymptoticsFH} and Theorem \ref{theorem:asymptoticsFH-merging}, $V$ be {an analytic function in a neighborhood of the unit circle, real-valued on the unit circle and} such that $V(e^{it })=V(e^{-it })$, with Fourier coefficients $V_k=V_{-k}\in\mathbb R$, and we let $m\in\mathbb N$, $0< t_1< \ldots < t_m<\pi$, and for any $j=1,\ldots, m$, 
	{$\alpha_j\geq 0$} and $\beta_j\in i\mathbb R$. Then we let
	$g$ be of the form \eqref{def:FHU}.
	This is a positive symbol with {$2m+2$ Fisher-Hartwig singularities $e^{\pm it_j}$ and $\pm 1$.}
	
	In order to prove Theorem \ref{theorem:asymptoticsFH} and Theorem \ref{theorem:asymptoticsFH-merging}, by Proposition \ref{theorem:identityaverages}, we need to obtain asymptotics for the orthogonal polynomials $\Phi_{N}(\pm 1)$, with $N=2n$ and $N=2n-1$.
	
	\subsection{Asymptotics for $\Phi_N(\pm 1)$}\label{subsection:RHPOP}
	The large $N$ asymptotics for $\Phi_{N}(\pm 1)$ are not readily available in the literature, but can be computed using the RH analysis from \cite{Fahs}, which was inspired by the analysis of \cite{Deift-Its-Krasovsky}.
	Both those RH methods are based on an asymptotic analysis of the function
	\begin{equation}\label{def:Y}
	Y(z)=\begin{pmatrix}
	\Phi_N(z) & \frac{1}{2\pi\i}\int\limits_{\mathcal C}\frac{\Phi_N(\zeta)g(\zeta)\mathrm{d}\zeta}{\zeta^N(\zeta-z)}
	\\
	-\chi_{N-1}^2z^{N-1}{\Phi_{N-1}}(z^{-1}) & \frac{-\chi_{N-1}^2}{2\pi\i}\int\limits_{\mathcal C}\frac{{\Phi_{N-1}}(\zeta^{-1})g(\zeta)\mathrm{d}\zeta}{\zeta(\zeta-z)}
	\end{pmatrix},
	\end{equation}
	where $\chi_{N-1}^{-2}=\frac 1{2\pi}\int_0^{2\pi}\left|\Phi_{N-1}(e^{it})\right|^2g(e^{it})dt$  and $\mathcal C$ is the unit circle.
	This is the standard solution of the following RH problem for orthogonal polynomials on the unit circle \cite{Fokas-Its-Kitaev}.
	\subsubsection*{RH problem for $Y$}
	\begin{enumerate}[label=(\alph*)]
		\item $Y$ is analytic in $\mathbb{C}\setminus \mathcal C,$ where  the unit circle $\mathcal C$ is oriented counterclockwise.
		\item $Y$ has continuous boundary values $Y_\pm$ as $z\in \mathcal C\setminus\{\pm 1, e^{\pm it_1},\ldots, e^{\pm i t_m}\}$ is approached from inside ($+$) or outside ($-$) the unit circle, and they are related by
		$Y_+(z)=Y_-(z)\begin{pmatrix}1 & z^{-N}g(z)\\0&1\end{pmatrix}.$
		\item $Y(z)=(I+\mathcal{O}(z^{-1}))z^{N\sigma_3}$ as $z\to\infty$, where $\sigma_3=\begin{pmatrix}1&0\\0&-1\end{pmatrix}$.
		
	\end{enumerate}
	If one imposes moreover suitable conditions near the points $\pm 1, e^{\pm it_j}$, the solution to the above RH problem is unique, and one can derive asymptotics for it as $N\to\infty$ using the Deift/Zhou steepest descent method \cite{Deift-Zhou}.

{In the following result, we restrict ourselves to symbols $g$ of the form \eqref{def:g} with $\alpha_0=\alpha_{m+1}=0$, i.e.\ the case where there are no Fisher-Hartwig singularities at the points $\pm 1$, in the region needed for Theorem \ref{theorem:asymptoticsFH}.
	}
	
	\begin{proposition}\label{prop:asPhiFH}
		Let $g$ be of the form \eqref{def:FHU} with $\alpha_0=\alpha_{m+1}=0$. Define $u_+=t_1$ and $u_-=\pi-t_m$.
		We have 
		\begin{equation}\label{eq:asympPhiFH}
\begin{aligned}
		\Phi_N(1)&=e^{\frac 12(V_0-V(1))}\prod_{j=1}^m\left(2\sin\frac{t_j}2\right)^{-2\alpha_j}e^{it_j\beta_j}e^{-i\pi\beta_j}\left(1+\mathcal O\left(\frac{1}{Nu_+}\right)\right)\\
		\Phi_N(-1)&=(-1)^Ne^{\frac 12(V_0-V(-1))}\prod_{j=1}^m\left(2\cos\frac{t_j}2\right)^{-2\alpha_j}e^{it_j\beta_j}\left(1+\mathcal O\left(\frac{1}{Nu_-}\right)\right)
		\end{aligned}		
		\end{equation}
		as $N\to\infty$, uniformly over the region $M/N<t_1<\ldots<t_m<\pi-M/N$ with $M>0$ sufficiently large, and uniformly for $\alpha_j$ and $\beta_j$ in compact subsets of $[0,+\infty)$ and $i\mathbb R$ respectively.
	\end{proposition}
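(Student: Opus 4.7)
The plan is to apply the Deift--Zhou steepest descent method to the Riemann--Hilbert problem for $Y$ stated above, following the uniform RH analysis of \cite{Fahs}, and to read off $\Phi_N(\pm 1) = Y_{11}(\pm 1)$ from the resulting asymptotic formulas. Because $\alpha_0 = \alpha_{m+1} = 0$ and the hypothesis $M/N < t_1 < \ldots < t_m < \pi - M/N$ keeps each singularity $e^{\pm it_j}$ at distance of order at least $M/N$ from $\pm 1$, the two evaluation points lie outside every disk where a local parametrix is required. Therefore the asymptotics of $Y$ at $\pm 1$ are controlled by the global parametrix alone, which is explicit in terms of the Szeg\H{o} function of $g$.

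Concretely, I would carry out the standard chain of transformations $Y \mapsto T \mapsto S \mapsto R$: normalize at infinity by $T(z) = Y(z)$ for $|z|<1$ and $T(z) = Y(z) z^{-N\sigma_3}$ for $|z|>1$; factor the resulting jump across the unit circle and open lenses on both sides so that the jump on the circle becomes the pure Szeg\H{o}-type jump $\bigl(\begin{smallmatrix} 0 & g \\ -1/g & 0 \end{smallmatrix}\bigr)$; construct a global parametrix $P^{(\infty)}$ solving this jump using the Szeg\H{o} function; and insert confluent-hypergeometric local parametrices in small disks around each $e^{\pm it_j}$. The uniform small-norm argument of \cite{Fahs} then yields $R := S \cdot P^{-1} = I + \mathcal{O}(1/(Nu))$ on compact sets away from the Fisher--Hartwig points, where $u = \min(u_+, u_-)$; in particular the bound is $\mathcal{O}(1/(Nu_\pm))$ at $z = \pm 1$, uniformly for $\alpha_j$, $\beta_j$ in compact subsets of $[0,+\infty)$ and $i\mathbb{R}$.

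Unwinding the transformations at $z = \pm 1$, where $P^{(\infty)}$ has the diagonal form $D(\infty)^{\sigma_3} D_{\mathrm{ext}}(z)^{-\sigma_3}$ in the outer region, gives
\[
\Phi_N(\pm 1) \;=\; (\pm 1)^N\,\frac{D(\infty)}{D_{\mathrm{ext}}(\pm 1)}\,\bigl(1 + \mathcal{O}(1/(Nu_\pm))\bigr),
\]
up to exponentially small contributions from the opened lenses. It remains to compute the ratio $D(\infty)/D_{\mathrm{ext}}(\pm 1)$ explicitly. The Szeg\H{o} function of $g$ factorizes as a smooth piece coming from $e^V$ times one Fisher--Hartwig piece per pair $e^{\pm it_j}$. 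The smooth piece contributes $e^{(V_0 - V(\pm 1))/2}$ because $D(\infty) = e^{V_0/2}$ and, by summing the Fourier series using $V_k = V_{-k}$, $D_{\mathrm{ext}}(\pm 1) = e^{V(\pm 1)/2}$. The Fisher--Hartwig pieces, in the branch conventions of \eqref{def:FHU}, contribute $(2\sin(t_j/2))^{-2\alpha_j} e^{it_j\beta_j} e^{-i\pi\beta_j}$ at $z = 1$ and $(2\cos(t_j/2))^{-2\alpha_j} e^{it_j\beta_j}$ at $z = -1$. Multiplying these factors together produces exactly \eqref{eq:asympPhiFH}.

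The main obstacle is not the RH analysis itself, whose uniform version is already provided by \cite{Fahs}, but the careful bookkeeping of branch cuts and phases of the Szeg\H{o} factors attached to the Fisher--Hartwig singularities: the asymmetry between $z = 1$ and $z = -1$ (the $(-1)^N$ prefactor, the phase $e^{-i\pi\beta_j}$ at $+1$ versus its absence at $-1$, and the swap $2\sin(t_j/2) \leftrightarrow 2\cos(t_j/2)$) must be traced back to the principal branch chosen for $z^{\beta_j}$ in \eqref{def:FHU}. Uniformity in the positions of the $t_j$'s, including cases where several of them merge together, is inherited directly from the corresponding uniform bound on $R$ in \cite{Fahs}.
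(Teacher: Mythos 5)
The high-level route is the same as the paper's: invoke the uniform Riemann--Hilbert analysis of \cite{Fahs}, note that with $\alpha_0=\alpha_{m+1}=0$ the points $\pm 1$ lie outside every local-parametrix disk, and read $\Phi_N(\pm 1)$ off the $(1,1)$ entry of $R\,P^\infty$, with the Szeg\H{o}-factor bookkeeping supplying the explicit constants. Your computation of the Szeg\H{o}-function contributions, including the phase $e^{-i\pi\beta_j}$ at $+1$ versus its absence at $-1$ and the $(2\sin\frac{t_j}{2})\leftrightarrow(2\cos\frac{t_j}{2})$ exchange, is correct and matches the formula \eqref{eq:M} the paper uses.

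The genuine gap lies in the error estimate. You state that the small-norm argument of \cite{Fahs} ``yields $R=I+\mathcal O(1/(Nu))$ on compact sets away from the Fisher--Hartwig points, where $u=\min(u_+,u_-)$; in particular the bound is $\mathcal O(1/(Nu_\pm))$ at $z=\pm 1$.'' That does not follow from \cite{Fahs}. What \cite{Fahs} gives is $R(z)=I+\mathcal O(1/(N\mu))$, where $\mu$ is the minimal separation between the \emph{clusters} into which the singularities are partitioned, and this quantity can be much smaller than $u_\pm$ (for instance when two of the $t_j$'s merge at rate $1/N$ while staying far from $\pm1$). Moreover this bound is only established under a fixed clustering condition $(M_1,M_2,N)$, and the configuration of the $t_j$'s can drift between clustering regimes as $N$ grows. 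The paper has to do two extra things that your proposal omits: first, an iterated pigeonhole argument over a finite chain of clustering conditions $(M_1,M_2,N),\ldots,(M_m,M_{m+1},N)$ to ensure that some clustering condition always holds when $M/N<t_1<\cdots<t_m<\pi-M/N$, so that the \cite{Fahs} bound applies uniformly; and second, an improvement of the $R$-estimate specifically at $z=\pm1$, using the integral representation
\[
\|R(z)-I\|\le \frac{1}{2\pi}\left\|\int_{\Sigma}\frac{R_-(s)(\Delta(s)-I)}{z-s}\,ds\right\|,
\]
which, when evaluated at $z=\pm1$, trades the denominator $N\mu$ for $Nu_\pm$ because the contour stays at distance of order $u_\pm$ from $\pm1$. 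This refinement is what produces the one-sided error $\mathcal O(1/(Nu_\pm))$ in the statement (rather than a symmetric $\mathcal O(1/(N\mu))$), and it is not a direct consequence of the small-norm estimate. Also note that $\pm1$ is \emph{not} in a compact set bounded away from the singularities: the nearest singularity can be as close as $M/N$, so appealing to a ``compact sets away from the FH points'' statement begs the question you need to answer.
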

	\begin{proof}
		The analysis in \cite{Fahs} is based on partitioning the $2m$ singularities $e^{\pm i t_j}$ in different clusters. To do this, let us define for any $0<M_1< M_2$ the {\em clustering condition} $(M_1, M_2,N)$ as follows. We say that clustering condition $(M_1,M_2,N)$ is satisfied if the  set $A=\{t_1,\ldots, t_m, -t_1,\ldots, -t_m\}$ can be partitioned into $\ell\leq 2m$ clusters $A_1,\ldots, A_\ell$ such that the following holds:
		\begin{itemize}
			\item[(a)] for any two values $x,y\in A$ belonging to the same cluster $A_k$, we have $|x-y|\leq M_1/N$ or $\left||x-y|-2\pi\right|\leq M_1/N|$, which means that singularities corresponding to the same cluster approach each other fast enough as $N\to\infty$,
			\item[(b)] for any two values $x,y\in A$ belonging to a different cluster, we have $|x-y|>M_2/N$ and $\left||x-y|-2\pi\right|> M_2/N|$, which means that singularities corresponding to different clusters do not approach each other too fast as $N\to\infty$.
		\end{itemize}
		Note that any clustering condition is trivially satisfied if $m=0$. Observe also that different values of $M_1$ may lead to a different number of clusters $\ell$. Indeed, one cluster corresponding to a bigger value of $M_1$ may consist of the union of several clusters corresponding to a smaller value of $M_1$.
				Given $M_1>0$, partition the $\pm t_j$'s in $\ell=\ell(M_1)$ clusters $A_j$ as above, and define \begin{equation}\label{def:mu}\mu(M_1, M_2, N):=\min_{x\in A_k, y\in A_j, j\neq k}|x-y|,\end{equation} i.e.\ $\mu(M_1, M_2, N)$ is the minimal distance between arguments belonging to different clusters. Next, define arguments $\hat t_1,\ldots, \hat t_\ell$, also depending on $M_1$ and $N$, where $\hat t_j$ is the average of the arguments $t_k$ belonging to the cluster $A_j$. Under the clustering condition $(M_1,M_2,N)$ and if in addition $M>M_2/2$ and $M_2\geq 3M_1$, we have 
		\begin{equation}\label{ineqM1M2}\frac{3M_1}{N}\leq \frac{M_2}{N}\leq \mu(M_1,M_2,N)\leq 2u_\pm.
		\end{equation}
		
		The RH analysis in \cite{Fahs} (which is inspired by the one from \cite{Deift-Its-Krasovsky}) consists of explicit transformations
		\[Y\mapsto T\mapsto S\mapsto R,\]
		where $Y$ is given by \eqref{def:Y}, such that in particular we have $Y_{11}(z)=\Phi_N(z)$. 
		The transformations $Y\mapsto T$ and $T\mapsto S$ are similar as in \cite{Deift-Its-Krasovsky} and are fairly standard; the transformation $S\mapsto R$ consists of constructing local parametrices $P$ in disks $\mathcal U_j$ of radius $\mu(M_1,M_2,N)/3$ around each of the points $e^{i\hat t_j}$ for $j=1,\ldots, \ell$, and a global parametrix $P^\infty$ elsewhere in the complex plane. By \eqref{ineqM1M2}, every singularity $e^{\pm it_k}$ is contained in one of the disjoint disks $\mathcal U_j$, and the points $\pm 1$ are not contained in such a disk. Therefore, we do not need the precise form of the local parametrices.
		
					{Let us list more details about each of these transformations.}

		\textbf{Step 1.}
		Define
		\[T(z) = \begin{cases}
			Y(z),& |z|<1,\\
		Y(z)z^{-N\sigma_3},& |z|>1.
		\end{cases}
		\]
		
		\textbf{Step 2.}
		Define 
		\[
		S(z)=
		\begin{cases}
		T(z)
		\begin{pmatrix}
		1 & 0
		\\
	z^{-N}g(z)^{-1} & 1
		\end{pmatrix},& \text{when $z$ inside the lenses and outside the unit disc},
		\\
	T(z)
		\begin{pmatrix}
		1 & 0
		\\
		-z^{N}g(z)^{-1}& 1
		\end{pmatrix},& \text{when $z$ inside the lenses and inside the unit disc},
		\\
	T(z),&\text{when $z$ outside the lenses},
		\end{cases}
		\] where {$g$ is the analytic extension of $g$ defined in \eqref{def:FHU} to the interior parts of the lenses, see Figure \ref{pic lense} for the shape of the lenses and \cite[Section 4]{Deift-Its-Krasovsky} or \cite[Section 6]{Fahs} for an explicit expression of this analytic continuation.}
		
		\textbf{Step 3.}
	Define 
		\[
		R(z)=
		\begin{cases}
	S(z)
		P^\infty(z)^{-1},& z\in \mathbb C\setminus(\bigcup_j \mathcal U_j),
		\\
	S(z)
		P_j(z)^{-1},& z\in \mathcal U_j.
		\end{cases}
		\] Here $P^\infty$	is the {\em global parametrix} and $P_j$'s are {\em local parametrices}.
We will not need their general expressions, but we will need the value of the global parametrix evaluated at $z=\pm 1_{\pm}$, which is defined in \eqref{eq:M} below.		
		
		\begin{figure}[ht]
	\begin{center}

			\begin{tikzpicture}
			\draw[very thick, blue, postaction={decorate, decoration={markings, mark = at position 0.00 with {\arrow{>}}}}, postaction={decorate, decoration={markings, mark = at position 0.26 with {\arrow{>}}}}, postaction={decorate, decoration={markings, mark = at position 0.51 with {\arrow{>}}}},
			postaction={decorate, decoration={markings, mark = at position 0.75 with {\arrow{>}}}} ]
			 (0,0) circle (3cm);
				\draw[blue,very thick, postaction={decorate, decoration={markings, mark = at position 0.5 with {\arrow{<}}}} ](-2.42705,1.76336)[out=30, in=185] to  (0,2.4) [out=-5, in=150] to (2.42705,1.76336) ;	

				\draw[blue,very thick, postaction={decorate, decoration={markings, mark = at position 0.5 with {\arrow{<}}}} ]
				(-2.42705,1.76336)[out=70, in=180] to
				(0,3.5) [out=0, in=110] to (2.42705,1.76336) ;	
				\draw[blue,very thick, postaction={decorate, decoration={markings, mark = at position 0.5 with {\arrow{>}}}} ] (-2.42705,-1.76336)[out=-30, in=-185] to (0,-2.4) [out=5, in=-150] to (2.42705,-1.76336);
				\draw[blue,very thick, postaction={decorate, decoration={markings, mark = at position 0.5 with {\arrow{>}}}} ]
				(-2.42705,-1.76336)[out=-70, in=-180] to
				(0,-3.5) [out=0, in=-110] to (2.42705,-1.76336) ;
				
				\draw[blue,very thick, postaction={decorate, decoration={markings, mark = at position 0.5 with {\arrow{<}}}} ]  
			(-2.74064,-1.22021)[out=65, in=-90] to
				(-2.42,0) [out=90, in=-65] to (-2.74064,1.22021);
				
				\draw[blue,very thick, postaction={decorate, decoration={markings, mark = at position 0.5 with {\arrow{<}}}} ]  (-2.74064,-1.22021)[out=140, in=-90] to 
				(-3.5,0)
				 [out=90, in=-140] to (-2.74064,1.22021);
				
	\draw[blue,very thick, postaction={decorate, decoration={markings, mark = at position 0.5 with {\arrow{>}}}} ]  (2.74064,-1.22021)[out=115, in=-90] to
		(2.4,0) [out=90, in=-115] to (2.74064,1.22021);
				
		\draw[blue,very thick, postaction={decorate, decoration={markings, mark = at position 0.5 with {\arrow{>}}}} ]  
				(2.74064,-1.22021)[out=40, in=-90] to
				(3.5,0) [out=90, in=-40] to (2.74064,1.22021);
		
			\filldraw (0,0) circle (1pt);
			\node at (-0.2,-0.2){$0$};
			
				\filldraw (2.59808,1.5) circle (1.5pt);
				\node at (2.9,1.5){$z_1$};
				\filldraw (2.59808,-1.5) circle (1.5pt);
					\node at (2.9,-1.5){$\overline{z_1}$};

									\filldraw (-2.59808,1.5) circle (1.5pt);
				\node at (-2.9,1.5){$z_2$};
				\filldraw (-2.59808,-1.5) circle (1.5pt);
					\node at (-2.9,-1.5){$\overline{z_2}$};
			\end{tikzpicture}
				    
	\end{center}
		
			\caption{Opening of lenses in the case of $4$ singularities $z_1, \overline{z_1}, z_2,\overline{z_2}$ partitioned into four clusters.}\label{pic lense}
		\end{figure}
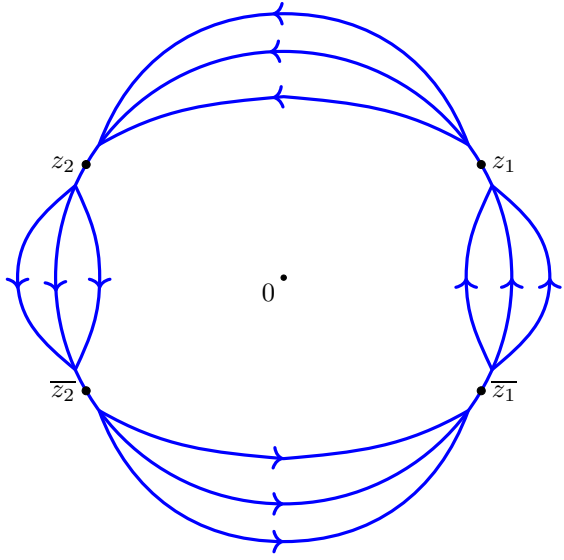

		For $z\to 1_+$ or $z\to -1_-$, the transformations $Y\mapsto T\mapsto S\mapsto R$ imply (see \cite[formulas (70), (75), (78), (83)]{Fahs})
		\begin{multline}\label{eq:YTSR}Y_{11}(\pm 1)=T_{11}(\pm 1_\pm)(\pm 1_\pm)^N=(\pm 1_\pm)^NS_{11}(\pm 1_\pm)-f(\pm 1_\pm)^{-1}S_{12}(\pm 1_\pm)\\=(\pm 1_\pm)^N(RP^\infty)_{11}(\pm 1_\pm)-f(\pm 1)^{-1}(RP^\infty)_{12}(\pm 1_\pm),\end{multline}
		where (see \cite[formulas (79) and (72)]{Fahs})
		\begin{equation}\label{eq:M}
		P^\infty(\pm 1_\pm)=e^{-\sum_{k=-\infty}^{-1}V_k\cdot(\pm 1)^k\sigma_3}\prod_{j=1}^m\left(1\mp e^{it_j}\right)^{(\beta_j-\alpha_j)\sigma_3}\left(1\mp e^{-it_j}\right)^{(-\beta_j-\alpha_j)\sigma_3},
		\end{equation}
		with the principal branch of the roots.

		The final conclusion of the RH analysis in \cite{Fahs} is the following: for any $M_1>0$, then for large enough $M_2>0$, we have $R(z)=I+\mathcal O\left(\frac{1}{N \mu(M_1,M_2,N)}\right)$, uniformly in $z$ as $N\to\infty$, and uniformly under clustering condition $(M_1,M_2,N)$.
		Let us now choose any value of $M_1>0$, and let $M_2\geq 3M_1$ be a constant induced by the above statement, i.e.\ let $M_2=M_2(M_1)$ be such that $R(z)=I+\mathcal O\left(\frac{1}{N \mu(M_1,M_2,N)}\right)$, uniformly in $z$ as $N\to\infty$ under clustering condition $(M_1,M_2,N)$.

		Next, we iterate by defining $M_3=M_3(M_2)\geq 3M_2$ as some value such that (noting that $\mu(M_2,N)\geq \mu(M_1,N)$)
		\[R(z)=I+\mathcal O\left(\frac{1}{N \mu(M_2,M_3,N)}\right)=I+\mathcal O\left(\frac{1}{N \mu(M_1,M_2,N)}\right),\] uniformly in $z$ and under clustering condition $(M_2,M_3,N)$ as $N\to\infty$. We iterate this procedure, which allows us to conclude that $R(z)=I+\mathcal O\left(\frac{1}{N \mu(M_1,M_2,N)}\right)$ uniformly as $N\to\infty$ under the $m$ disjoint clustering conditions
		\[(M_1,M_2,N), (M_2,M_3,N),\ldots, (M_{m}, M_{m+1}),\]
		for some increasing sequence $M_1<\ldots<M_{m+1}$.
		
		We now take $M> M_{m+1}/2$ and claim that for any configuration of $t_j$'s such that
		$M/N<t_1<\ldots<t_m<\pi-M/N$ and for sufficiently large $N$, at least one of the $m$ above clustering conditions hold. By contraposition, if this were false, there would be for any $k=1,\ldots, m$ a different value $j_k\in\{1,\ldots, m-1\}$ such that 
		\[M_k/N\leq t_{j_k+1}-t_{j_k}<M_{k+1}/N,
		\]
		since $t_0, \pi -t_m\geq M_{m+1}/(2N)$.
		This yields a contradiction by the pigeonhole principle.
		We can conclude that we have the uniform bound
		$R(z)=I+\mathcal O\left(\frac{1}{N \mu(M_1,M_2,N)}\right)$ as $N\to\infty$, where we recall that the constant $M_1>0$ was arbitrary, but its value has an influence on how large $M$ needs to be.
		
		This estimate is weaker than 	the one needed for \eqref{eq:asympPhiFH}, but we know in addition from \cite[formulas (84)--(85), (86), and (89)]{Fahs} that
		\begin{equation}\label{eq:intequation}
		\|R(z)-I\|\leq \frac{1}{2\pi}\left\|\int_{\Sigma}\frac{R_-(s)(\Delta(s)-I)}{z-s}ds\right\|,	
		\end{equation}
		where $\Delta(s)$ is a matrix-valued function (the jump matrix), and $\Sigma$ is the jump contour consisting of the circles $\partial \mathcal U_1,\ldots, \partial\mathcal U_\ell$, and $2\ell$ arcs connecting neighbouring circles by one arc inside and one arc outside the unit circle.
		On $\partial \mathcal U_j$, we have the uniform bound $\Delta(s)-I=\mathcal O(\frac{1}{N\mu(M_1,M_2,N)})$ as $N\to\infty$, on the arcs inside ($+$) or outside ($-$) we have $\Delta(s)-I=\mathcal O(|s|^{\pm N})$ as $N\to\infty$.
		Substituting this in \eqref{eq:intequation} and setting $z=\pm 1$, we obtain after straightforward estimates the uniform bound
		\[\|R(\pm 1)-I\|=\mathcal O\left(\frac{1}{Nu_\pm}\right),\qquad N\to\infty.\]	
		Finally, after all these preparations, the result \eqref{eq:asympPhiFH} follows upon substituting the asymptotics for $R(\pm 1)$ and \eqref{eq:M} in \eqref{eq:YTSR}.
	\end{proof}
	
We will now extend the above result to $\alpha_0, \alpha_{m+1}>0$ in \eqref{def:FHU}.
		\begin{proposition}\label{prop:asPhiFHb}
Writing $u_+=t_1$ and $u_-=\pi-t_m$, we have 
{\begin{equation}\label{eq:asympPhiFHb}
		\begin{aligned}
		\Phi_N(1)&=e^{\frac 12(V_0-V(1))}\frac{\sqrt{\pi}N^{\alpha_0}}{2^{2\alpha_0+\alpha_{m+1}}\Gamma(\alpha_0+\frac{1}{2})}\prod_{j=1}^m\left(2\sin\frac{t_j}2\right)^{-2\alpha_j}e^{it_j\beta_j}e^{-i\pi\beta_j}\left(1+\mathcal O\left(\frac{1}{Nu_+}\right)\right) ,\\
		\Phi_N(-1)&=(-1)^Ne^{\frac 12(V_0-V(-1))}\frac{\sqrt{\pi}N^{\alpha_{m+1}}}{2^{2\alpha_{m+1}+\alpha_{0}}\Gamma(\alpha_{m+1}+\frac{1}{2})}\prod_{j=1}^m\left(2\cos\frac{t_j}2\right)^{-2\alpha_j}e^{it_j\beta_j}\left(1+\mathcal O\left(\frac{1}{Nu_-}\right)\right),
		\end{aligned}
		\end{equation}
		}as $N\to\infty$, uniformly over the region $M/N<t_1<\ldots<t_m<\pi-M/N$ with $M>0$ sufficiently large, and uniformly for $\alpha_j$ and $\beta_j$ in compact subsets of $[0,+\infty)$ and $i\mathbb R$ respectively.
	\end{proposition}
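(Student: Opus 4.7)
The strategy is to redo the Deift/Zhou steepest-descent analysis used in the proof of Proposition \ref{prop:asPhiFH}, this time incorporating the two extra Fisher-Hartwig singularities at $\pm 1$. Because the symmetry $V(e^{it}) = V(e^{-it})$ forces the $\beta$-parameters at $\pm 1$ to be zero, these are pure root-type singularities of exponents $2\alpha_0$ and $2\alpha_{m+1}$. I retain the cluster disks around the interior singularities $e^{\pm it_j}$ exactly as in Proposition \ref{prop:asPhiFH}, and introduce two additional fixed-size disks $\mathcal U_+$ and $\mathcal U_-$ centered at $\pm 1$. Inside $\mathcal U_\pm$ I build the standard confluent-hypergeometric local parametrix $P_\pm$ appropriate to a root-type Fisher-Hartwig singularity, as in \cite{Deift-Its-Krasovsky}. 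The hypothesis $M/N < t_1$ and $t_m < \pi - M/N$ with $M$ large enough ensures that $\mathcal U_\pm$ is disjoint from every cluster disk for large $N$.

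The global parametrix $P^\infty$ picks up, in addition to the factors already present in \eqref{eq:M}, two Szeg\H o-type contributions associated with $|z-1|^{2\alpha_0}$ and $|z+1|^{2\alpha_{m+1}}$. Evaluated at $z=1$ (outside $\mathcal U_+$ but picking up the factor from the singularity at $-1$), this produces the additional multiplicative contribution $2^{-\alpha_{m+1}}$ in the $(1,1)$ entry, and symmetrically $2^{-\alpha_0}$ appears when evaluating at $z=-1$. The matching between $P_\pm$ and $P^\infty$ on $\partial\mathcal U_\pm$ is of the standard order $I + \mathcal O(1/N)$, so the pigeonhole/iterated clustering argument from Proposition \ref{prop:asPhiFH} carries over verbatim to yield the bound $R(\pm 1) = I + \mathcal O(1/(Nu_\pm))$.

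The main new computational step is the extraction of $\Phi_N(\pm 1) = Y_{11}(\pm 1)$: since $\pm 1$ now lies at the center of its own local parametrix disk, the analogue of \eqref{eq:YTSR} replaces $P^\infty$ by $P_\pm$. While each individual entry of $P_\pm$ is singular at the center, the specific linear combination producing $Y_{11}(\pm 1)$ has a finite limit determined by the boundary values of the confluent hypergeometric functions entering the parametrix, together with an explicit prefactor containing $N^{\alpha_0}$ (resp.\ $N^{\alpha_{m+1}}$) coming from the conformal map to the model problem. Using the Legendre duplication formula
\[
\Gamma(2\alpha)=\frac{2^{2\alpha-1}}{\sqrt{\pi}}\,\Gamma(\alpha)\,\Gamma\!\left(\alpha+\tfrac{1}{2}\right),
\]
this finite limit simplifies to $\sqrt{\pi}\,N^{\alpha_0}/(2^{2\alpha_0}\Gamma(\alpha_0+\tfrac{1}{2}))$ at $z=1$ and to its analogue at $z=-1$. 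Combining this with the $2^{-\alpha_{m+1}}$ (resp.\ $2^{-\alpha_0}$) picked up by the modified $P^\infty$ yields the prefactors in \eqref{eq:asympPhiFHb}.

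The main obstacle is precisely this limiting procedure at the Fisher-Hartwig point: the relevant entries of the local parametrix both diverge at the center, and their singular contributions must cancel exactly, with the correct constants, to produce the finite polynomial value. Once this cancellation is tracked carefully, the remainder of the proof is a matter of collecting the contributions from $R(\pm 1)-I$, from $P_\pm$ at the center, and from the modified $P^\infty$, and of verifying that the bound on $R(\pm 1)-I$ is preserved. The uniformity with respect to $\alpha_j$ and $\beta_j$ on the stated compact sets is inherited from the standard uniform construction of the confluent hypergeometric parametrices.
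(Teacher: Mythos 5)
Your proposal follows the same route as the paper: follow the Riemann--Hilbert scheme of \cite{Fahs} and \cite{Deift-Its-Krasovsky}, construct confluent hypergeometric local parametrices at $\pm 1$, evaluate $Y_{11}$ at the Fisher--Hartwig points by tracking the cancellation between the $\zeta^{\alpha_0}$ blow-up of the model solution and the $(z-1)^{-\alpha_0}$ blow-up of $P^\infty$, and simplify the resulting Gamma-function constants via the Legendre duplication formula. The structure of the resulting formula, including the factor $2^{-\alpha_{m+1}}$ contributed by $P^\infty$ at $z=1$ (and symmetrically $2^{-\alpha_0}$ at $z=-1$), is correctly identified.

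However, there is one genuine error: you claim to introduce \emph{fixed-size} disks $\mathcal U_\pm$ around $\pm 1$, and then assert that the hypothesis $M/N < t_1 < \ldots < t_m < \pi - M/N$ with $M$ large keeps $\mathcal U_\pm$ disjoint from the cluster disks. This is false as stated: $M/N \to 0$, so for any fixed disk radius $r$ one eventually has $t_1 < r$, at which point $e^{\pm i t_1}$ lies inside $\mathcal U_+$ and the local parametrix there no longer satisfies the correct jump conditions. The paper avoids this by taking the disk $\mathcal U_\pm$ to have radius $u_\pm/3$, shrinking with $u_+ = t_1$ (resp.\ $u_- = \pi - t_m$); then on $\partial\mathcal U_\pm$ the conformal variable $\zeta = N\log z$ has modulus of order $N u_\pm \gtrsim M$, which is large precisely because $M$ is large, and the matching error becomes $\mathcal O(1/(N u_\pm))$ rather than $\mathcal O(1/N)$. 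Your later claim that $R(\pm 1) = I + \mathcal O(1/(Nu_\pm))$ is the correct bound, but it does not follow from the fixed-disk construction you set up; the disk radius must scale with $u_\pm$ for the stated uniformity region to be covered. Once this is corrected, the rest of your outline matches the paper's argument, though the actual extraction of the constants (plugging the small-$\zeta$ asymptotics of the confluent hypergeometric function $\psi(a,c;\zeta)$ into the parametrix and multiplying by the local behavior of $P^\infty_{11}$) is left unperformed and would need to be carried out to make the proof complete.
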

	\begin{proof}
		We again follow the RH analysis from \cite{Fahs} to prove this, the main difference with the proof of Proposition \ref{prop:asPhiFH} being that the RH solution at the points $\pm 1$ is now approximated in terms of a local parametrix instead of the global parametrix. 
		
		Let  ${\cal U}_{\pm}$ be a disk with radius $\frac{u_\pm}{3}$, centered at $\pm 1$.
		The RH analysis from \cite{Fahs} requires to construct a local parametrix in ${\cal U}_{\pm}$. 
		We now have, because of the explicit transformations $Y\mapsto T\mapsto S\mapsto R$ in \cite{Fahs}, the identities
		\begin{multline}Y_{11}(\pm 1)=T_{11}(\pm 1_\pm)(\pm 1_\pm)^N=(\pm 1_\pm)^NS_{11}(\pm 1_\pm)-g(\pm 1_\pm)^{-1}S_{12}(\pm 1_\pm)\\=(\pm 1_\pm)^N(RP^\pm)_{11}(\pm 1_\pm)-g(\pm 1)^{-1}(RP^{\pm})_{12}(\pm 1_\pm),\end{multline}
		where $g(\pm 1)$ is the boundary value of $g$ when coming from the region inside the lenses in the upper half plane,
		where $P^{\pm}$ is the local parametrix defined in $\mathcal U_{\pm}$, and where $R$ is uniformly close to $I$ as $N\to\infty$. In order to obtain large $N$ asymptotics for $\Phi_N(\pm 1)$, we need to substitute the exact formula for $P^\pm$ and the large $N$ asymptotics for $R$. These computations have been done in \cite[Section 7]{Deift-Its-Krasovsky} (see in particular equations (7.23)--(7.26) in that paper, and note the different notations $\beta_j\mapsto -\beta_j$ and $\alpha_0\mapsto 2\alpha_0+\frac{1}{2}$, $\alpha_{m+1}\mapsto 2\alpha_{r+1}+\frac{1}{2}$), for the convenience of the reader we sketch these computations here, restricting ourselves to the situation in $\mathcal U_{+}$, as the case $\mathcal U_-$ is similar, and also restricting ourselves for simplicity to $\alpha_0\notin\mathbb Z$. The local parametrix $P^{+}$ then takes the form
\begin{equation}
 P^+(z)=E(z)\Psi(\zeta(z))g(z)^{-\frac{\sigma_3}{2}}z^{-N\sigma_3/2},
\end{equation}
	where $\zeta(z)=N\log z$, where $\Psi(\zeta)$ is the solution to a model RH problem (depending on $\alpha_0$, see \cite[Section 4.1]{Deift-Its-Krasovsky}) whose solution can be constructed out of confluent hypergeometric functions which in the case $\beta_0=0$ at hand degenerate to Bessel functions, and where $E$ is a function analytic at $\pm 1$. $E(z)$ and $\Psi(\zeta)$ can be found explicitly in formulas (4.25), (4.32) and (4.50) of \cite{Deift-Its-Krasovsky}. We have
$$E(z)=P^\infty(z)g(z)^{\frac{\sigma_3}{2}}z_j^{N\sigma_3/2}\begin{pmatrix}
		0&e^{2\pi i\alpha_j}\\-e^{-\pi i\alpha_j}&0
		\end{pmatrix},$$
			where $P^\infty$ behaves close to $1$,  outside the unit circle, in the following way (see \cite[formulas (79) and (72)]{Fahs}):
		\begin{align}
		P^\infty(z)\sim
		2^{-\alpha_{m+1}}(z-1)^{-\alpha_{0}\sigma_3} e^{-\sum_{k=-\infty}^{-1}V_k\sigma_3}\prod_{j=1}^m\left(1- e^{it_j}\right)^{(\beta_j-\alpha_j)\sigma_3}\left(1- e^{-it_j}\right)^{(-\beta_j-\alpha_j)\sigma_3}
		\end{align}
		as $z\to 1$ from outside the unit circle,
		where
		all the roots correspond to arguments in $(-\pi,\pi)$.

After a straightforward calculation we obtain, for $z\in\mathcal U_+$ in the region outside the unit circle and outside the lens, 
		\begin{align*}
		Y_{11}(z)&=z^N(RP_j)_{11}(z)-g( z)^{-1}(RP_j)_{12}(z)\\
&=z^NP^{\infty}_{11}(z) e^{2\pi i\alpha_j}\left(\Psi_{22}(\zeta(z))+e^{2\pi i\alpha_0}\Psi_{21}(\zeta(z))\right)\left(1+\mathcal O\left(\frac{1}{Nu_+}\right)\right)\end{align*}
as $N\to\infty$,
where $\Psi_{21}(\zeta)$ and $\Psi_{22}(\zeta)$	are entries of $\Psi(\zeta)$ in a certain sector of the complex plane, given by
		\[
	\Psi_{21}(\zeta)=-\zeta^{-\alpha_0}e^{-3\pi i\alpha_0}e^{-\zeta/2}\psi(1-\alpha_0,1-2\alpha_0,\zeta)\frac{\Gamma(1+\alpha_0)}{\Gamma(\alpha_0)}\]
			and
			\[\Psi_{22}(\zeta)=\zeta^{-\alpha_0}e^{-\pi i\alpha_0}e^{\zeta/2}\psi(-\alpha_0,1-2\alpha_0,e^{-\pi i}\zeta),\]
where $\psi(a,c;z)$	is the confluent hypergeometric function of the second kind with, in the case where $\alpha_0\notin  \mathbb Z$,  the standard expansion of $\psi(a,c;z)$ as $z\to 0$,
$$\psi(a,c;z)=\frac{\Gamma(1-c)}{\Gamma(1+a-c)}(1+\mathcal O(z))+\frac{\Gamma(c-1)}{\Gamma(a)}z^{1-c}(1+\mathcal O(z)).$$
Substituting these asymptotics, we obtain after a straightforward computation 
$$Y_{11}(z)=z^NP^{\infty}_{11}(z) e^{\pi i\alpha_0}\zeta(z)^{\alpha_0}\frac{\Gamma(-2\alpha_0)}{\Gamma(-\alpha_0)}\left(e^{\zeta(z)/2}e^{-2\pi i \alpha_0}+e^{-\zeta(z)/2}\right)\left(1+\mathcal O\left(\frac{1}{Nu_+}\right)\right). $$
		Subsituting the above asymptotics for $P^\infty$ and $\zeta(z)=N\log z$, we obtain
\begin{align}
  	\Phi_N(1)  &=e^{\frac 12(V_0-V(1))}N^{\alpha_0}\prod_{j=1}^m\left(2\sin\frac{t_j}2\right)^{-2\alpha_j}e^{it_j\beta_j}e^{-i\pi\beta_j}\frac{\cos{\pi\alpha_0}}{2^{\alpha_{m+1}-1}}\frac{\Gamma(-2\alpha_0)}{\Gamma(-\alpha_0)}\left(1+\mathcal O\left(\frac{1}{Nu_+}\right)\right) 
  \end{align}
  as $N\to\infty$.
  Using the reflection formula and the doubling formula for the Gamma function, as well as the relation $\Gamma(1+z)=z\Gamma(z)$, we obtain the statement of the proposition. The other cases, namely the asymptotics for $\Phi_N(1)$ for $\alpha_0\in\mathbb Z$ and the asymptotics for $\Phi_N(-1)$ can be obtained in a similar way, we refer the reader to \cite[Section 7]{Deift-Its-Krasovsky} for details.
 	\end{proof}			
	\begin{figure}[ht]
	\begin{center}

			\begin{tikzpicture}
			\draw[very thick, blue, postaction={decorate, decoration={markings, mark = at position 0.22 with {\arrow{>}}}}, postaction={decorate, decoration={markings, mark = at position 0.51 with {\arrow{>}}}},
			postaction={decorate, decoration={markings, mark = at position 0.79 with {\arrow{>}}}} ]
			 (0,0) circle (3cm);
			
				\draw[blue,very thick, postaction={decorate, decoration={markings, mark = at position 0.5 with {\arrow{<}}}} ] (-2.42705,1.76336)[out=30, in=170] to (0.604805,2.42574) [out=-10, in=120] to (2.9708,0.417519);	

				\draw[blue,very thick, postaction={decorate, decoration={markings, mark = at position 0.5 with {\arrow{<}}}} ]
				(-2.42705,1.76336)[out=70, in=170] to
				(0.846727,3.39604) [out=-10, in=90] to (2.9708,0.417519) ;	
				\draw[blue,very thick, postaction={decorate, decoration={markings, mark = at position 0.5 with {\arrow{>}}}} ] (-2.42705,-1.76336)[out=-30, in=-170] to (0.604805,-2.42574) [out=10, in=-120] to (2.9708,-0.417519);
				\draw[blue,very thick, postaction={decorate, decoration={markings, mark = at position 0.5 with {\arrow{>}}}} ]
				(-2.42705,-1.76336)[out=-70, in=-170] to
				(0.846727,-3.39604) [out=10, in=-90] to (2.9708,-0.417519) ;
				
				\draw[blue,very thick, postaction={decorate, decoration={markings, mark = at position 0.5 with {\arrow{<}}}} ]  
			(-2.74064,-1.22021)[out=65, in=-90] to
				(-2.42,0) [out=90, in=-65] to (-2.74064,1.22021);
				
				\draw[blue,very thick, postaction={decorate, decoration={markings, mark = at position 0.5 with {\arrow{<}}}} ]  (-2.74064,-1.22021)[out=140, in=-90] to 
				(-3.5,0)
				 [out=90, in=-140] to (-2.74064,1.22021);

			\filldraw (0,0) circle (1pt);
			\node at (-0.2,-0.2){$0$};
			
				\filldraw (2.99817,0.104698) circle (1.5pt);
				\node at (3.3,0.2){$z_1$};
				\filldraw (2.99817,-0.104698) circle (1.5pt);
					\node at (3.3,-0.2){$\overline{z_1}$};

									\filldraw (-2.59808,1.5) circle (1.5pt);
				\node at (-2.9,1.5){$z_2$};
				\filldraw (-2.59808,-1.5) circle (1.5pt);
					\node at (-2.9,-1.5){$\overline{z_2}$};
			\end{tikzpicture}
				    
	\end{center}
		
			\caption{Opening of lenses in the case of $4$ singularities $z_1, \overline{z_1}, z_2,\overline{z_2}$ partitioned into three clusters.}\label{pic lense2}
			\end{figure}
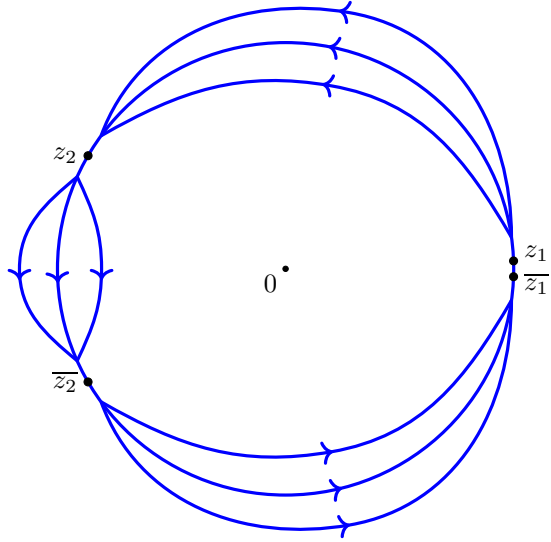

	\begin{proposition}\label{prop:asPhiFH2}
		We have 
		\begin{equation}\label{eq:asympPhiFH2}
		\begin{aligned}
		\Phi_N(1)&=\prod_{j=1}^m\left(\sin\frac{t_j}{2}+\frac{1}{n}\right)^{-2\alpha_j}\times e^{\mathcal O(1)},\\
		\Phi_N(-1)&=\prod_{j=1}^m\left(\cos\frac{t_j}{2}+\frac{1}{n}\right)^{-2\alpha_j}\times e^{\mathcal O(1)},
		\end{aligned}
		\end{equation}
		as $N\to\infty$, uniformly over the entire region $0<t_1<\ldots<t_m<\pi$, and uniformly for $\alpha_j$ and $\beta_j$ in compact subsets of $[0,+\infty)$ and $i\mathbb R$ respectively.
	\end{proposition}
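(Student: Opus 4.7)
The proof will follow the RH steepest-descent framework of \cite{Fahs} set up in the proofs of Propositions \ref{prop:asPhiFH} and \ref{prop:asPhiFHb}, but with the clustering procedure extended to allow the singularities $e^{\pm it_j}$ to lie arbitrarily close to the evaluation points $\pm 1$. First, I would partition the collection $\{\pm t_1,\ldots,\pm t_m\}\cup\{0,\pi\}$ into clusters $A_1,\ldots,A_\ell$ in the sense of the clustering condition $(M_1,M_2,N)$ used for Proposition \ref{prop:asPhiFH}. The key qualitative difference is that now $0$ (or $\pi$) may fall into a cluster containing one or several $t_j$'s; whenever that happens, $1$ (resp.\ $-1$) is enclosed in a local-parametrix disk and the value $\Phi_N(\pm 1)$ must be read off from the local parametrix rather than from the global one.

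Next, starting from the identity
\[
\Phi_N(\pm 1)=(\pm 1_\pm)^N(RP)_{11}(\pm 1_\pm)-g(\pm 1)^{-1}(RP)_{12}(\pm 1_\pm),
\]
where $P=P^\infty$ or $P=P_k$ depending on whether $\pm 1$ lies outside every cluster or inside a cluster $A_k$, I would invoke the general estimate of \cite{Fahs} that $R$ is uniformly bounded as $N\to\infty$ (in fact $R=I+o(1)$ in the joint parameter range we need, but a uniform bound of order $e^{\mathcal O(1)}$ is all that is necessary here). When $\pm 1$ lies outside every cluster, the explicit formula \eqref{eq:M} for $P^\infty(\pm 1_\pm)$ produces a factor $\prod_j\bigl(2\sin\tfrac{t_j}{2}\bigr)^{-2\alpha_j}$ (respectively $\prod_j\bigl(2\cos\tfrac{t_j}{2}\bigr)^{-2\alpha_j}$), up to $\beta_j$-dependent phases and an overall $V$-dependent factor, all of which are absorbed into $e^{\mathcal O(1)}$. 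When $\pm 1$ lies inside a cluster, the local parametrix $P_k$ is built from the model RH solution of \cite{Fahs} (assembled out of Bessel/confluent hypergeometric functions) in a rescaled variable $\zeta=N\log z$; the scaled positions of the singularities within the cluster, as well as the scaled position of $\pm 1$, stay in a compact set by the very definition of the cluster, so $P_k(\pm 1_\pm)$ is of order $e^{\mathcal O(1)}$ uniformly in the parameters. The $t_j$'s belonging to such a cluster contribute a factor $N^{2\alpha_j}\cdot e^{\mathcal O(1)}$ rather than $(2\sin\tfrac{t_j}{2})^{-2\alpha_j}$, which is precisely the content of replacing $\sin\tfrac{t_j}{2}$ by $\sin\tfrac{t_j}{2}+\tfrac{1}{n}$ in the final formula.

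To make these uniform in the whole region $0<t_1<\ldots<t_m<\pi$, I would use the iterated pigeonhole argument of Proposition \ref{prop:asPhiFH}: a finite increasing sequence of thresholds $M_1<M_2<\cdots<M_{m+2}$ can be produced so that for any configuration of the $t_j$'s at least one clustering condition $(M_k,M_{k+1},N)$ is verified; the point here is that even when $0$ or $\pi$ is incorporated into the set being clustered, the maximum number of clusters is still bounded by $2m+2$, so finitely many thresholds suffice. This yields the claimed bound uniformly.

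The main obstacle will be establishing the uniform $e^{\mathcal O(1)}$ control of the local parametrix $P_k$ evaluated at $\pm 1$ when that point is inside a cluster of several Fisher--Hartwig singularities. This ultimately reduces to a statement about the model RH problem from \cite{Fahs}: its solution, viewed as a function of the rescaled positions of the enclosed singularities and of the evaluation point, is analytic and in particular locally bounded on compact subsets of the parameter domain, uniformly in $\alpha_j,\beta_j$ on compacts. Once this boundedness is in place, keeping track only of the explicit $\sin\tfrac{t_j}{2}$ or $\cos\tfrac{t_j}{2}$ factors produced by $P^\infty$ and interpolating with the $1/n$ rescaling on the clustered side delivers \eqref{eq:asympPhiFH2}.
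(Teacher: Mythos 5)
Your overall plan matches the paper's: use the clustering framework of \cite{Fahs}, read $\Phi_N(\pm 1)$ off the local parametrix whenever $\pm 1$ falls inside a cluster, and handle uniformity via the iterated pigeonhole argument. But there is a genuine gap in the key step, and it is precisely the step you flag as ``the main obstacle.''

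You assert that because the rescaled positions $w_{j,N}=Nt_j$ of the singularities in the cluster stay in a compact set, the local parametrix $P_k$ evaluated at $\pm 1$ is of order $e^{\mathcal O(1)}$, and that the model RH solution of \cite{Fahs} is ``analytic and locally bounded on compact subsets of the parameter domain.'' This is false in the regime that actually matters. The clustering condition bounds $w_{j,N}$ from above (by $M_1/2$) but does not bound them away from zero, and the model RH solution $\Phi(\zeta;w_1,\dots,w_{\mu_1})$ is \emph{not} bounded as some $w_j\to 0$: it has a power-law singularity there. In the paper's notation one has
\[
P(1_+)=E(1)\,\Phi(0;w_{1,N},\dots,w_{\mu_1,N})\,
\begin{pmatrix}0&g(1_+)^{1/2}\\ g(1_+)^{-1/2}&0\end{pmatrix},
\]
where $E(1)$ is indeed bounded uniformly, but neither $\Phi(0;w)$ nor $g(1_+)^{\pm 1/2}$ is. Your claim that $P_k(\pm 1_\pm)=e^{\mathcal O(1)}$ is moreover inconsistent with your own subsequent claim that the cluster $t_j$'s produce a factor $N^{2\alpha_j}$: if both $R$ and $P_k(\pm 1_\pm)$ were bounded, then $Y_{11}(\pm 1)=(RP_k)_{11}(\pm 1_\pm)$ would be bounded too, and there would be no $N^{2\alpha_j}$.

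What the paper actually does is to distinguish two sub-cases inside a cluster. When all $w_{j,N}>\epsilon$, continuity of $\Phi(0;\cdot)$ (from \cite[Section 5.3]{Fahs}) gives $Y_{11}(1)=\mathcal O(g(1_+)^{-1/2})$, and the $(\sin\tfrac{t_j}{2})^{-2\alpha_j}$ factors come from $g(1_+)^{-1/2}$ alone. When some $w_{j,N}\to 0$, one must invoke the \emph{additional} local parametrix $Q$ constructed inside the model RH problem of \cite[Section 5.3]{Fahs}, which yields the factorization
\[
\Phi(0;w_1,\dots,w_{\mu_1})=F_N\prod_{\nu=1}^k(-\i w_\nu)^{2\alpha_\nu\sigma_3}\,U_N\begin{pmatrix}1&-1\\1&0\end{pmatrix}D,
\]
with $F_N$, $U_N$, $D$ of unit determinant and $F_N$ uniformly bounded. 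This gives $\log\Phi_{22}(0;w)=\sum_\nu 2\alpha_\nu\log|w_\nu|+\mathcal O(1)$, and combining this with the $g(1_+)^{-1/2}$ factor is exactly what produces the $N^{2\alpha_j}$, i.e.\ the $(\sin\tfrac{t_j}{2}+\tfrac1n)^{-2\alpha_j}$ behavior. Without this precise rate of blow-up of $\Phi(0;w)$ as $w\to 0$, the argument does not close; ``analyticity and local boundedness on compacts'' does not supply it.
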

	\begin{proof}
		We again follow the RH analysis from \cite{Fahs} to prove this. We restrict to the computation of $\Phi_N(+1)$, as the computation of $\Phi_N(-1)$ is similar, or can be derived from $\log\Phi_N(+1)$ after transforming the symbol by a rotation.
		Also, we can restrict to the case $t_1\leq M/N$ for some large $M>0$, since the case $t_1>M/N$ was handled in Proposition \ref{prop:asPhiFH} and this implies the weaker result \eqref{eq:asympPhiFH}. 
		
		Let us take $M_1>2M$, such that $2t_1\leq M_1/N$, 
		and define the clusters $A_1,\ldots, A_\ell$, depending on $M_1$ and on $N$, as before. 
		The points $\pm t_1$ will then belong to the same cluster, which we label as $A_1$. By restricting to a subsequence of the positive integers $N$, we can assume that the numbers of points in each cluster are independent of $N$.
		We write $2\mu_1$ for the number of points in $A_1$, such that $A_1=\{e^{\pm it_k}\}_{k=1}^{\mu_1}$, and we observe that the average of the points in $A_1$ is equal to $\hat t_1=0$.
		Next, we
		write ${\cal U}_1$ for the disk with radius $\mu(M_1,M_2,N)/3$ centered at $1$, with $\mu(M_1,M_2,N)$ given by \eqref{def:mu}, and we use the local transformation 
		$\zeta(z)=N\log z$ for $z\in {\cal U}_1$. We have $\zeta(1)=0$ and we define $w_{k,N}=-i\zeta(e^{it_k})=Nt_k$ for $1\leq k\leq\mu_1$. Note that $w_{k,N}\leq M_1/2$ for all $k\leq \mu_1$ because of the clustering condition.

		The RH analysis from \cite{Fahs} requires us to construct a local parametrix in ${\cal U}_1$. 
		We now have, because of the explicit transformations $Y\mapsto T\mapsto S\mapsto R$ in \cite{Fahs} (see Figure \ref{pic lense2} for the shape of the jump contour for $S$ in this case), the identities
		\[
		Y_{11}(1)=(1_+)^NT_{11}(1_+)=S_{11}(1_+)=(RP)_{11}(1_+),   
		\]
		where $R(1)$ is bounded as $N\to\infty$, uniformly under clustering condition $(M_1,M_2,N)$ for sufficiently large $M_2$ (it is in fact close to $I$, but we will not need this). The corresponding lenses are described in Figure \ref{pic lense2}.
 Moreover, $P$ is the local parametrix defined in $\mathcal U_1$. The construction of this local parametrix is explained in detail in \cite[Section 6.3]{Fahs}. We omit the technical details of this construction, and restrict ourselves to the elements from it that we need for our purposes.
		 As $z\to 1_+$, we have 
		\begin{equation}
		P(1_+)=E(1)\Phi(0;w_{1,N},\dots,w_{\mu_1,N})\begin{pmatrix}
		0&g(1_+)^{\frac{1}{2}}\\g(1_+)^{-\frac{1}{2}}&0
		\end{pmatrix},
		\end{equation}
		where $\Phi(\zeta;w_{1},\dots,w_{\mu_1})$ is the solution to a model RH problem  depending on parameters $w_1,\ldots, w_{\mu_1}$,
		$E(1)$ is given by
		\begin{equation}
		E(1)= P^\infty(1_+)\begin{pmatrix}
		0&1\\1&0
		\end{pmatrix}\prod_{\nu=1}^{\mu_1}\left(-iw_{\nu,N}\right)^{\beta_\nu\sigma_3}\exp[\pi i (\alpha_\nu-\beta_\nu)\sigma_3]g(1_+)^{-\frac{1}{2}\sigma_3},\end{equation}
		with $P^\infty(1_+)$ the global parametrix given by \eqref{eq:M}.
		It is easily seen from this expression that $E(1)$ is  bounded as $N\to\infty$, uniformly in the parameters $t_1,\ldots, t_m$.
		
		Since $P^\infty(1_+)$ is diagonal, $E(1)$ is off-diagonal and after a straightforward calculation we obtain
		\begin{align}
		Y_{11}(1)&=(RP)_{11}(1_+)\nonumber\\
		&=E_{12}(1)\Phi_{22}(0;w_{1,N},\dots,w_{\mu_1,N})g(1_+)^{-\frac{1}{2}}+\mathcal O\left(\frac{g(1_+)^{-\frac{1}{2}}\Phi(0;w_{1,N},\dots,w_{\mu_1,N})}{N\mu(M_1,M_2,N)}\right),\label{eq:Y11step2}
		\end{align}
		as $N\to\infty$, uniformly under clustering condition $(M_1,M_2,N)$ for $M_2$ large enough.
		
		The matrix $\Phi(0;w_1,\dots,w_{\mu_1})$ is continuous as a function of $w_1,\ldots, w_{\mu_1}>\epsilon$ for any $\epsilon$, see \cite[Section 5.3]{Fahs}, and this implies that
		\[Y_{11}(1)=\mathcal O(g(1_+)^{-1/2}),\qquad N\to\infty,\]
		uniformly under clustering condition $(M_1,M_2,N)$ for $M_2$ large enough and with $w_{1,N},\ldots, w_{\mu_1,N}>\epsilon$ for some $\epsilon>0$, which implies the result in this case by \eqref{def:FHU}.

		In order to evaluate the asymptotics of $\Phi(0;w_1,\dots,w_{\mu_1})$ when some of the $w_j$'s, say $w_1,\ldots, w_k$, tend to $0$ as $N\to\infty$, we need to follow the construction of another local parametrix $Q$ in \cite[Section 5.3]{Fahs}. We again omit the details of this construction and refer the interested reader to \cite{Fahs}.
		The result from this construction is that
		\[\Phi(0;w_1,\dots,w_{\mu_1})=F_N\prod_{\nu=1}^k(-iw_{\nu})^{2\alpha_\nu\sigma_3}U_N\begin{pmatrix}1&-1\\1&0\end{pmatrix}D,\]
		where $F_N$ is uniformly bounded as $N\to\infty$, $U_N$ is upper-triangular, and $D$ is a diagonal matrix independent of $N$, and the determinants of $F_N,D,U_N$ are all equal to $1$.
		It follows that
		\[\log\Phi_{22}(0;w_1,\dots,w_{\mu_1})=\sum_{\nu=1}^{k}{2\alpha_\nu}\log|w_{\nu}|+\mathcal O(1).\]
		Substituting this in \eqref{eq:Y11step2} and recalling that $E(1)$ is uniformly bounded as $N\to\infty$, we get
		\[\log Y_{11}(1)=-\frac{1}{2}\log g(1_+)+\sum_{\nu=1}^{k}{2\alpha_\nu}\log |w_{\nu,N}|+\mathcal O(1)
		=-2\sum_{j=1}^m\alpha_j\log |1-e^{it_j}|+2\sum_{\nu=1}^{k}{\alpha_\nu}\log (Nt_j)+\mathcal O(1).
		\]
		It is straightforward to derive the result from this estimate.
		\end{proof}

	\subsection{Proofs of Theorem \ref{theorem:asymptoticsFH} and Theorem \ref{theorem:asymptoticsFH-merging}}
	Under the assumptions of Theorem \ref{theorem:asymptoticsFH}, we have by Proposition \ref{theorem:identityaverages} and {Propositions \ref{prop:asPhiFH}--\ref{prop:asPhiFHb} } that
	\begin{align*}
	\mathbb E^{(0,+)}_{n}[f]&=\left[\frac{\mathbb E^{\mathbb U}_{2n}[g]}{-\Phi_{2n-1}(1)\Phi_{2n-1}(-1)}\right]^{1/2}=C_n \left[\mathbb E^{\mathbb U}_{2n}[g]\right]^{1/2}(1+o(1)),
	\end{align*}
	as $n\to\infty$, where
	$C_n$ is as in \eqref{def:C}.	The asymptotics for $\mathbb E^{(2,-)}_{n}[f]$ and $	\mathbb E^{(1,\pm)}_{n}[f]$ follow in a similar fashion. This ends the proof of Theorem \ref{theorem:asymptoticsFH}.
	
	\medskip
	Under the assumptions of Theorem \ref{theorem:asymptoticsFH-merging}, we use Proposition \ref{theorem:identityaverages} and Proposition \ref{prop:asPhiFH2} to obtain the uniform large $N$ asymptotics
	\begin{align*}
	\mathbb E^{(0,+)}_{n}[f]&=\left(\mathbb E^{\mathbb U}_{2n}[g]\right)^{1/2} [{-\Phi_{2n-1}(1)\Phi_{2n-1}(-1)}]^{-\frac{1}{2}}\\
	&=\left(\mathbb E^{\mathbb U}_{2n}[g]\right)^{\frac{1}{2}}\prod_{j=1}^m\left(\sin \frac{t_j}{2}+\frac{1}{n}\right)^{\alpha_j}\left(\cos \frac{t_j}{2}+\frac{1}{n}\right)^{\alpha_j}\times e^{\mathcal O(1)}\\
	&=\left(\mathbb E^{\mathbb U}_{2n}[g]\right)^{\frac{1}{2}}\prod_{j=1}^m\left(\sin t_j+\frac{1}{n}\right)^{\alpha_j}\times e^{\mathcal O(1)}.
	\end{align*}
	By similar computations, we obtain the required asympotics for $\mathbb E^{(2,-)}_{n}[f]$ and $\mathbb E^{(1,\pm)}_{n}[f]$. This ends the proof of Theorem \ref{theorem:asymptoticsFH-merging}.
	
\section{Symbols with a gap or an emerging gap}\label{section:gap}
	
In this section, we assume that $g(e^{it})$ 
defined by 
\eqref{def:g}
is of the form \eqref{def:EG}, i.e.\
	\[g(e^{it})=e^{V(e^{it})}\ \times \ \begin{cases}1&\mbox{for $0\leq |t|\leq t_0$},\\
	s&\mbox{for $t_0<|t|\leq\pi$},\end{cases}
	\]
	for some real-valued function $V$ analytic in a neighborhood of the unit circle, and with $s\in[0,1]$.

	\subsection{Asymptotics for $\Phi_N(\pm 1)$}
	\label{sectPhin1}
	{Let} $\Phi_N$ be the monic polynomial of degree $N$, orthogonal with the weight $g$ on the unit circle, characterized by the orthogonality conditions \eqref{def:OP}. 
	The proof of the following result is based on the RH representation for $\Phi_N(z),$ see Section \ref{subsection:RHPOP}, and on the large $N$ asymptotic analysis of the RH problem in spirit of the analysis performed
 in \cite{Charlier-Claeys2}. 
We do not follow exactly the steps of transformations from \cite{Charlier-Claeys2}, but introduce a slightly different sequence of transformations. The most significant differences of our analysis from the one done in \cite{Charlier-Claeys2} is that, first, during the step $Y\mapsto T$ we make a cosmetic transformation inside the unit disk, $|z|<1,$  and second, the function $\phi$ used in Step 3 is different from the one in \cite{Charlier-Claeys2}: they coincide up to a constant for $|z|>1$ but have opposite signs for $|z|<1.$

\begin{proposition}\label{thrm_1b}
		Let $V$ be as in Theorem \ref{theorem:asymptoticsgap}.
		As $N\to\infty$ with $s=0$, or as $N\to\infty$ and at the same time $s\to 0$ in such a way that $s\leq  \left(\tan\frac{t_0}{4}\right)^{2N}$, we have
		the large $N$ asymptotics
		\[
		\begin{split}
		&
		\Phi_{N}(1)=\sqrt{2}\,\cos\frac{t_0-\pi+(-1)^N\pi}{4}\(\sin\frac{t_0}{2}\)^{N}
		e^{-\frac{V(1)}{2}}\delta(\infty)^{-1}(1+o(1)),
		\\
		&\Phi_{N}(-1)=(-1)^N
		\cos\frac{t_0}{4}
		\(1+\cos\frac{t_0}{2}\)^{N}\delta_-(-1)\delta(\infty)^{-1}(1+o(1)).
		\end{split}
		\]
These asymptotics are also valid as $t_0\to\pi$, as long as $N(\pi-t_0)\to\infty$. 
The $o(1)$ terms can be written as
$\mathcal{O}\(\frac{1}{N(\pi-t_0)} + s \, (\tan\frac{t_0}{4})^{-2N}\, \frac{1}{\sqrt{N(\pi-t_0)}} \).$
\end{proposition}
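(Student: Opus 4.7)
The plan is to derive the asymptotics from a Deift--Zhou steepest-descent analysis of the RH problem for $Y$ recalled in Section~\ref{subsection:RHPOP}, since $\Phi_N(z)=Y_{11}(z)$, and then to evaluate the resulting asymptotic formula at $z=\pm 1$. The analysis is closely modeled on the one in \cite{Charlier-Claeys2}, but with two tweaks anticipated in the text above the statement: a cosmetic modification of $T$ inside the unit disk, and a choice of $g$-function $\phi$ that agrees with the one of \cite{Charlier-Claeys2} for $|z|>1$ but has opposite sign for $|z|<1$. These modifications are needed to keep the jumps on the inner lenses exponentially decaying while simultaneously arranging that $P^\infty$ and the local parametrices extend to $z=\pm 1$.

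I would carry out the analysis in four steps. First, the normalization $Y\mapsto T$ removes the factor $z^{N\sigma_3}$ at infinity and applies the cosmetic conjugation inside the unit disk, so that $T$ is normalized and its jumps on $\mathcal C$ involve only $z^{\pm N}g(z)$. Second, the opening of lenses $T\mapsto S$ factors the jump along the band arc $\gamma$ (from $e^{-it_0}$ through $1$ to $e^{it_0}$), turning the oscillating part into jumps on the lens contours that tend to $I$ exponentially fast; on the complementary gap arc the jump carries the $s$-dependence and is either trivial ($s=0$) or bounded in norm by $s(\tan\tfrac{t_0}{4})^{-2N}$, by the standard estimate for $z^{\pm N}$ on the gap arc. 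Third, a further transformation $S\mapsto R$ peels off a global parametrix $P^\infty$ outside small disks $\mathbb D_{\pm}$ around $e^{\pm it_0}$, and local parametrices $P^{(\pm)}$ built from Bessel (or confluent hypergeometric) model functions inside these disks, as in \cite{Charlier-Claeys2}. The function $\delta$ defined in \eqref{def:delta} is precisely the Szeg\H{o}-type function needed to absorb the analytic factor $e^V$ into $P^\infty$ on $\gamma$, so that $P^\infty$ solves a constant-coefficient RH problem on $\gamma$ up to a $\delta^{\sigma_3}$ conjugation; this is where $\delta(\infty)$ and $\delta_-(-1)$ will enter the final formulas.

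Fourth, a small-norm argument for $R$ yields $R(z)=I+o(1)$ uniformly on compact subsets away from $\gamma$, with the error bounded by the sum of the matching error on $\partial\mathbb D_{\pm}$, which is of order $1/(N(\pi-t_0))$ because the size of the relevant disks shrinks as $t_0\to\pi$, and the gap contribution $s(\tan\tfrac{t_0}{4})^{-2N}/\sqrt{N(\pi-t_0)}$ coming from the $\mathcal C$--integral against the jump on the gap. Unwinding $Y=T\cdot(\text{standard factor}) = (RP^\infty)\cdot(\text{standard factor})$ at $z=\pm 1$, the value of $P^\infty$ there is a product of $\delta(\infty)^{-1}$, $\delta_-(-1)$ (for $z=-1$), and explicit trigonometric factors coming from the Szeg\H{o} function of the constant-coefficient jump on $\gamma$. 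Evaluating the overall multiplier contributed by the $\phi$ function at $\pm 1$ produces $(\sin\tfrac{t_0}{2})^N$ at $z=1$ and $(1+\cos\tfrac{t_0}{2})^N$ at $z=-1$, while the parity factor $\cos\tfrac{t_0-\pi+(-1)^N\pi}{4}$ at $z=+1$ reflects which of the two boundary values of $P^\infty$ at the cut endpoint $1$ is picked up depending on the parity of $N$.

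The main obstacle will be the uniformity as $t_0\to\pi$, i.e., when the evaluation point $z=-1$ moves close to the edge $e^{it_0}$ where the local parametrix $P^{(+)}$ replaces $P^\infty$. Handling this requires choosing the radii of $\mathbb D_{\pm}$ to shrink like $\pi-t_0$, tracking the dependence of the local model on the rescaled variable $N\log(z/e^{it_0})$, and showing that $-1$ still lies outside $\mathbb D_{+}$ for $N(\pi-t_0)\to\infty$. A subsidiary difficulty is verifying that the contribution from the jump on the gap arc, once $s>0$ but is exponentially small, indeed satisfies the precise bound stated, which is where the factor $(\tan\tfrac{t_0}{4})^{-2N}$ and the $\sqrt{N(\pi-t_0)}$ enter through a saddle-point estimate of the $R$-equation's integrand at $\pm 1$.
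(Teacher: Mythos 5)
Your sketch follows the paper's proof very closely: the chain $Y\mapsto T\mapsto\widehat T\mapsto\widetilde T\mapsto S\mapsto R$ with the cosmetic modification of $T$ inside the disk, the Szeg\H{o} function $\delta$, the $\phi$-function that changes sign inside the unit disk, Bessel parametrices at $e^{\pm it_0}$ in disks of radius proportional to $\pi-t_0$, and the final evaluation of $(RP^{\infty})_{11}$ at $\pm 1$ producing exactly the factors $(\sin\tfrac{t_0}{2})^N$, $(1+\cos\tfrac{t_0}{2})^N$, $\delta(\infty)^{-1}$, $\delta_-(-1)$ and the parity cosine. All of that matches the paper.

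The one piece you leave implicit is that the paper builds a \emph{third} local parametrix $P_l$ centered at $z=-1$, not only the two Bessel parametrices at $e^{\pm it_0}$. Its form is $P_l(z)=P^{\infty}(z)\,G_l(z)$ where $G_l$ is a rank-one correction built from the Cauchy transform $f(z)=\frac{1}{2\pi i}\int_{\gamma^c}e^{\frac{N}{2}(\phi_--\phi_+)}e^{V}\frac{\delta_-}{\delta_+}\frac{d\xi}{\xi-z}$ of the residual $s$-dependent jump on $\gamma^c$; the saddle point of $\phi_--\phi_+$ at $-1$ then gives $|f|=\mathcal O\!\bigl((\tan\tfrac{t_0}{4})^{-2N}/\sqrt{N(\pi-t_0)}\bigr)$ on $\partial U_{-1}$. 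You do gesture at ``a saddle-point estimate of the $R$-equation's integrand at $\pm 1$,'' which is the same mechanism, but without this local parametrix the jump matrix for $R$ on $\gamma^c$ is only bounded by $s(\tan\tfrac{t_0}{4})^{-2N}$ pointwise, and a naive small-norm estimate would yield an error $\mathcal O\!\bigl(\frac{1}{N(\pi-t_0)}+s(\tan\tfrac{t_0}{4})^{-2N}\bigr)$, missing the extra $1/\sqrt{N(\pi-t_0)}$ in the stated bound. Constructing $P_l$ is also what makes the evaluation of $Y_{11}$ at $-1$ (a point lying \emph{on} $\gamma^c$) clean, because it removes the jump of $R$ across $\gamma^c$ in a neighbourhood of $-1$ and leaves only the boundary value $R(-1-0)$ of a function with small jumps on $\partial U_{-1}$. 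You should either add this local parametrix, or argue directly that the saddle-point estimate of the Cauchy integral of $R_-(\Delta-I)$ over $\gamma^c$ at $z=\pm 1$ furnishes the extra $1/\sqrt{N(\pi-t_0)}$; both routes lead to the stated error term.
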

\begin{remark}Note that when $s=0$ or $s\ll(\tan\frac{t_0}{4})^{2N},$ the first error term $\frac{1}{N(\pi-t_0)}$ dominates the second. On the other hand, when $s$ is close to $(\tan\frac{t_0}{4})^{2N},$ the second error term becomes dominant, and is $\mathcal{O}(\frac{1}{\sqrt{N(\pi-t_0)}}).$
Furthermore, when $t_0$ is not approaching $\pi$, the factor $(\pi-t_0)$ in the error terms can be omitted, but as $t_0\to\pi,$ the error term becomes larger due to it.
\end{remark}

Denote \[\gamma=\left\{z: |z|=1,\,\arg z\in(-t_0,t_0)\right\},\qquad \gamma^c=\left\{z: |z|=1, \,\arg z\in(t_0,\pi)\cup(-\pi,-t_0)\right\},\] both oriented in the counter-clockwise direction.

\begin{proof}
The asymptotic analysis of the RH problem from Section \ref{subsection:RHPOP} 
can be done using the following
  steps of transformations, 
		\[Y\mapsto T\mapsto \widehat T\mapsto \widetilde T\mapsto S\mapsto R.\]
		Here the transformation $Y\mapsto T$ normalizes the asymptotics at infinity, while the transformations $T\mapsto \widehat T\mapsto \widetilde T$ are preparatory transformations before opening of the lenses. Then, $\widetilde T\mapsto S$ consists of opening of the lenses, and $S\mapsto R$ is the final transformation to pass to a small-norm RH problem; this step involves construction of parametrices. We start by giving some more details about each of these transformations.

\medskip	
\noindent
\textbf{Step 1.}
		Define
		\[T(z) = \begin{cases}
		Y(z)z^{-N\sigma_3},& |z|>1,
		\\
		Y(z)\begin{pmatrix}
		0&-1\\1&0
		\end{pmatrix},& |z|<1.
		\end{cases}
		\]
Here the transformation for $|z|>1$ aims at improving the large $z$ asymptotics of $Y$, while the transformation for $|z|<1$ is a cosmetic one, which makes factorizations at further steps more transparent. $T(z)$ has the asymptotics $T(z)\to I$ as $z\to\infty$ and satisfies the jump $T_+(z)=T_-(z)\begin{pmatrix}g(z) & -z^N\\ z^{-N} & 0\end{pmatrix}$ for $z$ on the unit circle $\mathcal{C}.$

\medskip
\noindent	
\textbf{Step 2.}
The jump for $T(z)$ is highly oscillating for $z\in\mathcal{C},$ and the next step is to factorize it into product of two matrix functions, which can then be moved respectively inside or outside the unit disk where they would be exponentially small. This is done differently for $z\in\gamma$ and for $z\in\gamma^c,$ and we start with $\gamma.$ The idea is to exchange the term $g(z)$ in the $(1,1)$ entry of the jump for the $T$ with $1;$ an appropriate factorization will then easily follow. This is achieved with the help of the following function $\delta(z),$
		\[\delta(z)
		=\exp\left\{
		\frac{h(z)}{2\pi i}
		\int\limits_{\gamma}
		\frac{V(\zeta)d\zeta}{(\zeta-z)h(\zeta)}
		\right\},
		\qquad
		\mbox{ where the function } h(\zeta)=\((\zeta-z_0)(\zeta-\ol{z_0})\)^{1/2}\]
is analytic in $\zeta\in\mathbb{C}\setminus\gamma^c$ and asymptotic to $\zeta$ as $\zeta\to\infty.$ 
The function $\delta$ is analytic in $\mathbb{C}\setminus\mathcal{C},$ has a finite non-zero limit as $\zeta\to\infty$, and its boundary values satisfy the following conjugation conditions on the circle $\mathcal{C}:$
\[\delta_+(z)\delta_-(z)=1,\ z\in\gamma^c,\qquad \frac{\delta_+(z)}{\delta_-(z)}=\e^{V(z)},\ z\in\gamma.\]
Using the properties $V(z)=V(z^{-1})$  for $|z|=1$ and $h(\zeta)=\zeta h(\zeta^{-1}),$ one can check that for all $z$ we have $\delta(z)\delta(z^{-1})=1$ and $\ol{\delta(\ol z)}=\delta(z).$ Let
		\[\widehat T(z)=
		\delta(\infty)^{\sigma_3}T(z)\delta(z)^{-\sigma_3}.\]
$\widehat T$ tends to $I$ as $z\to\infty$ and satisfies the following jumps:
\[
\widehat T_+(z) = \widehat T_-(z)
\begin{pmatrix}
1 & \frac{-z^N\delta_+(z)^2}{\e^{V(z)}}
\\
\frac{1}
{z^N \delta_-(z)^2 \e^{V(z)}} & 0
\end{pmatrix}, z\in\gamma,
\quad
\widehat T_+(z) = \widehat T_-(z)
\begin{pmatrix}
s \e^{V(z)}\, \frac{\delta_-(z)}{\delta_+(z)} 
& 
-z^N
\\
z^{-N} & 0
\end{pmatrix}, z\in\gamma^c,
\]
We see that the jump matrix on $\gamma$ can be factorized into a product of a lower-triangular and an upper-triangular matrix with ones on the diagonals, and this allows to ``open lenses'' around $\gamma,$ in other words allows to get rid of oscillating entries on $\gamma$ by transforming them into exponentially small ones on lenses. However, we still have oscillating entries on $\gamma^c,$ and we cannot follow the same strategy as for $\gamma$ (i.e., to transform the $(1,1)$ entry in the jump matrix to $1$). Instead, we transform off-diagonal entries into constant ones, by introducing the following function $\phi(z),$ which is to replace the function $\log z$ in $z^N=\e^{N\log z},$ and thus to transform the entries $z^{\pm N}$ into $1.$

\medskip
\noindent 
\textbf{Step 3.}
Define
\[
\phi(z)=\int\limits_{z_0}^{z}\frac{(\zeta+1)\mathrm{d}\zeta}{\zeta h(\zeta)}+\pi i,
\qquad\ell=-2\log\sin\frac{t_0}{2}>0,\] 
where the path of integration should not cross $(-\infty,0]\cup \gamma^c.$ Then one can check that
 $\phi(z)-\ln z = \ell +\mathcal{O}(z^{-1})$ as $z\to\infty,$ and $\ol{\phi(\ol z)}=\phi(z)$ for all $z,$ and $\phi(z)-\ln z$ is analytic in $\mathbb{C}\setminus\gamma^c,$ where the principal branch of the logarithm is taken. The function $\phi_-(z)-\phi_+(z)$ is continuous and real-valued on $\gamma^c,$ and its maximum over $\gamma^c$ is attained at the point $-1,$ with 
$\phi_-(-1)-\phi_+(-1) = -4\ln\tan\frac{t_0}{4}>0.$
  Let
\[\widetilde T(z)=e^{\frac{N}{2}(\ell-\pi\i)\sigma_3}\widehat T(z)e^{-\frac{N}{2}(\phi(z)-\pi\i-\log z)\sigma_3},\]		

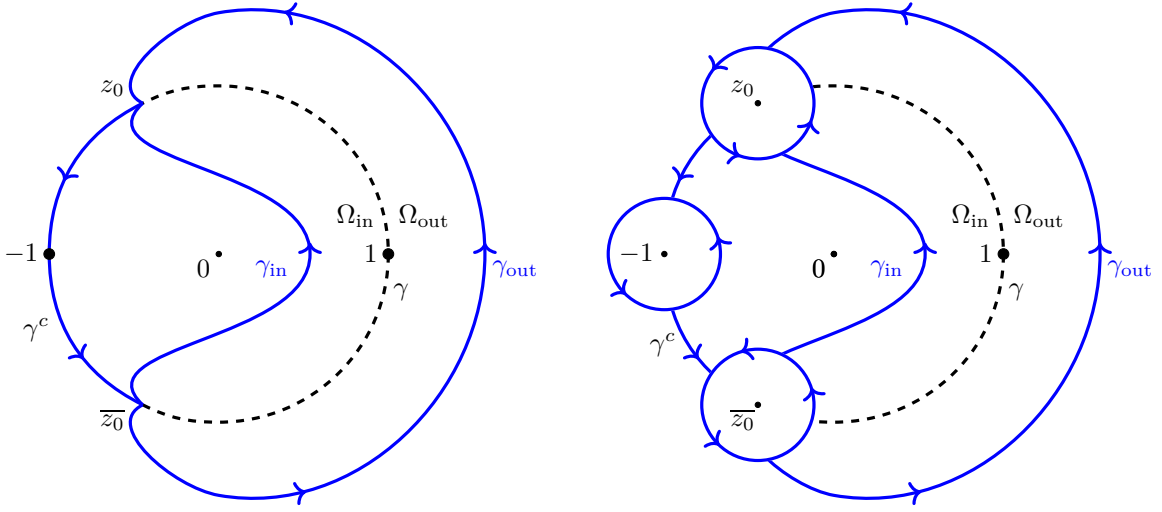
\begin{figure}[ht]		
\begin{tikzpicture}
%
\draw[very thick, blue, postaction={decorate, decoration={markings, mark = at position 0.3 with {\arrow{>}}}},
postaction={decorate, decoration={markings, mark = at position 0.8 with {\arrow{>}}}} ]
(-1,2) [out=-153.434948, in=90] 
to (-2.23,0)[out=-90, in=153.434948] 
to (-1,-2);
\draw[dashed, very thick] 
(-1,-2) [out=-26.565, in=180] 
to (0, -2.23) [out=0, in=-90]
to (2.23,0) [out=90, in=0] 
to (0,2.23) [out=180, in=26.565]
to (-1,2);
%
%
\draw[blue,very thick, postaction={decorate, 
decoration={markings, mark = at position 0.5 with {\arrow{<}}}}, 
postaction={decorate, 
decoration={markings, mark = at position 0.2 with {\arrow{<}}}},
postaction={decorate, 
decoration={markings, mark = at position 0.8 with {\arrow{<}}}}
] 
(-1,2) [out=153.434948, in=190] 
to (0,3.2) [out=10, in=90]
to (3.5,0) [out=-90, in=-10]
to (0,-3.2) [out=170, in=-153.434948]
to (-1,-2);
%
\draw[blue,very thick, postaction={decorate, decoration={markings, mark = at position 0.5 with {\arrow{<}}}} ] 
(-1,2) [out=-136.56, in=90] 
to (1.2,0) [out=-90, in=136.56] 
to (-1,-2);
%
\node at (-1.4, 2.2){$z_0$}; 
\node at (-1.4, -2.2){$\ol{z_0}$};
\node at (2.7,0.5){$\Omega_{\rm out}$};
\node at (1.8,0.5){$\Omega_{\rm in}$};
\node at (3.9,-0.2){\color{blue}$\gamma_{\rm out}$};
\node at (-2.4,-1){$\gamma^c$}; \node at (2.4,-0.5){$\gamma$};
\node at (0.7,-0.2){$\color{blue}\gamma_{\rm in}$};
\node at (2.0,0){$1$};   \filldraw (2.23,0) circle (2pt);
\node at (-2.6,0){$-1$}; \filldraw (-2.23,0) circle (2pt);
\filldraw (0,0) circle (1pt);
\node at (-0.2,-0.2){$0$};
\end{tikzpicture}
\qquad
\begin{tikzpicture}
%
\draw[very thick, blue, postaction={decorate, decoration={markings, mark = at position 0.3 with {\arrow{>}}}},
postaction={decorate, decoration={markings, mark = at position 0.8 with {\arrow{>}}}} ]
(-1,2) [out=-153.434948, in=90] 
to (-2.23,0)[out=-90, in=153.434948] 
to (-1,-2);
\draw[dashed, very thick] 
(-1,-2) [out=-26.565, in=180] 
to (0, -2.23) [out=0, in=-90]
to (2.23,0) [out=90, in=0] 
to (0,2.23) [out=180, in=26.565]
to (-1,2);
%
%
\draw[blue,very thick, postaction={decorate, 
decoration={markings, mark = at position 0.5 with {\arrow{<}}}}, 
postaction={decorate, 
decoration={markings, mark = at position 0.2 with {\arrow{<}}}},
postaction={decorate, 
decoration={markings, mark = at position 0.8 with {\arrow{<}}}}
] 
(-1,2) [out=153.434948, in=190] 
to (0,3.2) [out=10, in=90]
to (3.5,0) [out=-90, in=-10]
to (0,-3.2) [out=170, in=-153.434948]
to (-1,-2);
%
\draw[blue,very thick, postaction={decorate, decoration={markings, mark = at position 0.5 with {\arrow{<}}}} ] 
(-1,2) [out=-136.56, in=90] 
to (1.2,0) [out=-90, in=136.56] 
to (-1,-2);
%
\node at (-1.4, 2.2){$z_0$}; 
\node at (-1.4, -2.2){$\ol{z_0}$};
\node at (2.7,0.5){$\Omega_{\rm out}$};
\node at (1.8,0.5){$\Omega_{\rm in}$};
\node at (3.9,-0.2){\color{blue}$\gamma_{\rm out}$};
\node at (-2.25,-1.2){$\gamma^c$}; \node at (2.4,-0.5){$\gamma$};
\node at (0.7,-0.2){$\color{blue}\gamma_{\rm in}$};
\node at (2.0,0){$1$};   \filldraw (2.23,0) circle (2pt);
\node at (-2.6,0){$-1$}; \filldraw (-2.23,0) circle (2pt);
\filldraw (0,0) circle (1pt);
\node at (-0.2,-0.2){$0$};
			
\filldraw[white, very thick] (-1,2) circle (21pt);
\draw[blue, very thick, 
decoration={markings, mark=at position 0.4 with {\arrow{>}}},
decoration={markings, mark=at position 0.7 with {\arrow{>}}},
decoration={markings, mark=at position 0.95 with {\arrow{>}}},
postaction={decorate}
] 
(-1,2) circle (21pt);
			
\filldraw[white, very thick] (-1,-2) circle (21pt);
\draw[blue, very thick,
decoration={markings, mark=at position 0.05 with {\arrow{>}}},
decoration={markings, mark=at position 0.3 with {\arrow{>}}},
decoration={markings, mark=at position 0.63 with {\arrow{>}}},
postaction={decorate}] (-1,-2) circle (21pt);

\filldraw[white, very thick] (-2.23, 0) circle (21pt);
\draw[blue, very thick,
decoration={markings, mark=at position 0.05 with {\arrow{>}}},
decoration={markings, mark=at position 0.63 with {\arrow{>}}},
postaction={decorate}] (-2.23, 0) circle (21pt);

\node at (-2.6,0){$-1$}; \filldraw (-2.23,0) circle (1pt);

\node at (-1.2, 2.2){$z_0$}; \node at (-1.2, -2.2){$\ol{z_0}$};
\filldraw (-1,2) circle (1pt); \filldraw (-1,-2) circle (1pt);
\filldraw (0,0) circle (1pt);
\node at (-0.2,-0.2){$0$};
\end{tikzpicture}
\caption{Jump contour for $S$ (on the left), and for $R$ (on the right).}
\label{Figure1}
\end{figure}
\noindent then
$\widetilde T(z)\to I$ as $z\to\infty$ and $\widetilde T$ satisfies the following jumps:
\[\begin{split}
&
\widetilde T_+(z) = \widetilde T_-(z)
\begin{pmatrix}
1 & \frac{\e^{N(\phi(z)-\pi\i)}\delta_+(z)^2}{-\e^{V(z)}}
\\
\frac{\e^{-N(\phi(z)-\pi\i)}}
{\delta_-(z)^2 \e^{V(z)}} & 0
\end{pmatrix}
=
\begin{pmatrix}
1 & 0
\\
\frac{\e^{-N(\phi(z)-\pi\i)}}
{\delta_-(z)^2 \e^{V(z)}} & 1
\end{pmatrix}
\begin{pmatrix}
1 & \frac{\e^{N(\phi(z)-\pi\i)}\delta_+(z)^2}{-\e^{V(z)}}
\\
0 & 1
\end{pmatrix}, z\in\gamma,
\\
&
\widetilde T_+(z) = \widetilde T_-(z)
\begin{pmatrix}
s \e^{\frac{N}{2}(\phi_-(z)-\phi_+(z))}\e^{V(z)}\, \frac{\delta_-(z)}{\delta_+(z)} 
& 
-z^N
\\
z^{-N} & 0
\end{pmatrix}, z\in\gamma^c.
\end{split}\]

\medskip\noindent\textbf{Step 4.}
		The next step is the opening of  lenses around $\gamma.$ Consider the regions as indicated in the left part of  Figure \ref{Figure1}, and define
		\[
		S(z)=
		\begin{cases}
		\widetilde T(z)
		\begin{pmatrix}
		1 & 0
		\\
		\delta(z)^{-2}e^{-V(z)}(-1)^Ne^{-N (\phi(z))} & 1
		\end{pmatrix},& z\in \Omega_{\rm out},
		\\
		\widetilde T(z)
		\begin{pmatrix}
		1 & \delta(z)^2e^{-V(z)}(-1)^Ne^{N\phi(z)}
		\\
		0& 1
		\end{pmatrix},& z\in \Omega_{\rm in},
		\\
		\widetilde T(z),&\mbox{elsewhere}.
		\end{cases}
		\]
\textbf{Step 5a.}
Now, we take $r>0$ sufficiently small (but fixed) and we define parametrices, i.e.\ local approximations, for $S$ as follows. Let $U_{z_0}, U_{\ol{z_0}}$, $U_{-1}$ be (non-intersecting) disks centered at $z_0, \ol{z_0}, -1$, respectively, of the radius $r\cos\frac{t_0}{2};$ their boundaries $\partial U_{z_0}, \partial U_{\ol{z_0}}, \partial U_{-1}$ are oriented in the counter-clockwise direction. Define (we use the letters $u$(up), $d$(down), $l$(left) to distinguish between the parametrices at the points $z_0, \ol z_0, -1,$ respectively; see also the right part of Figure \ref{Figure1})
\[
P(z)
=
\begin{cases}
P^{\infty}(z),& z\in\mathbb{C}\setminus (U_{z_0}\cup U_{\ol z_0} \cup U_{-1}),
\\
P_u(z),& z\in U_{z_0}, 
\\
P_d(z),& z\in U_{\ol z_0}, 
\\
P_{l}(z),& z\in U_{-1}, 
\end{cases}
\]
We see that the radius $r\cos\frac{t_0}{2}$ of the disks shrinks as $t_0$ approaches $\pi.$
For us, the explicit expressions for the local parametrices $P_u$ and $P_d$ will be unimportant because we only need to evaluate $\Phi_N$ at the points $\pm 1;$ however, we will still need them in order to estimate the error term.
The form of the outer parametrix $P^\infty$ on the other hand is more important: it is given by
\[P^{\infty}(z)
=\begin{pmatrix}
\frac12(\kappa(z)+\kappa(z)^{-1}) & \frac{i}{2}(\kappa(z)-\kappa(z)^{-1}) \\  \frac{-i}{2}(\kappa(z)-\kappa(z)^{-1})
&
\frac12(\kappa(z)+\kappa(z)^{-1})\end{pmatrix},
\]
where
\[
\kappa(z)=\(\frac{z-\ol{z_0}}{z-z_0}\)^{1/4},
\] 
analytic in $z\in\mathbb{C}\setminus\gamma^c$ and asymptotic to $1$ at infinity.
Note that
$\kappa(1) = e^{ i(\pi-t_0)/4}
,\
\kappa_-(-1)=e^{-i t_0/4}.
$

\subsubsection*{Step 5b: Local parametrix at $z_0.$}

\textbf{Change of variable.}
First of all, the linear fractional change of variable $k=k(z)=\frac{1-z\,\ol{z_0}}{z-\ol{z_0}}$ maps the points of the unit circle to the real line as follows:
$$z_0\mapsto 0,\quad -1\mapsto -1, \quad \ol{z_0}\to\infty,\quad 1\mapsto1,$$
and thus allows to separate the points $z_0,-1, \ol{z_0},$ which might be merging as $t_0\to\pi.$
Next, using the variable $k$ the function $\phi(z)$ can be written as
\begin{equation}\label{phi_k}
\phi(z)=\pi\i - 2\i\cos\frac{t_0}{2}\int_0^{k(z)}\frac{(k+1)\d k}{(k+z_0)(k+\ol{z_0})\sqrt{k}},
\end{equation}
where the path of integration does not intersect $(-\infty,0],$ and the principal branch of the square root is taken. This prompts to introduce a local variable $\zeta=\zeta(z;t_0)$ in the disk $U_{z_0}$ as follows:
$\phi(z)=:\pi\i -4\i\cos\frac{t_0}{2}\sqrt{\zeta},$ so that $\zeta = k(1+\mathcal{O}(k)), k\to 0,$
and the branch cut for $\sqrt{\zeta},$ i.e. the half-line $\zeta<0,$ corresponds to $z\in\gamma^c.$
Introduce also the new large parameter $\tau:=2N\cos\frac{t_0}{2},$ then
$N(\phi(z)-\pi\i) = -2\i\tau\sqrt{\zeta}.$

\medskip\noindent
\textbf{Bessel parametrix.}
Similarly as e.g.\ in \cite[Section 6]{KVV} (but note the different sign of the off-diagonal entries of the jump matrices), we construct a function which solves exactly the same jumps as $S$ in a small neighborhood of the point $z_0.$ 
Define
\[
\begin{split}
\Psi(\zeta)&=\begin{pmatrix}
\sqrt{\pi}\,\e^{\frac{-\pi\i}{4}}\sqrt{\zeta} I_1(-\i\sqrt{\zeta}) &
\frac{-1}{\sqrt{\pi}} \e^{\frac{\pi\i}{4}} 
 \sqrt{\zeta} K_1(-\i\sqrt{\zeta})
\\
-\sqrt{\pi} \,\e^{\frac{\pi\i}{4}}  I_0(-\i\sqrt{\zeta}) & 
\frac{1}{\sqrt{\pi}} \e^{\frac{-\pi\i}{4}}  K_0(-\i\sqrt{\zeta})
\end{pmatrix},\quad \arg\zeta\in(\pi-\alpha,\pi),
\\
&
=\begin{pmatrix}
\frac{1}{\sqrt{\pi}}
\e^{\frac{\pi\i}{4}}
 \sqrt{\zeta} K_1(\i\sqrt{\zeta}) &
\frac{-1}{\sqrt{\pi}}
\e^{\frac{\pi\i}{4}} \sqrt{\zeta} K_1(-\i\sqrt{\zeta})
\\
\frac{1}{\sqrt{\pi}}
\e^{\frac{-\pi\i}{4}} K_0(\i\sqrt{\zeta}) & 
\frac{1}{\sqrt{\pi}}
\e^{\frac{-\pi\i}{4}} K_0(-\i\sqrt{\zeta})
\end{pmatrix},\quad \arg\zeta\in(-\pi+\alpha, \pi-\alpha),
\\
&
=
\begin{pmatrix}
\frac{1}{\sqrt{\pi}}
\e^{\frac{\pi\i}{4}}
 \sqrt{\zeta} K_1(\i\sqrt{\zeta}) &
\sqrt{\pi}\e^{\frac{-\pi\i}{4}} \sqrt{\zeta} I_1(\i\sqrt{\zeta})
\\
\frac{1}{\sqrt{\pi}}
\e^{\frac{-\pi\i}{4}} K_0(\i\sqrt{\zeta}) & 
\sqrt{\pi}
\e^{\frac{\pi\i}{4}} I_0(\i\sqrt{\zeta})
\end{pmatrix},\quad \arg\zeta\in(-\pi, -\pi+\alpha),
\end{split}
\]
where $\alpha\in(0,\pi)$ and $I_j, K_j, j=0,1$ are the modified Bessel functions
\cite[Chapter 9.6]{AbrSte}. The function $\Psi$ satisfies the jump conditions 
$\Psi_+(\zeta)=\Psi_-(\zeta)\begin{bmatrix}1 & 0 \\1 & 1\end{bmatrix},\zeta\in(\infty\e^{\i(\pi-\alpha)},0);$
$\Psi_+(\zeta)=\Psi_-(\zeta)\begin{bmatrix}1 & -1 \\0 & 1\end{bmatrix},\zeta\in(\infty\e^{-\i(\pi+\alpha)},0);$
$\Psi_+(\zeta)=\Psi_-(\zeta)\begin{bmatrix}0 & -1 \\1 & 0\end{bmatrix},\zeta\in(0, -\infty),$
where the orientation of the segments is from the first mentioned point to the last one, and $(\infty\e^{\i\beta},0)$ denotes the ray coming from infinity to the origin at an angle $\beta\in\mathbb{R}$ (see the left part of Figure \ref{FigBessel}).
Besides, the function $\Psi$ satisfies the uniform in $\arg\zeta\in[-\pi,\pi]$ asymptotics
\[\Psi(\zeta)=\zeta^{\frac{\sigma_3}{4}}
\frac{1}{\sqrt{2}}\begin{pmatrix}1 & -\i \\ -\i & 1\end{pmatrix}
\mathcal{E}(\zeta)\e^{-\i\sqrt{\zeta}\sigma_3},
\quad
\mathcal{E}(\zeta)=I+\mathcal{O}(\frac{1}{\sqrt{\zeta}}),\qquad \zeta\to\infty.
\]
We will also need the function $\widehat\Psi(\zeta):=\Psi(\zeta)G(\zeta),$
where 
\[G(\zeta):=\begin{cases}
I-s\frac{1}{2\pi\i}\ln\zeta \begin{pmatrix}0&1\\0&0\end{pmatrix},&\arg\zeta\in(\pi-\alpha, \pi),
\\
I-s\frac{1}{2\pi\i}\ln\zeta \begin{pmatrix}1&1 \\ -1 & -1\end{pmatrix},&\arg\zeta\in(-\pi+\alpha, \pi-\alpha),
\\
I+s\frac{1}{2\pi\i}\ln\zeta \begin{pmatrix}0&0\\1&0\end{pmatrix},&\arg\zeta\in(-\pi, -\pi +\alpha).
\end{cases}\]
The function $\widehat\Psi$ satisfies the jumps as in the right part of Figure \ref{FigBessel}.

For $z: |z-z_0|<r \cos\frac{t_0}{2},$ define
\[P_u(z) = B_u(z) \widehat\Psi(\tau^2\zeta)\delta(z)^{-\sigma_3}
\e^{-\frac12V(z)\mathrm{sgn}(\ln|z|)\sigma_3}
\e^{-\frac{N}{2}(\phi(z)-\pi\i)\sigma_3},\]
where $B_u(z)=P^{(\infty)}(z)\delta(z)^{\sigma_3}
\e^{\frac12V(z)\mathrm{sgn}(\ln|z|)\sigma_3}
\frac{1}{\sqrt{2}}\begin{bmatrix}1&\i \\ \i&1\end{bmatrix}(\tau^2\zeta)^{-\frac{\sigma_3}{4}}$
and $B_u(z)$ is analytic in $U_{z_0}$ (i.e., does not have jumps across $\mathcal{C}$). Here $\mathrm{sgn}(x)=\frac{x}{|x|}$ is the signum function, so that
$\e^{-\frac12V(z)\mathrm{sgn}(\ln|z|)\sigma_3}$ equals $\e^{\frac12V(z)\sigma_3}$ for $|z|<1$ and equals $\e^{-\frac12V(z)\sigma_3}$ for $|z|>1.$
The function $P_u(z)$ satisfies the same jumps as $S(z)$ inside $U_{z_0},$ and on the boundary $\partial U_{z_0}$
we have the following matching condition:
\[
\begin{split}
&P(z)P^{(\infty)}(z)^{-1}
=
P^{(\infty)}(z)\delta(z)^{\sigma_3}
\e^{\frac12V(z)\mathrm{sgn}(\ln|z|)\sigma_3}
\cdot
\\
&
\mathcal{E}(\tau^2\zeta)\e^{-\i\tau\sqrt{\zeta}\sigma_3}
G(\tau^2\zeta)
\cdot
\e^{-\frac{N}{2}(\phi(z)-\pi)\sigma_3}
\delta(z)^{-\sigma_3}
\e^{-\frac12V(z)\mathrm{sgn} (\ln|z|) \sigma_3}
P^{(\infty)}(z)^{-1}
=I+\mathcal{O}(\frac{1}{\tau\sqrt{\zeta}}),
\end{split}
\]
as $\tau^2\zeta\to\infty$.
Here we used that $P^{\infty}$ is bounded on $\partial U_{z}$ uniformly in $t_0.$

\noindent \textbf{Step 5c: Local parametrix at $\ol z_0.$}
For $z$ inside $U_{\ol z_0}$ we define
$P(z) := \sigma \ol{P(\ol z)}\sigma,$ where $\sigma=\begin{bmatrix}0&1\\1&0\end{bmatrix}.$

\noindent\textbf{Step 5d: Local parametrix at $-1.$}
For $z\in U_{-1},$ 
define 
$$P_l(z) = P^{(\infty)}(z) G_l(z),$$
where 
$G_l(z)=\begin{bmatrix}1&0\\- s f(z) & 1\end{bmatrix}$ for $|z|<1$ and 
$G_l(z)=\begin{bmatrix}1& s f(z)\\ 0 & 1\end{bmatrix}$ for $|z|>1,$ with 
\[f(z)=\frac{1}{2\pi\i}\int\limits_{\gamma^c}
\e^{\frac{N}{2}(\phi_-(\xi)-\phi_+(\xi))}\,
\e^{V(\xi)}
\,
\frac{\delta_-(\xi)}{\delta_+(\xi)}\,
\frac{\d\xi}
{\xi-z}.\]
Note that $\phi_--\phi_+$ has a double zero at the point $z=-1,$ and hence large $N$ asymptotics of $f(z)$ can be obtained by classical saddle point methods. Using \eqref{phi_k}, we see that the large parameter is $N\cos\frac{t_0}{2}$ rather than $N,$ and for $z\in\partial D_{-1}$ we have 
$|f(z)|=\mathcal{O}(\frac{1}{\sqrt{N\cos\frac{t_0}{2}}})\e^{N\phi_-(-1)}.$
The matching condition on the circle $|z+1|=r\cos\frac{t_0}{2}$ is
\[
\begin{split}
&P_l(z)P^{(\infty)}(z)^{-1}
=
P^{(\infty)}(z)G(z)
P^{(\infty)}(z)^{-1} = I+\mathcal{O}\(\frac{1}{\sqrt{N\cos\frac{t_0}{2}}}\, s\, (\tan\frac{t_0}{4})^{-2N}\),
\end{split}
\]
as $N\cos\frac{t_0}{2}\to\infty$. 
Here we used that $P^{\infty}$ is bounded on $\partial U_{-1}$ uniformly in $t_0.$

\begin{figure}[ht!]
\begin{center}
\begin{tikzpicture}
\draw[thick, blue, 
decoration={markings, mark = at position 0.5 with {\arrow{<}}}, 
postaction={decorate}](-3,0) to (0,0);
\draw[thick, blue,
decoration={markings, mark = at position 0.6 with {\arrow{>}}}, 
postaction={decorate}](-3,2) to (0,0);
\draw[thick, blue,
decoration={markings, mark = at position 0.6 with {\arrow{>}}}, 
postaction={decorate}](-3,-2) to (0,0);
\node at (-2.8,-0.5) {$\begin{bmatrix}0&-1\\1&0\end{bmatrix}$};
\node at (-1,1.5) {$\begin{bmatrix}1&0\\1&1\end{bmatrix}$};
\node at (-0.8,-1.5) {$\begin{bmatrix}1&-1\\0&1\end{bmatrix}$};
\node at (-0,-0.3){$0$};
\node at (-1.7,-0.15){\color{blue}$+$}; \node at (-1.7, +0.15){\color{blue}$-$};
\node at (-0.9,0.8){\color{blue}$+$}; \node at (-1.1, +0.55){\color{blue}$-$};
\node at (-0.9, -0.8){\color{blue}$-$}; \node at (-1.1, -0.55){\color{blue}$+$};
\end{tikzpicture}
\qquad\qquad\qquad\qquad
\begin{tikzpicture}
\draw[thick, blue,
decoration={markings, mark = at position 0.5 with {\arrow{<}}}, 
postaction={decorate}](-3,0) to (0,0);
\draw[thick, blue,
decoration={markings, mark = at position 0.6 with {\arrow{>}}}, 
postaction={decorate}](-3,2) to (0,0);
\draw[thick, blue,
decoration={markings, mark = at position 0.6 with {\arrow{>}}}, 
postaction={decorate}](-3,-2) to (0,0);
\node at (-2.8,-0.5) {$\begin{bmatrix}s&-1\\1&0\end{bmatrix}$};
\node at (-1,1.5) {$\begin{bmatrix}1&0\\1&1\end{bmatrix}$};
\node at (-0.8,-1.5) {$\begin{bmatrix}1&-1\\0&1\end{bmatrix}$};
\node at (-0,-0.3){$0$};
\node at (-1.7,-0.15){\color{blue}$+$}; \node at (-1.7, +0.15){\color{blue}$-$};
\node at (-0.9,0.8){\color{blue}$+$}; \node at (-1.1, +0.55){\color{blue}$-$};
\node at (-0.9, -0.8){\color{blue}$-$}; \node at (-1.1, -0.55){\color{blue}$+$};
\end{tikzpicture}
\end{center}
\caption{Jumps for the functions $\Psi(\zeta)$ (on the left) and $\widehat\Psi(\zeta)$ ( on the right).}
\label{FigBessel}
\end{figure}
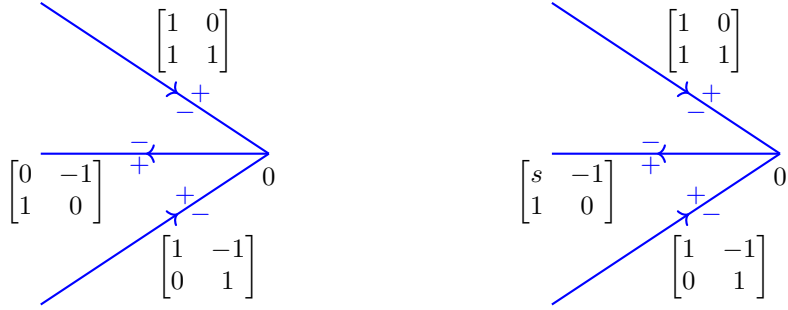

\noindent
\textbf{Step 6.}
Define the error function $R$ by the formula
\[S(z) = R(z) P(z), \]
where as before, $P$ means $P_u, P_d, P_l$ in the relevant disks, and $P$ means $P^\infty$ elsewhere.
The jump conditions for $R$ on the disks $U_{z_0}, U_{\ol z_0}, U_{-1}$ allow to conclude that
$R(z)=I+\mathcal{O}((N\cos\frac{t_0}{2})^{-1}+(N\cos\frac{t_0}{2})^{-1/2}\, s (\tan\frac{t_0}{4})^{-2N})$ uniformly in $z,$ as $N\cos\frac{t_0}{2}\to\infty$, under the conditions of Theorem \ref{theorem:asymptoticsgap}	
(note that this is consistent with the results of the RH analysis from \cite{Charlier-Claeys2}).
Tracing back the chain of transformations from $R$ to $Y,$
we find that (as $N\cos\frac{t_0}{2}\to\infty$)
\begin{multline*}
Y_{11}(z)
=\left(\frac12(\kappa(z)+\kappa(z)^{-1})R_{11}(z)- \frac{i}{2}(\kappa(z)-\kappa(z)^{-1})R_{12}(z)\right.\\\left. -\frac{\frac12(\kappa(z)+\kappa(z)^{-1})R_{12}(z)+ \frac{i}{2}(\kappa(z)-\kappa(z)^{-1})R_{11}(z)}
{\delta^2(z)e^{V(z)}e^{N(\phi(z)-\pi\i)}}\right)
e^{\frac{N}{2}(\phi(z)+\log z-\ell)}\frac{\delta(z)}{\delta(\infty)},
\end{multline*} for $
z\in\Omega_{\rm out},$ 
and that
\[
\begin{split}
&
Y_{11}(z)
=\(\frac12(\kappa(z)+\kappa(z)^{-1})R_{11}(z)- i \frac12(\kappa(z)-\kappa(z)^{-1})R_{12}(z)\)\,e^{\frac{N}{2}(\phi(z)+\log z-\ell)}\frac{\delta(z)}{\delta(\infty)},
\end{split}
\]
for $z\in\left\{z:|z|>1\right\}\backslash\Omega_{\rm out}.$
From here, using $\delta_-(1)=\e^{-\frac12V(1)}, \phi(1)=0,$ we obtain
\[
Y_{11}(-1)
=
\(
\cos\frac{t_0}{4}R_{11}(-1-0) - \sin\frac{t_0}{4}R_{12}(-1-0)\)
\(1+\cos\frac{t_0}{2}\)^{N}(-1)^N\,\frac{\delta_-(-1)}{\delta(\infty)},
\]
\[
\begin{split}
Y_{11}(1)
=&
\(\(\cos\frac{\pi-t_0}{4}+(-1)^N \sin\frac{\pi-t_0}{4}\)R_{11}(1)+
\(\sin\frac{\pi-t_0}{4}-(-1)^N\cos\frac{\pi-t_0}{4}\)
R_{12}(1)\)		
\\
&
\cdot\(\sin\frac{t_0}{2}\)^N\frac{e^{-V(1)/2}}{\delta(\infty)}.
\end{split}\]
Subtituting the asymptotics $R(z)=I+o(1)$ for $R$, we obtain the result.
\end{proof}

\subsection{Proof of Theorem \ref{theorem:asymptoticsgap}}
	
From Proposition \ref{thrm_1b}, we obtain 
\[
\Phi_N(1)\Phi_N(-1)=(-1)^NC_N^2(1+o(1)),
\qquad
\frac{\Phi_N(1)}{\Phi_N(-1)}=(-1)^N\widetilde C_N^2(1+o(1)),
\]
as $N\to\infty$,
where 
\[
\begin{split}
&C_N^2=
\sqrt{2}\cos\frac{t_0}{4}
\cos\frac{t_0-\pi+(-1)^N\pi}{4}
\(\sin\frac{t_0}{2}\)^N\(1+\cos\frac{t_0}{2}\)^N
e^{-\frac{1}{2}V(1)}
\frac{\delta_-(-1)}{\delta(\infty)^2},
\\
&
\widetilde C_N^2=\sqrt{2}
\frac{\cos\frac{t_0-\pi+(-1)^N\pi}{4}}{\cos\frac{t_0}{4}}
\(\frac{\sin\frac{t_0}{2}}{1+\cos\frac{t_0}{2}}\)^N
\frac{e^{-\frac12V(1)}}{\delta_-(-1)}.
\end{split}
\]
\noindent	Substituting this in Proposition \ref{theorem:identityaverages}, we obtain \eqref{eq:asymptoticsgaporthogonal}.

\section{Gap probabilities and global rigidity}\label{section:rigidity}
	
\subsection{Proof of Corollary \ref{cor:gap}} \label{section:proofcor}
	The goal is to apply Theorem \ref{theorem:asymptoticsgap} to compute the averages in \eqref{generatingorth}, but this requires certain adaptations. One needs to make the change of variables $\theta_k\mapsto\pi-\theta_k$ for $k=1,...,n$ in the averages \eqref{generatingorth}, which given \eqref{eq:jpdforthogonal} yields
	\begin{equation*}
		\begin{aligned}
			E^{+}_{2n}(t_0;0)&=\mathbb E^{(0,+)}_n[f],\\
			E^{-}_{2n+2}(t_0;0)&=\mathbb E^{(2,-)}_n[f],\\
			E^{\pm}_{2n+1}(t_0;0)&=\mathbb E^{(1,\mp)}_n[f],\\
		\end{aligned}
	\end{equation*}
	where $f$ is related to $g$ in \eqref{def:EG} with $V=0$, $s=0$ and with the change of parameter $t_0\mapsto\pi-t_0$. One may therefore compute the right-hand side of the above equalities using Theorem \ref{theorem:asymptoticsgap}, and this yields
	\begin{equation*}
		\begin{aligned}
			E^{+}_{2n}(t_0;0)&=2^\frac 1{24}e^{\frac 32\zeta'(-1)}2^\frac 14\left(\cos\frac{t_0}2\right)^{-n}\left(1+\sin\frac{t_0}2\right)^{-\frac{2n-1}2}\frac{\left(\cos\frac{t_0}2\right)^{2n^2}}{(2n\sin\frac{t_0}2)^\frac 18}(1+o(1)),\\
			E^{-}_{2n+2}(t_0;0)&=2^\frac 1{24}e^{\frac 32\zeta'(-1)}2^{-\frac 14}\left(\cos\frac{t_0}2\right)^{n}\left(1+\sin\frac{t_0}2\right)^{\frac{2n+1}2}\frac{\left(\cos\frac{t_0}2\right)^{2n^2}}{(2n\sin\frac{t_0}2)^\frac 18}(1+o(1)),\\
			E^{\pm}_{2n+1}(t_0;0)&=2^\frac 1{24}e^{\frac 32\zeta'(-1)}\left[\frac{(1+\sin\frac{t_0}2)^n}{2^\frac 14\left(\cos\frac{t_0}2\right)^n}\right]^{\pm 1}\frac{\left(\cos\frac{t_0}2\right)^{2n^2}}{\left(2n\sin\frac{t_0}2\right)^\frac 18}(1+o(1)),
		\end{aligned}
	\end{equation*}
as $n\to\infty$, and this is equivalent to the desired result. One then applies the interrelation \eqref{interrelations} to obtain the asymptotics for the C$\beta$E ensembles with $\beta=1,4$.
	
	\subsection{Proof of Corollaries \ref{gapuni} and \ref{gapuni2}}\label{section:proofcor2}
	The symbol $f_{t_0,s}$ in \eqref{deffts} is associated to $g_{t_0,s}$ in \eqref{defgts} through equation \eqref{def:g}. One then notices the relation
	\begin{equation*}
		g_{t_0,s}=s^\frac{t_0}\pi g,
	\end{equation*}
	where $g$ is defined by \eqref{def:FHU} with $V=0$, $m=1$, $t_1=t_0$, $\alpha_0=\alpha_1=\alpha_{m+1}=0$ and $\beta_1=\frac{\log s}{2\pi i}$. Applying Theorem \ref{theorem:asymptoticsFH}, we get
	\begin{equation*}
	\begin{aligned}
	E^+_{2n}(t_0;s)&=s^{\frac{nt_0}\pi}C\mathbb E^\mathbb U_{2n}[g]^\frac 12(1+o(1)),\\
	E^-_{2n+2}(t_0;s)&=s^{\frac{nt_0}\pi}C^{-1}\mathbb E^\mathbb U_{2n}[g]^\frac 12(1+o(1)),\\
	E^\pm_{2n+1}(t_0;s)&=s^{\frac{nt_0}\pi}\tilde C^{\pm 1}\mathbb E^\mathbb U_{2n}[g]^\frac 12(1+o(1)),
	\end{aligned}
	\end{equation*}
	where
	\[
			C=e^{\frac{\log s}4}e^{-\frac{t_0\log s}{2\pi}},\qquad
			\tilde C=e^{-\frac{\log s}4}.
		\]
	But now from \cite[Theorem 1.11]{Claeys-Krasovsky}, for $t_0$ fixed or when $t_0\to 0$ and $nt_0\to+\infty$ one knows that
	\begin{equation*}
		\mathbb E^\mathbb U_{2n}[g]=(4n\sin t_0)^\frac{\log^2s}{2\pi^2}\left|G\left(1+\frac{\log s}{2\pi i}\right)\right|^4(1+o(1)),
	\end{equation*}
	from which the result follows. One then applies the interrelation \eqref{interrelations} to obtain the asymptotics in the C$\beta$E ensembles. In a similar fashion, to prove Corollary \ref{gapuni2}, one uses Theorem \ref{theorem:asymptoticsFH-merging}.
	
	\subsection{Proof of Theorem \ref{theorem:globalrigidity}}\label{section:proofrigidity}
	
	Let $n$ be a positive integer and consider the $n$ free eigenangles $\theta_1\leq \ldots \leq \theta_n$ in $\mathbb O_N^{\pm}$.  Define the counting measure $N_{(0,t)}=\sum_{k=1}^n\chi_{(0,t)}(\theta_k)$ as the number of eigenangles in $(0,t)$, for $0<t\leq\pi$. 
	For later convenience, let us also write $\theta_0=0$ and $\theta_{n+1}=\pi$.

	We first use a discretization of the supremum of the counting function to bound the two quantities of interest in Theorem \ref{theorem:globalrigidity}.
	
	\begin{lemma}\label{lemma:1}{In $\mathbb O^{+}_{2n}$, $\mathbb O_{2n+2}^{-}$, and $\mathbb O^{\pm}_{2n+1}$}, we have almost surely
		\begin{equation*}
		\begin{aligned}
		\max_{k=1,...,n}\left|\theta_k-\frac{\pi k}n\right|&\leq\frac\pi n\left(1+\max_{k=1,...,n}\left|N_{(0,\frac{\pi k}n)}-(k-1)\right|\right),\\
		\sup_{t\in(0,\pi)}\left|N_{(0,t)}-\frac{nt}\pi\right|&\leq 2+\max_{k=1,...,n}\left|N_{(0,\frac{\pi k}n)}-(k-1)\right|.
		\end{aligned}
		\end{equation*}
	\end{lemma}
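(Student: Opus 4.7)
The plan is to let $D := \max_{k=1,\ldots,n} \left|N_{(0, \pi k/n)} - (k-1)\right|$, so that by definition
\begin{equation*}
(k-1) - D \leq N_{(0, \pi k/n)} \leq (k-1) + D \qquad \text{for all } k \in \{1,\ldots,n\}.
\end{equation*}
The proofs of both inequalities then reduce to the duality between the ordered eigenangles and the counting function: almost surely the $\theta_j$ are pairwise distinct and distinct from the sampling points $\pi j/n$, and the equivalence $N_{(0,t)} \geq j \iff \theta_j < t$ holds for $j \in \{1,\ldots,n\}$ and $t \in (0,\pi)$.

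For the first inequality, I fix $k \in \{1,\ldots,n\}$ and introduce the unique integer $\ell \in \{1,\ldots,n\}$ with $\pi(\ell-1)/n < \theta_k < \pi \ell/n$. From $\theta_k < \pi \ell/n$ we obtain $N_{(0, \pi \ell/n)} \geq k$, which combined with the upper bound $N_{(0, \pi \ell/n)} \leq (\ell-1) + D$ yields $\ell \geq k - D + 1$. From $\theta_k > \pi(\ell-1)/n$ we obtain $N_{(0, \pi(\ell-1)/n)} \leq k - 1$ (with the convention $N_{(0,0)} = 0$ if $\ell = 1$), which combined with the lower bound $N_{(0, \pi(\ell-1)/n)} \geq (\ell-2) - D$ yields $\ell \leq k + D + 1$. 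Hence $|k - \ell| \leq D$, and since $\theta_k$ and $\pi \ell/n$ lie in a common interval of length $\pi/n$, we conclude $|\theta_k - \pi k/n| \leq (\pi/n)(1 + |k - \ell|) \leq (\pi/n)(1 + D)$.

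For the second inequality, I use the monotonicity of $t \mapsto N_{(0,t)}$. Given $t \in (0,\pi)$, pick $k \in \{1,\ldots,n\}$ with $\pi(k-1)/n \leq t \leq \pi k/n$. Monotonicity and the defining bound on $D$ yield
\begin{equation*}
(k-2) - D \leq N_{(0, \pi(k-1)/n)} \leq N_{(0, t)} \leq N_{(0, \pi k/n)} \leq (k-1) + D,
\end{equation*}
with the convention $N_{(0,0)} = 0$ when $k = 1$. Combining this with the elementary inequality $k-1 \leq nt/\pi \leq k$ gives $|N_{(0,t)} - nt/\pi| \leq 2 + D$, which is the claimed bound.

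The entire argument is elementary; the only (mild) obstacle is bookkeeping, namely the careful tracking of index shifts between $N_{(0,\cdot)}$ evaluated at $\pi \ell/n$ and $\pi(\ell-1)/n$ and the ordered eigenangles $\theta_k$, together with the handling of the boundary cases $\ell = 1$ and $k = 1$ via the almost sure properties of the eigenangles.
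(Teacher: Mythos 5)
Your strategy --- discretize at the grid points $\pi k/n$ and combine the monotonicity of $t\mapsto N_{(0,t)}$ with the correspondence between ordered eigenangles and the counting function --- is the same as the paper's, and your treatment of the second inequality is correct and arguably slightly cleaner, since it works directly from $D:=\max_k|N_{(0,\pi k/n)}-(k-1)|$ rather than passing through the first inequality as the paper does. However, there is a bookkeeping error in your proof of the first inequality: the step ``hence $|k-\ell|\leq D$'' does not follow from the two bounds you derive. From $\ell\geq k-D+1$ and $\ell\leq k+D+1$ you get $\ell-k\in[1-D,\,1+D]$, i.e.\ $|(\ell-1)-k|\leq D$, which is only $|\ell-k|\leq D+1$; the value $\ell=k+D+1$ is genuinely admissible (and already for $D=0$ one has $\ell=k+1\neq k$). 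The subsequent triangle inequality around $\pi\ell/n$ then yields only $|\theta_k-\pi k/n|\leq(\pi/n)(1+|\ell-k|)\leq(\pi/n)(2+D)$, not $(\pi/n)(1+D)$.

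The fix is to anchor at the left endpoint $\pi(\ell-1)/n$ of the interval containing $\theta_k$, which is the endpoint your bounds actually control: since $\pi(\ell-1)/n<\theta_k<\pi\ell/n$ and $|(\ell-1)-k|\leq D$, one has
\[
\left|\theta_k-\frac{\pi k}{n}\right|\leq\left|\theta_k-\frac{\pi(\ell-1)}{n}\right|+\frac{\pi}{n}\left|(\ell-1)-k\right|<\frac{\pi}{n}+\frac{\pi}{n}D=\frac{\pi}{n}(1+D),
\]
which is the claimed bound. With this correction the argument is complete and correct.
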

	
	\begin{proof}
		Since $[0,\pi)=\sqcup_{j=0}^{n-1}[\frac{\pi j}n,\frac{\pi(j+1)}n)$, for each $k=1,...,n$ there exists a unique $j\in\{0,...,n-1\}$ such that $\frac{\pi j}n\leq \theta_k<\frac{\pi(j+1)}n$. Given that $N_{(0,t)}$ is a non-decreasing function of $t$, we find the following estimates,
		\begin{equation*}
		\begin{aligned}
		N_{(0,\frac{\pi j}n)}-(j+1)\leq N_{(0,\theta_k)}-\frac{n\theta_k}\pi&\leq N_{(0,\frac{\pi(j+1)}n)}-j.\\
		\end{aligned}
		\end{equation*}
		Because of the ordering of the eigenangles, $N_{(0,\theta_k)}=k-1$, so that 		
		\begin{equation*}
		\begin{aligned}
		\left(N_{(0,\frac{\pi j}n)}-(j-1)\right)-1\leq \frac n\pi\left(\frac{\pi k}n-\theta_k\right)&\leq \left(N_{(0,\frac{\pi(j+1)}n)}-j\right)+1,\\
		\end{aligned}
		\end{equation*}
		and it then suffices to take the maximum or minimum over $k$ and $j$ to obtain the first estimate. Using a similar partitioning argument, one has
		\begin{equation*}
		\sup_{t\in(0,\pi)}\left|N_{(0,t)}-\frac{nt}\pi\right|=\max_{k=0:n}\sup_{t\in(\theta_k,\theta_{k+1}]}\left|N_{(0,t)}-\frac{nt}\pi\right|.
		\end{equation*}
		Now as a function of $t$, $N_{(0,t)}$ is left-continuous, has a jump of size $1$ at each $\theta_k$, is constant and equals $k$ on $(\theta_k,\theta_{k+1}]$, therefore
		\begin{equation*}
		\begin{aligned}
		\sup_{t\in(\theta_k,\theta_{k+1}]}N_{(0,t)}-\frac{nt}\pi&=k-\frac{n\theta_k}\pi=\frac n\pi\left(\frac{\pi k}n-\theta_k\right),\\
		\inf_{t\in(\theta_k,\theta_{k+1}]}N_{(0,t)}-\frac{nt}\pi&=k-\frac{n\theta_{k+1}}\pi=\frac n\pi\left(\frac{\pi(k+1)}n-\theta_{k+1}\right)-1.
		\end{aligned}
		\end{equation*}
		This implies the upper bound
		\begin{equation*}
		\sup_{t\in(0,\pi)}\left|N_{(0,t)}-\frac{nt}\pi\right|\leq 1+\frac n\pi\max_{k=1,...,n}\left|\theta_k-\frac{\pi k}n\right|,
		\end{equation*}
		and it then suffices to use the previous estimate to conclude.
	\end{proof}
	
	\begin{lemma}\label{lemma:2}{In $\mathbb O^{+}_{2n}$, $\mathbb O_{2n+2}^{-}$, and $\mathbb O^{\pm}_{2n+1}$}, for any $\alpha>1,\gamma>0$ there exists $C_\gamma>0$ such that
		\begin{equation*}
		\begin{aligned}
		\mathbb P\left(\max_{k=1,...,n}\left|N_{(0,\frac{\pi k}n)}-(k-1)\right|>\alpha\right)\leq C_\gamma e^{-\gamma \alpha}n^{\frac{\gamma^2}{4\pi ^2}+1}.
		\end{aligned}
		\end{equation*}
	\end{lemma}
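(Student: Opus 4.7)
The plan is to control the tail probabilities via exponential moments obtained from Corollary \ref{gapuni2}, and then apply a union bound over the $n$ endpoints.

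First, I observe that the identity \eqref{generatingorth} shows
\[
E_n^{(j,\pm)}(t_0;s) = \mathbb{E}\bigl[s^{N_{(0,t_0)}}\bigr],
\]
so $E_n^{(j,\pm)}(t_0; e^{\pm \gamma})$ is precisely the exponential moment $\mathbb{E}[e^{\pm \gamma N_{(0,t_0)}}]$. Applying Corollary \ref{gapuni2} with $s = e^{\pm \gamma}$ (which lies in a compact subset of $(0,\infty)$ for fixed $\gamma>0$), I obtain uniformly in $t_0 \in [0,\pi]$
\[
\mathbb{E}\bigl[e^{\pm\gamma N_{(0,t_0)}}\bigr] = (n\sin t_0 + 1)^{\gamma^2/(4\pi^2)}\,e^{\pm \gamma n t_0/\pi}\,e^{\mathcal{O}(1)},
\]
so in particular, specialising to $t_0 = \pi k/n$ for $k=1,\ldots,n$ and using the trivial bound $n\sin(\pi k/n)+1\leq n+1$, there exists a constant $K_\gamma>0$ (independent of $k$ and $n$) with
\[
\mathbb{E}\bigl[e^{\pm \gamma (N_{(0,\pi k/n)} - k)}\bigr] \leq K_\gamma\, n^{\gamma^2/(4\pi^2)}.
\]

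Next, I apply the exponential Markov inequality to both tails. For the upper tail,
\[
\mathbb{P}\bigl(N_{(0,\pi k/n)} - (k-1) > \alpha\bigr) \leq e^{-\gamma(\alpha-1)}\,\mathbb{E}\bigl[e^{\gamma(N_{(0,\pi k/n)} - k)}\bigr] \leq K_\gamma'\, e^{-\gamma\alpha}\,n^{\gamma^2/(4\pi^2)},
\]
and analogously for the lower tail using $s = e^{-\gamma}$. Combining these and applying a union bound over $k=1,\ldots,n$ gives
\[
\mathbb{P}\Bigl(\max_{k=1,\ldots,n}\bigl|N_{(0,\pi k/n)} - (k-1)\bigr| > \alpha\Bigr) \leq 2 K_\gamma'\, n\, e^{-\gamma\alpha}\,n^{\gamma^2/(4\pi^2)} = C_\gamma\, e^{-\gamma\alpha}\, n^{\gamma^2/(4\pi^2) + 1},
\]
as claimed.

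The only delicate point in the argument is the uniformity of the exponential moment bound in $t_0$, and in particular the fact that the estimate from Corollary \ref{gapuni2} holds all the way up to $t_0 = \pi$ (which corresponds to $k=n$); the uniform statement in that corollary, resting on Theorem \ref{theorem:asymptoticsFH-merging}, is precisely what makes this step work. All remaining manipulations are Markov's inequality and a union bound, which are routine.
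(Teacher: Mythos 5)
Your proof is correct and follows essentially the same route as the paper: exponential Markov/Chernoff on each tail with parameter $\gamma$, a uniform bound on the exponential moment of $N_{(0,t_0)}$ coming from Corollary \ref{gapuni2}, and a union bound over $k$. The only cosmetic differences are that the paper retains the factor $(\sin\frac{\pi k}{n})^{\gamma^2/4\pi^2}$ in the sum and bounds $\sum_k(\sin\frac{\pi k}n)^{\gamma^2/4\pi^2}\sim n\int_0^1(\sin\pi t)^{\gamma^2/4\pi^2}\,dt$, whereas you use the cruder $n\sin t_0+1\le n+1$; both yield the same $n^{\gamma^2/(4\pi^2)+1}$.
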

	
	\begin{proof}
		By definition and Boole's inequality one has
		\begin{equation*}
		\begin{aligned}
		\mathbb P\left(\max_{k=1,...,n}\left|N_{(0,\frac{\pi k}n)}-(k-1)\right|>\alpha\right)&\leq \mathbb P\left(\max_{k=1,...,n}N_{(0,\frac{\pi k}n)}-(k-1)>\alpha\right)\\
		&+\mathbb P\left(\min_{k=1,...,n}N_{(0,\frac{\pi k}n)}-(k-1)<-\alpha\right),
		\end{aligned}
		\end{equation*}
		as well as (the last term of the sum always vanishes)
		\begin{equation*}
		\begin{aligned}
		\mathbb P\left(\max_{k=1,...,n}N_{(0,\frac{\pi k}n)}-(k-1)>\alpha\right)&\leq \sum_{k=1}^{n-1}\mathbb P\left(N_{(0,\frac{\pi k}n)}>k-1+\alpha\right),\\
		\mathbb P\left(\min_{k=1,...,n}N_{(0,\frac{\pi k}n)}-(k-1)<-\alpha\right)&\leq \sum_{k=1}^{n-1}\mathbb P\left(-N_{(0,\frac{\pi k}n)}>-k+1+\alpha\right).
		\end{aligned}
		\end{equation*}
		Applying Chernoff's bound for $\gamma>0$ yields for any $t\in(0,\pi)$
		\begin{equation*}
		\begin{aligned}
		\mathbb P\left(N_{(0,t)}>\frac{nt}\pi+\alpha\right)&\leq e^{-\gamma\alpha}e^{-\frac{\gamma nt}\pi}\mathbb E^{(j,\pm)}_n\left[e^{\gamma\chi_{[-t,t]}(\arg z)}\right],\\
		\mathbb P\left(-N_{(0,t)}>-\frac{nt}\pi+\alpha\right)&\leq e^{-\gamma\alpha}e^{\frac{\gamma nt}\pi}\mathbb E^{(j,\pm)}_n\left[e^{-\gamma\chi_{[-t,t]}(\arg z)}\right],
		\end{aligned}
		\end{equation*}
		Therefore, for any $\delta\in\mathbb R\setminus\{0\}$, $t\in[\frac 1n,\pi-\frac 1n]$, one may write, using Corollary \ref{gapuni}, 
		for some $C_\delta>0$
		\begin{equation*}
		\mathbb E^{(j,\pm)}_n\left[e^{\delta\chi_{[-t,t]}(\arg z)}\right]\leq C_{\delta} e^{\frac{\delta nt}{\pi}}(n\sin t+1)^\frac{\delta^2}{4\pi^2}.
		\end{equation*}
		This leads to the following estimate for some $C_\gamma>0$,
		\begin{equation*}
		\mathbb P\left(\max_{k=1,...,n}\left|N_{(0,\frac{\pi k}n)}-(k-1)\right|>\alpha\right)\leq C_\gamma e^{-\gamma\alpha}n^\frac{\gamma^2}{4\pi^2}\sum_{k=1}^{n-1}\left(\sin \frac{\pi k}n\right)^\frac{\gamma^2}{4\pi ^2},
		\end{equation*}
		and since as $n\to+\infty$
		\begin{equation*}
		\sum_{k=1}^{n-1}\left(\sin \frac{\pi k}n\right)^\frac{\gamma^2}{4\pi ^2} \sim n\int_0^1\left(\sin \pi t\right)^\frac{\gamma^2}{4\pi ^2}dt,
		\end{equation*}
		this ends the proof.
	\end{proof}
	$ $\\
	In order to prove Theorem \ref{theorem:globalrigidity}, we use on one hand
	 Lemma \ref{lemma:1}, which implies
	\begin{multline*}\mathbb P\left(\max_{k=1,...,n}\left|\theta_k-\frac{\pi k}n\right|>(1+\epsilon)\frac{\log n}n\right)+\mathbb P\left(\sup_{t\in(0,\pi)}\left|N_{(0,t)}-\frac{nt}\pi\right|>\left(\frac 1\pi+\epsilon\right)\frac{\log n}\pi\right)\\
	\leq 2\mathbb P\left(\max_{k=1,...,n}\left|N_{(0,\frac{\pi k}n)}-(k-1)\right|>(1+\epsilon)\frac{\log n}\pi-2\right),
	\end{multline*}
	while on the other it follows from Lemma \ref{lemma:2} that for any $\gamma>0$ there exists $C_\gamma>0$ such that
	\begin{equation*}
	\mathbb P\left(\max_{k=1,...,n}\left|N_{(0,\frac{\pi k}n)}-(k-1)\right|>(1+\epsilon)\frac{\log n}\pi-2\right)\leq C_{\gamma} n^{\frac{\gamma^2}{4\pi ^2}-(1+\epsilon)\frac\gamma\pi+1}.
	\end{equation*}
	Since the minimum of the polynomial $\frac{\gamma^2}{4\pi ^2}-(1+\epsilon)\frac\gamma\pi+1$ is attained at $\frac\gamma\pi =2(1+\epsilon)$ and is equal to $1-(1+\epsilon)^2<0$, the desired result follows by letting $n\to+\infty$.

\paragraph{Acknowledgements.}
This work was supported by 
 the Fonds de la Recherche Scientifique-FNRS under EOS project O013018F. 
We are grateful to Johannes Forkel and Jon Keating for sharing an early version of their work \cite{ForkelKeating} and for useful comments on an early version of our manuscript.

\end{document}